\newtheorem{remark}[theorem]{Remark}
\numberwithin{equation}{section}
\numberwithin{equation}{section}
\title{Characterization of the equivalent acoustic scattering for a cluster of an extremely large number of small holes} 
\author{Durga Prasad Challa
\thanks{RICAM, Austrian Academy of Sciences,
Altenbergerstrasse 69, A-4040, Linz, Austria.
(Email: durga.challa@oeaw.ac.at)
} 
 \and Andrea Mantile\thanks{ LMR, EA4535
URCA, F\'{e}d\'{e}ration de Recherche ARC Math\'{e}matiques, FR 3399 CNRS. (andrea.mantile@univ-reims.fr)}
\and  Mourad Sini
\thanks{RICAM, Austrian Academy of Sciences,
Altenbergerstrasse 69, A-4040, Linz, Austria.
(Email:mourad.sini@oeaw.ac.at)
}
}
\begin{document}
\graphicspath{{Figures-eps/}}
 \maketitle
\begin{abstract}
 We deal with the time-harmonic acoustic waves scattered by a large number of small holes arbitrary distributed in a bounded part of  a homogeneous background. 
   We assume no periodicity in the distribution of these holes.  Using the asymptotic expansions of the scattered field by such a cluster of holes, we show that as their number $M$ grows following the law 
   $M:=M(a):=O(a^{-s}), \; a\rightarrow 0$, the collection of these holes has one of the following behaviors:
   \begin{enumerate}
   \item if $s<1$, then the scattered fields tend to vanish as $a$ tends to zero, i.e. the cluster is a soft one. 
 
   \item if $s=1$, then the cluster behaves as an equivalent medium modeled by a refraction index, supported in a given bounded domain $\Omega$, which 
   is described by certain geometry properties of the holes and their local distribution.  The cluster is a moderate (or intermediate) one.
   
   \item if $s>1$, and satisfies some additional conditions, then the cluster behaves as a totally reflecting extended body, modeled by a bounded and smooth domain $\Omega$, i.e. the incident waves are totally reflected by the surface 
   of this extended body. The cluster is a rigid one.
   \end{enumerate}
   
   Explicit errors estimates between the scattered fields due to the cluster of small holes and the ones 
   due to equivalent media (as the equivalent index of refraction) or the extended body are provided. 
   The first argument is the derivation of the point interaction approximation of the scattered fields generated by clusters of holes with a number 
   $M$ grows following the law $M:=M(a):=O(a^{-s}), \; a\rightarrow 0$ allowing to consider $s>1$. The justification of the equivalent media in the cases $s\leq 1$, is based on a precise analysis of the dominant Foldy-Lax field appearing in the approximation.
   To justify the equivalent media in the case $s>1$, we reduce the study to the estimate of the scattered field generated by potential barriers supported in $\Omega$ and with an amplitude growing at the rate of $a^{1-s}$, 
   as $a<<1$. Let $h:=a^{\frac{s-1}{2}}$ be the semiclasscal parameter, $h<<1$ as $a<<1$. As a key estimate of this analysis, we showed that the $H^{t}(\partial \Omega)$-norm of the total field decays 
   at the rate of $h^{\frac{1}{2}-t}$ which is optimal. To derive such estimates, we propose a new method based on the spectral decomposition of the related Newtonian potential operator.  
   As a by-product, with this method, we can derive the semiclassical resolvent estimates, i.e for $-h^2\Delta -\kappa^2 +V$, at low frequencies, i.e. $\kappa =O(h),\; h<<1$, for positive, compactly supported but not 
   necessarily smooth potentials $V$. 

\end{abstract}

\pagestyle{myheadings}
 \thispagestyle{plain}
 \markboth{D. P. Challa, A. Mantile and M. Sini }{The equivalent acoustic scattering by an extremely large number of holes}

\section{Introduction and statement of the results}\label{Introduction-smallac-sdlp}
\subsection{The acoustic scattered fields generated by a cluster of small holes}

We set $D_m:=\epsilon B_m+z_m$ to be the small bodies characterized by the parameter 
$\epsilon>0$ and the locations $z_m\in \mathbb{R}^3$, $m=1,\dots,M$, where $B_1, B_2,\dots, B_M$ are $M$ open, bounded and simply connected sets in $\mathbb{R}^3$ with Lipschitz boundaries containing the origin.
We assume that the Lipschitz constants of $B_j$, $j=1,..., M$ are uniformly bounded.  
We denote by  $U^{s}$ the acoustic field scattered by the $M$ small and rigid bodies $D_m\subset \mathbb{R}^{3}$ due to 
the incident plane wave $U^{i}(x,\theta):=e^{ikx\cdot\theta}$, 
with the incident direction $\theta \in \mathbb{S}^2$, with $\mathbb{S}^2$ being the unit sphere. Hence the total field $U^{t}:=U^{i}+U^{s}$ satisfies the following exterior Dirichlet problem of the acoustic waves
\begin{equation}
(\Delta + \kappa^{2})U^{t}=0 \mbox{ in }\mathbb{R}^{3}\backslash \left(\mathop{\cup}_{m=1}^M \bar{D}_m\right),\label{acimpoenetrable}
\end{equation}
\begin{equation}
U^{t}|_{\partial D_m}=0,\, 1\leq m \leq M, \label{acgoverningsupport}  
\end{equation}
\begin{equation}
\frac{\partial U^{s}}{\partial |x|}-i\kappa U^{s}=o\left(\frac{1}{|x|}\right), |x|\rightarrow\infty, ~(\text{S.R.C}) \label{radiationc}
\end{equation}
where  $\kappa>0$ is the wave number, $\kappa=2\pi\slash \lambda$, $\lambda$ is the wave length and S.R.C stands for the Sommerfield radiation condition.  The scattering problem 
(\ref{acimpoenetrable}-\ref{radiationc}) is well posed in appropriate spaces, see \cite{C-K:1998, Mclean:2000} for instance, and the scattered field $U^s(x, \theta)$ has the following asymptotic expansion:
\begin{equation}\label{far-field}
 U^s(x, \theta)=\frac{e^{i \kappa |x|}}{|x|}U^{\infty}(\hat{x}, \theta) + O(|x|^{-2}), \quad |x|
\rightarrow \infty,
\end{equation}
with $\hat{x}:=\frac{x}{\vert x\vert}$, where the function
$U^{\infty}(\hat{x}, \theta)$ for $(\hat{x}, \theta)\in \mathbb{S}^{2} \times \mathbb{S}^{2}$  is called the far-field pattern.
We recall that the fundamental solution, $\Phi_\kappa(x,y)$, of the Helmholtz equation in $\mathbb{R}^3$
with the fixed wave number $\kappa$ is given by
\begin{eqnarray}\label{definition-ac-small-fundamentalkappa}
 \Phi_\kappa(x,y)&:=&\frac{e^{i\kappa|x-y|}}{4\pi|x-y|},\quad \text{for all } x,y\in\mathbb{R}^3.
\end{eqnarray}
\begin{definition} 
\label{Def1}
We define  $ a:=\max\limits_{1\leq m\leq M } diam (D_m) ~~\big[=\epsilon \max\limits_{1\leq m\leq M } diam (B_m)\big]$,
$d:=\min\limits_{\substack{m\neq j\\1\leq m,j\leq M }} d_{mj},
$ where $d_{mj}:=dist(D_m, D_j)$. We assume that $
0\,<\,d\,\leq\,d_{\max}$,
and $d_{\max}$ is given. Finally, we assume that we have $\kappa_{\max}$ as the upper bound of the used wave numbers, i.e. $\kappa\in[0,\,\kappa_{\max}]$.
\end{definition}
\bigskip

Let us now focus on the case where we have a large number of obstacles of the form  $M:=M(a)=O(a^{-s})$ with  $s>0$
and the minimum distance $d:=d(a):=O(a^t)$ with  $t>0$. In \cite{C-S:2014}, we have shown using the single layer representation of the solution that if $s\leq 1-2t$, 
then we have the asymptotic expansion the far-field pattern $U^\infty(\hat{x},\theta)$. Using also the single layer representation of the solution, this condition is weakened to (\ref{conditions-1pr}) 
in the following propostion. In particular (\ref{conditions-1pr}) allows to take $s> 1$ and study the scattering by an extremely large number of holes which is the main topic of this work.

\begin{proposition}\label{Prpopostion-Expansion-Far-fields} Assume that $s$ and $t$ satisfy the following conditions
 \begin{eqnarray} \label{conditions-1pr}
   0\leq{t}\leq{1}&\mbox{ and } 0\leq{s}\leq\min\left\{2,3t,{2-t}, {3-3t},\frac{1}{2}(3-t)\right\}
\end{eqnarray} 
then
\begin{eqnarray}\label{x oustdie1 D_m farmain-recent**}
U^\infty(\hat{x},\theta)
&=&\sum_{m=1}^{M}e^{-i\kappa\hat{x}\cdot z_{m}}{Q}_m+O\left({a^{2-s}+a^{3-2s-t}+ a^{4-3s-t}}+{a^{3-s-2t}}+{a^{4-2s-3t}}\right)
\end{eqnarray}
uniformly in $\hat{x}$ and $\theta$ in $\mathbb{S}^2$. The coefficients $Q_m$, $m=1,..., M,$ are the solutions of the following linear algebraic system
\begin{eqnarray}\label{fracqcfracmain}
 Q_m +\sum_{\substack{j=1 \\ j\neq m}}^{M}C_m \Phi_\kappa(z_m,z_j)Q_j&=&-C_mU^{i}(z_m, \theta),~~
\end{eqnarray}
for $ m=1,..., M,$ with $C_m:=\int_{\partial D_m}\sigma_m(s)ds$ and $\sigma_{m}$ is 
the solution of the integral equation of the first kind
\begin{eqnarray}\label{barqcimsurfacefrm1main}
\int_{\partial D_m}\frac{\sigma_{m} (s)}{4\pi|t-s|}ds&=&1,~ t\in \partial D_m.
\end{eqnarray}
The algebraic system \eqref{fracqcfracmain} is invertible under the conditions:
\begin{eqnarray}\label{invertibilityconditionsmainthm}
 \frac{a}{d}\leq c_1 \text{ and } 
 \min_{j\neq m}\cos(\kappa \vert z_j- z_m\vert)\geq 0,
\end{eqnarray}
where $c_1$ depends only on the Lipschitz character of the obstacles $B_j$, $j=1, ..., M$. 
\end{proposition}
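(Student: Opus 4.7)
The plan is to use the single-layer representation of $U^s$ together with a testing procedure against the capacity density $\sigma_m$ of \eqref{barqcimsurfacefrm1main} to contract each small boundary $\partial D_m$ to its centre $z_m$, producing the Foldy--Lax system \eqref{fracqcfracmain}. The argument follows the strategy of \cite{C-S:2014}, but with sharper error bookkeeping that enlarges the admissible coupling between $s$ and $t$ from $s\leq 1-2t$ to \eqref{conditions-1pr} and, in particular, reaches $s>1$.

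First, the scattered field is represented as $U^s(x)=\sum_{m=1}^{M}\int_{\partial D_m}\Phi_\kappa(x,y)\psi_m(y)\,ds(y)$, and the Dirichlet condition turns into the coupled boundary integral system
\[
\int_{\partial D_m}\Phi_\kappa(x,y)\psi_m(y)\,ds(y)+\sum_{j\neq m}\int_{\partial D_j}\Phi_\kappa(x,y)\psi_j(y)\,ds(y)=-U^i(x,\theta),\qquad x\in\partial D_m.
\]
Setting $Q_m:=\int_{\partial D_m}\psi_m\,ds$, one takes the far field of the representation and Taylor-expands $e^{-i\kappa\hat{x}\cdot y}$ around $z_m$ to obtain $U^\infty(\hat{x},\theta)=\sum_m e^{-i\kappa\hat{x}\cdot z_m}Q_m$ modulo an error controlled by $\kappa a\sum_m\|\psi_m\|_{L^1(\partial D_m)}$. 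Uniform a priori control of $\|\psi_m\|_{L^\infty(\partial D_m)}$, obtained by rescaling $D_m\to B_m$ and invoking the uniform invertibility of the static single-layer operator on the Lipschitz family $\{B_m\}$, feeds into this bound and every subsequent error estimate.

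To extract the algebraic system \eqref{fracqcfracmain}, the boundary identity is tested against $\sigma_m$. By reciprocity and the defining property $\int_{\partial D_m}\sigma_m(x)\Phi_0(x,y)\,ds(x)=1$ for $y\in\partial D_m$, the self-interaction term contributes $Q_m$ plus an $O(\kappa a\,C_m Q_m)$ remainder coming from the low-frequency expansion $\Phi_\kappa-\Phi_0=i\kappa/(4\pi)+O(\kappa^2 a)$; the cross-interaction with $j\neq m$ contributes $C_m\Phi_\kappa(z_m,z_j)Q_j$ modulo errors of order $\bigl(a/d_{mj}^2+\kappa a/d_{mj}\bigr)C_m Q_j$ arising from a two-sided Taylor expansion of $\Phi_\kappa$ around $(z_m,z_j)$; and the right-hand side becomes $-C_m U^i(z_m,\theta)+O(a^2)C_m$ after expansion of $U^i$ at $z_m$. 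Summing the residuals over $m$ with $M=O(a^{-s})$, $C_m=O(a)$, $d_{mj}\geq d=O(a^t)$, together with sharp bounds on $\sum_{j\neq m}d_{mj}^{-k}$ (which exploit that the centres lie in a bounded region), produces exactly the composite error $O(a^{2-s}+a^{3-2s-t}+a^{4-3s-t}+a^{3-s-2t}+a^{4-2s-3t})$ of \eqref{x oustdie1 D_m farmain-recent**}; the hypothesis \eqref{conditions-1pr} is precisely the range in which each of these exponents is non-negative.

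Invertibility of the $M\times M$ matrix $\mathbf{B}=I+\mathbf{K}$, $\mathbf{K}_{mj}=C_m\Phi_\kappa(z_m,z_j)$ for $j\neq m$, is then obtained under \eqref{invertibilityconditionsmainthm} by splitting $\mathbf{K}$ into real and imaginary parts: the assumption $\cos(\kappa|z_m-z_j|)\geq 0$ makes $\mathrm{Re}\,\mathbf{K}$ a matrix with non-negative entries bounded by $Ca/d$, so that $\mathrm{Re}\,\mathbf{B}$ is strictly diagonally dominant as soon as $a/d\leq c_1$ with $c_1$ depending only on the Lipschitz constants of the $B_j$'s, while the imaginary part, bounded by $\kappa C_m/(4\pi)$ entrywise, is easily absorbed. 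The most delicate step is the bookkeeping of the third paragraph: one has to simultaneously carry the next-order terms in the expansion of $\Phi_\kappa(x,y)$ in \emph{both} variables, in $U^i$, and in the low-frequency expansion of the self term, using that the first moments $\int_{\partial D_m}\sigma_m(y)(y-z_m)\,ds(y)$ are of order $a\,C_m$. It is exactly these next-order cancellations that enlarge the admissible range of $(s,t)$ beyond the $s\leq 1-2t$ regime of \cite{C-S:2014}.
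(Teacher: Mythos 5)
Your proposal takes a closely allied but slightly different route than the paper. Both start from a single--layer ansatz and both contract each $\partial D_m$ to $z_m$ via Taylor expansions, but where you test the boundary identity directly against the capacity density $\sigma_m$ to read off the Foldy--Lax system, the paper instead rearranges the coupled integral system as $(\bm{L}+\bm{K})\sigma=-U^{\bm{I}}$, splits off the zero-frequency diagonal $\bm{L}_0$ and inverts it exactly, and then iterates (equations \eqref{cmp-3pr}--\eqref{cmp-4pr6}) to obtain a priori control of $\sigma$ before descending to the charges. You correctly identify the key ingredient that enlarges the admissible $(s,t)$ window beyond $s\leq 1-2t$, namely that the sums $\sum_{j\neq m}d_{mj}^{-k}$ must be estimated by counting the centres in concentric shells inside a bounded region rather than by the crude bound $M/d^k$; this is precisely what the paper's computation \eqref{KDKnrm1} does.

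There are, however, two genuine gaps. The main one is the invertibility of the linear system. You claim that $\mathrm{Re}\,\mathbf{B}=I+\mathrm{Re}\,\mathbf{K}$, with $\mathbf{K}_{mj}=C_m\Phi_\kappa(z_m,z_j)$ off the diagonal, is strictly diagonally dominant once $a/d\leq c_1$. That is false in the regime the proposition is designed to reach: the row sums of $\mathrm{Re}\,\mathbf{K}$ are
\[
\sum_{j\neq m}\frac{C_m\cos(\kappa|z_m-z_j|)}{4\pi|z_m-z_j|}\ \leq\ C\,a\sum_{n=1}^{[a^{-s/3}]}\frac{n^2}{n\,a^{s/3}}\ =\ O\!\left(a^{1-s}\right),
\]
which diverges for $s>1$ even under the cosine condition and $a/d\leq c_1$. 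So diagonal dominance cannot be the mechanism here. The paper instead invokes Lemma~\ref{Mazyawrkthm-effect} (a Maz'ya-type estimate, \cite[Lemma 2.22]{C-S:2014}), where invertibility and the explicit bound \eqref{mazya-fnlinvert-small-ac-2-effect} come from a coercivity of the quadratic form $\langle(\mathbf{C}^{-1}+\mathrm{Re}\,\mathbf{B})v,v\rangle$ exploiting the sign of $\cos(\kappa|z_m-z_j|)$ and the positive-kernel structure of $1/|z_m-z_j|$; this is not a row-by-row argument and does not reduce to Ger\v{s}gorin or diagonal dominance. The second, lesser gap is that ``uniform invertibility of the static single-layer operator on each $B_m$'' only controls the self-interaction; to close the a priori estimate on $\psi_m$ (hence on $Q_m$ and on every error term you introduce) you must still absorb the off-diagonal coupling, which is exactly what the paper's iteration from \eqref{cmp-4pr} to \eqref{cmp-4pr6} is for; your sketch leaves that step implicit.
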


\bigskip

\subsection{The distribution of the small holes}

Let $\Omega$ be a bounded domain, say of unit volume, containing the obstacles $D_m, m=1, ..., M$. 
We shall divide $\Omega$ into $[a^{-s}]$ subdomains $\Omega_m,\; m=1, ..., [a^{-s}]$, \footnote{As an example, taking $a:=N^{-\frac{1}{s}}$, with $N$ an integer and $N>>1$, we have $a<<1$ and $[a^{-s}]=N$.} such that each 
$\Omega_m$ contains $D_m$,i.e. $z_m \in \Omega_m$, and some of the other $D_j$'s. It is natural then to assume that the number of obstacles in $\Omega_m$, 
for $m=1, ..., [a^{-s}]$, to be uniformly bounded in terms of $m$. To describe correctly this number of obstacles, we introduce the function 
 $K: \mathbb{R}^3\rightarrow \mathbb{R}$ as a positive continuous and bounded potential. 
 Let each $\Omega_m$, $m\in \mathbb{N}$, be a cube of volume $a^s\frac{[K(z_m)+1]}{K(z_m)+1}$
 and contains $[K(z_m) +1]$ obstacles, see Fig.\ref{distribution-obstacles}. We set $K_{max}:=\sup_{z_m}(K(z_m) +1)$, hence $M=\sum^{[a^{-s}]}_{j=1}[K(z_m) +1]\leq K_{max}[a^{-s}]=O(a^{-s})$.  
Observe that for the cubes $\Omega_m$'s intersecting with $\partial \Omega$, the sets $\Omega \cap \Omega_m$ will have volumes of the order $a^s$ but the exact volume is not easy to estimate as it depends strongly on the shape of $\Omega$ 
(unless if $\Omega$ has a simple shape itself). We do not put obstacles in those cubes touching $\partial \Omega$.

 As $\Omega$ can have an arbitrary shape, the set of the cubes intersecting $\partial \Omega$ is not empty (unless if $\Omega$ has a simple shape as a cube). 
 Later in our analysis, we will need the estimate of the volume of this set. 
Since each $\Omega_j$ has volume of the order $a^s$, and then its maximum radius is of the order $a^{\frac{1}{3}s}$, then the intersecting surfaces with $\partial \Omega$ has an area of the order $a^{\frac{2}{3}s}$.
As the area of $\partial \Omega$ is of the order one, we conlude that the number of such cubes will not exceed the order $a^{-\frac{2}{3}s}$. Hence the volume of this set will not exceed the order 
$a^{-\frac{2}{3}s}a^{s}=a^{\frac{1}{3}s}$, as $a \rightarrow 0$.

\begin{figure}
\centering
\input{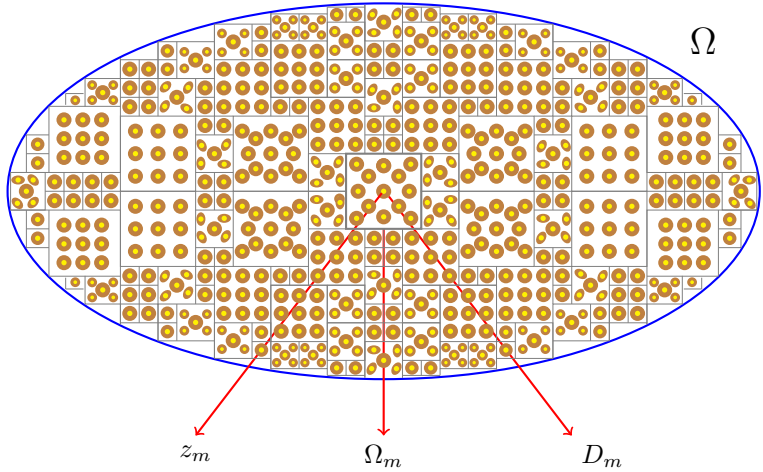}
\caption{A Schematic example on how the obstacles are distributed in $\Omega$.}\label{distribution-obstacles}
\end{figure}


\subsection{The equivalent media}

By applying the estimate (\ref{mazya-fnlinvert-small-ac-3-effect}) of Lemma \ref{Mazyawrkthm-effect} to the system \eqref{fracqcfracmain} and knowing that the capacitances $C_m$ behave as $a$, $a<<1$, 
we have the upper bound
\begin{equation}
 \sum_{m=1}^{M}e^{-i\kappa\hat{x}\cdot z_m}Q_m=O(a^{1-s}).
\end{equation}
 We distinguish the following cases:

\subsubsection{ Case $s<1$} If the number of obstacles is $M:=M(a):=a^{-s}, \; s<1$ and $t$ satisfies (\ref{conditions-1pr})
  , $a\rightarrow 0$, then from (\ref{x oustdie1 D_m farmain-recent**}), we deduce that
\begin{equation}\label{s-smaller-1}
 U^\infty(\hat{x},\theta)\rightarrow 0, \mbox{ as } a \rightarrow 0, \mbox{ uniformly in terms of } \theta \mbox{ and } \hat{x} \mbox{ in } \mathbb{S}^2.
\end{equation}
This means that this collection of obstacles has no effect on the homogeneous medium as $a \rightarrow 0$.

\subsubsection{ Case $s=1$} 

In this case we divide the bounded domain $\Omega$ in the way we explained above. 
In \cite{A-C-K-S-2015}, we proved the following result. Let the small obstacles be distributed in a bounded domain $\Omega$, say of unit volume, with their number $M:=M(a):=O(a^{-1})$ and their minimum distance 
$d:=d(a):=a^{t}$,\; $\frac{1}{3}\leq t \leq \frac{5}{12}$, as $a\rightarrow 0$, as described above. Then
\begin{enumerate}
 \item if the obstacles are distributed arbitrary in $\Omega$, i.e. with different capacitances, then there exists a potential $\bold{C}_0 \in \cap_{p\geq 1}L^p(\mathbb{R}^{3})$ with support in $\Omega$ such that
 \begin{equation}\label{B}
  \lim_{a\rightarrow 0}U^\infty(\hat{x},\theta)= U_{0}^\infty(\hat{x},\theta) \mbox{ uniformly in terms of } \theta \mbox{ and } \hat{x} \mbox{ in } \mathbb{S}^2
 \end{equation}
where $U_{0}^\infty(\hat{x},\theta)$ is the farfield corresponding to the scattering problem

\begin{equation}
(\Delta + \kappa^{2}-(K+1)\bold{C}_0)U_{0}^{t}=0 \mbox{ in }\mathbb{R}^{3},\label{B-1}
\end{equation}
\begin{equation}
U_{0}^{t}=U_{0}^s +e^{i\kappa x\cdot \theta}  
\end{equation}
\begin{equation}
\frac{\partial U_{0}^{s}}{\partial |x|}-i\kappa U_{0}^{s}=o\left(\frac{1}{|x|}\right), |x|\rightarrow\infty. \label{radiationc-B-1}
\end{equation}
 
\item if in addition $K\mid_{\Omega}$ is in $C^{0, \gamma}(\Omega)$, $\gamma \in (0, 1]$ and the obstacles have the same capacitances, then
\begin{equation}\label{C}
  U^\infty(\hat{x},\theta)= U_0^\infty(\hat{x},\theta) +O(a^{\min\{\gamma, \frac{1}{3}-\frac{4}{5}t\}}) \mbox{ uniformly in terms of } \theta \mbox{ and } \hat{x} \mbox{ in } \mathbb{S}^2
 \end{equation}
 where $C_0=C$ in $\Omega$ and $C_0=0$ in $\mathbb{R}^{3} \setminus{\overline \Omega}$.
\end{enumerate}

This result shows the 'equivalent' behaviour between a cluster of, appropriately dense, small holes
and an extended penetrable obstacle modeled by an additive potential. Such an observation goes back at least to the works by Cioranescu and Murat \cite{C-M:1979, C-M:1997} and also
the reference therein. Their analysis, made for the Poisson problem, is based on homogenization via energy methods and, in particular, they assume that the obstacles are distributed periodically. 
In \cite{A-C-K-S-2015}, we have confirmed this result without the periodicity assumption and provided the error of the approximation. In this work, compared to \cite{A-C-K-S-2015}, the error of the approximation $a^{\min\{\gamma, \frac{1}{3}-\frac{4}{5}t\}}$ and the interval where $t$, i.e. $\frac{1}{3}\leq t \leq \frac{5}{12}$, 
are improved to $O\left({ a^{\min\{\frac{1}{3}, 2-3t}}\right),$ and $\frac{1}{3}\leq{t}<\frac{2}{3}$ respectively, at least for the case $K=0$. For instance, for $t=\frac{1}{3}$, the former error is 
of the order $a^{\frac{1}{12}}$ while the latter is of the order $a^{\frac{1}{3}}$.

\subsubsection{Case $s>1$} Here we assume, for simplicity, that the density function $K$ is a trivial function $K=0$ and the holes have the same capacitance, i.e. they have the same shape for instance.
The main contribution of this work is to prove the following result.
 \begin{theorem}\label{C-Th} Let $\Omega$ be bounded domain of class $C^{1,1}$. \footnote{The $C^{1,1}$ regularity is needed in Section \ref{Step-2:Semiclassical-estimates} where semi-classical estimates are derived. 
 This regularity is not needed for the case $s\leq 1$, as it is shown also in \cite{A-C-K-S-2015}.} We divide it in terms of the $\Omega_m$'s as described above.
 In addition, we assume that $\kappa^2$ is not a Dirichlet-Laplacian eigenvalue for $\Omega$. Under the conditions
  \begin{eqnarray}\label{conditions-holes} 
     \frac{s}{3}\leq{t}<{1}&\mbox{ and } 0\leq{s}<\min\left\{(\frac{33}{29})_-,{2-t}, {3-3t},\frac{1}{2}(3-t),\frac{4-t}{3},\frac{4-3t}{2}\right\}
  \end{eqnarray}
 we have the following expansion: 
  \begin{equation}\label{C1}
  U^\infty(\hat{x},\theta)- U_{D}^\infty(\hat{x},\theta)=O\left({ a^{\frac{s-1}{4}}+a^{\frac{(33-29s)_-}{12}}}+{a^{4-3s-t}}+{a^{4-2s-3t}}\right)~~ \mbox{ uniformly in terms of } \theta \mbox{ and } \hat{x} \mbox{ in } \mathbb{S}^2
 \end{equation}
where $U_{D}^\infty(\hat{x},\theta)$ is the farfield corresponding to the Dirichlet scattering problem

\begin{equation}
(\Delta + \kappa^{2})U_{D}^{t}=0 \mbox{ in }\mathbb{R}^{3}\setminus{\overline{\Omega}},\label{C-2}
\end{equation}
\begin{equation}
U_{D}^{t}:=U_{D}^s +e^{i\kappa x\cdot \theta}=0 \mbox{ on } \partial \Omega,  
\end{equation}
\begin{equation}
\frac{\partial U_{D}^{s}}{\partial |x|}-i\kappa U_{D}^{s}=o\left(\frac{1}{|x|}\right), |x|\rightarrow\infty. \label{radiationc-C-1}
\end{equation}
 For $\beta$, a real and positive number, we used the notation $\beta_-$ to describe the property $\alpha < \beta_-$if $\alpha  \leq \beta \; - r $ for some small $r>0$. 
\end{theorem}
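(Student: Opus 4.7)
The plan is to use Proposition \ref{Prpopostion-Expansion-Far-fields} as a starting point, reduce the discrete Foldy-Lax scattering to a continuous Lippmann-Schwinger problem on $\Omega$ with a large potential of amplitude $\sim a^{1-s}$, and then exploit a semiclassical analysis of this penetrable problem to recover the Dirichlet problem in the limit $a\to 0$. With $K\equiv 0$ and equal capacitances $C_m = c_0\, a$, the Foldy-Lax coefficients $Q_m$ satisfy (\ref{fracqcfracmain}) with $M=O(a^{-s})$ and each cell $\Omega_m$ of volume $a^s$ carrying exactly one obstacle. Interpolating $W_m := Q_m/C_m$ to a continuous density $W$ on $\Omega$, one expects
\begin{equation*}
\sum_{j\neq m} \Phi_\kappa(z_m, z_j)\, Q_j \;\approx\; c_0\, a^{1-s} \int_\Omega \Phi_\kappa(z_m, y)\, W(y)\, dy,
\end{equation*}
so that the discrete system converges to the Lippmann-Schwinger equation associated with the penetrable Helmholtz problem
\begin{equation*}
(\Delta + \kappa^2 - \lambda\, \chi_\Omega)\, U^t = 0, \qquad \lambda := c_0\, a^{1-s}.
\end{equation*}
Riemann-sum/quadrature estimates along the lines of the $s=1$ analysis of \cite{A-C-K-S-2015}, but carefully tracking powers of $a$, the lattice spacing, and the smoothness of $\Phi_\kappa$, produce the two geometric error terms $a^{4-3s-t}$ and $a^{4-2s-3t}$ in (\ref{C1}).

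Next, introduce the semiclassical parameter $h := a^{(s-1)/2}$, so that $\lambda = c_0/h^2$ and the penetrable problem rescales to
\begin{equation*}
(-h^2\Delta + c_0\, \chi_\Omega - h^2 \kappa^2)\, U^t = 0,
\end{equation*}
a low-frequency semiclassical Schr\"odinger equation with a large but non-smooth positive potential barrier supported in $\Omega$. Heuristically, as $h\to 0$ the total field concentrates in a boundary layer of width $O(h)$ near $\partial\Omega$ and essentially vanishes in the bulk of $\Omega$, so that the exterior problem satisfies the Dirichlet condition on $\partial\Omega$ in the limit. The quantitative core of the argument is the optimal trace estimate
\begin{equation*}
\| U^t \|_{H^t(\partial\Omega)} \;\lesssim\; h^{\tfrac{1}{2}-t}, \qquad 0\leq t\leq 1,
\end{equation*}
already announced in the abstract. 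Because $V = c_0\, \chi_\Omega$ is only $L^\infty$, standard microlocal semiclassical machinery does not apply; instead I would derive the estimate via the spectral decomposition of the Newton potential operator with kernel $\Phi_\kappa$ on $\Omega$, viewed as a compact self-adjoint operator whose eigenvalue asymptotics drive the $h^{1/2-t}$ rate and directly yield the required low-frequency resolvent bounds for $-h^2\Delta - h^2\kappa^2 + c_0\chi_\Omega$. Once the boundary trace is controlled, the difference $U^s - U_D^s$ satisfies the exterior Helmholtz equation with boundary data of size $O(h^{1/2}) = O(a^{(s-1)/4})$, and the mapping properties of the exterior Dirichlet-to-far-field operator (using that $\kappa^2$ is not a Dirichlet-Laplace eigenvalue and that $\partial\Omega\in C^{1,1}$) yield the first two error terms $a^{(s-1)/4} + a^{(33-29s)_-/12}$ in (\ref{C1}).

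The main obstacle is plainly the semiclassical trace estimate of the second step: the potential $V = c_0\,\chi_\Omega$ has a sharp discontinuity exactly on the boundary $\partial\Omega$ we must resolve, so no standard propagation-of-singularities or non-trapping argument is directly available. The novelty of the proposed approach is to bypass microlocal analysis altogether and exploit the explicit form of $\Phi_\kappa$ through the spectral theory of the associated Newton potential on $L^2(\Omega)$, whose spectrum encodes precisely the resonance structure of the barrier. The exponent $33/29$ appearing in (\ref{C1}) arises from balancing the Sobolev interpolation used to convert the bulk resolvent estimate into the boundary trace, the rate $h^{1/2}$ above, and the integrability of the error propagated through the exterior Dirichlet map; verifying this balance, together with combining the discrete-to-continuous estimates of the first step with the semiclassical estimates of the second step, closes the proof and yields the four-term bound of (\ref{C1}).
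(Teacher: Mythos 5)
Your two-step plan---reduce the discrete Foldy--Lax system to a continuous Lippmann--Schwinger equation with a potential of amplitude $a^{1-s}$ on $\Omega$, then treat that penetrable problem semiclassically with $h=a^{(s-1)/2}$ so that the Dirichlet problem appears in the limit---is exactly the decomposition the paper uses (Propositions \ref{P1} and \ref{P2}), and your intuition about the boundary-layer behavior and the trace estimate $\|u^t\|_{H^t(\partial\Omega)}=O(h^{1/2-t})$ is the right quantitative target. Two concrete issues would stop your version of the argument from closing.

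First, you propose taking the spectral decomposition of the ``Newton potential operator with kernel $\Phi_\kappa$'' and treating it as a compact self-adjoint operator on $L^2(\Omega)$. It is not self-adjoint: the Helmholtz kernel $\Phi_\kappa(x,y)=e^{i\kappa|x-y|}/(4\pi|x-y|)$ is complex-valued, so $R(\kappa)$ is symmetric under the bilinear pairing but has no real spectral resolution in the Hermitian $L^2$ sense, and there is no orthonormal eigenbasis to expand $u^t$ in. The paper's workaround is essential, not cosmetic: write $R(\kappa)=R(0)+P(\kappa)$, where $R(0)$ has the real Laplace kernel $\Phi_0=(4\pi|x-y|)^{-1}$ and \emph{is} self-adjoint and positive; characterize $H^1(\Omega)$ through its eigenpairs $(\lambda_n,e_n)$ via the link to the operator $-\Delta$ with the Steklov-type boundary condition $\partial_\nu e = S^{-1}(-\tfrac12 I+K)e$; and absorb $P(\kappa)$ as a bounded perturbation (with a rescaling trick $h\mapsto\varrho h$, $V_0\mapsto\varrho^2 V_0$ to remove the smallness condition on $\kappa$). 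Without this split, the claim ``the spectral theory of the Newton potential drives the $h^{1/2-t}$ rate'' has no rigorous foundation.

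Second, you misattribute the source of the $a^{(33-29s)_-/12}$ term. You say it comes from ``balancing the Sobolev interpolation \ldots, the rate $h^{1/2}$ \ldots, and the integrability of the error propagated through the exterior Dirichlet map,'' i.e.\ from the semiclassical step. In the paper it is generated \emph{entirely} in the discrete-to-continuous comparison: it is the $\eta\to 0$ endpoint of $a^{((33-3\eta)-(29-3\eta)s)/12}$, which appears when the quadrature/Taylor errors $A$, $B$, $D$ in comparing the Foldy--Lax sum with the integral $\int_\Omega\Phi_\kappa(z_m,y)\,\bar C_0(y)Y(y)\,dy$ are combined (Lemma \ref{Estimate-A-B} and \eqref{acoustic-difference-farfield-effect-1}); this combination in turn requires the a priori bounds $\|Y\|_{L^\infty(\Omega)}=O(a^{(1-s)/2})$ and $\|\nabla Y\|_{L^\infty(\Omega)}=O(a^{(3-\eta)(1-s)/4})$, which are themselves consequences of the semiclassical $H^\alpha$-estimate fed back into the Lippmann--Schwinger equation. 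The semiclassical-to-Dirichlet reduction contributes precisely the single term $a^{(s-1)/4}$, via $\|u^t\|_{L^2(\partial\Omega)}=O(h^{1/2})$ (the $t=0$ case of the trace estimate, so you cannot extend it to $t\in[0,1]$ as claimed; it holds for $t\in[0,1/2]$) and the well-posedness of the exterior Dirichlet problem. Sorting out which piece of the analysis produces which exponent is necessary both to verify the conditions \eqref{conditions-holes} and to know which estimates can still be improved.
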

 
\bigskip

To have an idea on the order of convergence, we take $t=\frac{s}{3}$, then if $s<(\frac{33}{29})_- $ the conditions in (\ref{conditions-holes}) are satisfied.
 Now, choosing $s:=1.1$, then the error is approximately of the order $a^{\frac{1}{10}}$, as $a<<1$.
\bigskip

To our best knowledge, this kind of result have never been published before even if in few references, as \cite{C-M:1979, C-M:1997} and also
the cited reference therein, it is claimed that with such a dense cluster of holes the scattered fields should behave as the one of the exterior problem. 

\subsection{A brief description of the arguments}
Let us now give our arguments why the claimed results make sense. We recall that $C_j=\bar{C}_j a$ where $\bar{C}_j$ is the capacitance of the reference obstacle $B_j$ and 
we rewrite (\ref{fracqcfracmain}) as:
\begin{eqnarray}\label{fracqcfracmain-s>1-}
 \frac{Q_m}{C_m} +a^{1-s}\sum_{\substack{j=1 \\ j\neq m}}^{M}\bar{C}_j a^{s}\; \Phi_\kappa(z_m,z_j)\frac{Q_j}{C_j}&=&-U^{i}(z_m, \theta),~~
\end{eqnarray}
for $ m=1,..., M$.
Similarly, we rewrite the representation (\ref{x oustdie1 D_m farmain-recent**}) as
\begin{eqnarray}\label{x oustdie1 D_m farmain-recent***}
U^\infty(\hat{x},\theta)
\hspace{-.05cm}=\hspace{-.1cm}\; a^{1-s}\;\sum_{m=1}^{M}e^{-i\kappa\hat{x}\cdot z_m}\bar{C_m} a^s \frac{Q_m}{C_m}\hspace{-.03cm}+\hspace{-.03cm}o(1), \; a \rightarrow 0. 
 \end{eqnarray}
We assume, for simplicity, that the reference  obstacles are all the same, precisely 
they have the same capacitance $\bar{C}_m=\bar{C}_j,\; j\neq m$ and we set $\bar{C}$ that constant. Let us introduce the Lippmann-Schwinger equation
\begin{equation}\label{LS-equation}
 U_a +a^{1-s}\int_{\Omega}\Phi_{\kappa}(\cdot, z)\bar{C}(K(z)+1)U_a(z)\; dz\; =u^{i}(\cdot, \theta).
\end{equation}
The equation (\ref{LS-equation}) is the integral equation modeling the unique solution of the acoustic problem
\begin{equation}\label{Acoustic-model}
 -\Delta U_a -\kappa^2 U_a +a^{1-s}\bar{C}(K(z)+1)U_a\; =0
\end{equation}
where $U_a(\cdot, \theta):=U^s_a(\cdot, \theta)+u^{i}(\cdot, \theta)$ and $U^s_a(\cdot, \theta)$ satisfies the Sommerfeld radiation condition. 
The far-field corresponding to the solution of (\ref{LS-equation}) (or (\ref{Acoustic-model})) has the form
\begin{equation}\label{far-field-acoustic}
 U^{\infty}_a(\hat{x}, \theta):=-a^{1-s}\int_{\Omega}e^{-i\kappa \hat{x}\cdot z}\bar{C}(K(z)+1)U_a(z)dz.
\end{equation}
Based on (\ref{fracqcfracmain-s>1-}-\ref{x oustdie1 D_m farmain-recent***}-\ref{LS-equation}), noticing that $M=O(a^{-s})$ and $\vert \Omega_m\vert=O(a^s),\; s>1$, $m=1,..., M$, we derive the following
error estimate:

\begin{equation}\label{acoustic-difference-farfield}
  U^\infty(\hat{x},\theta)-U^\infty_{a}(\hat{x},\theta) = o(1),\; a \rightarrow 0
\end{equation}
with an explicit expression of the error term $o(1)$.

Our final step is to estimate the term $U^\infty_{a}(\hat{x},\theta),$ as $a\rightarrow 0$. For $s<1$, it is clear from (\ref{far-field-acoustic}) that $U^\infty_{a}$ tends to zero, as $a\rightarrow 0$ uniformly in terms of $\hat{x}$ and $\theta$. For $s=1$, we see that $U^\infty_{a}$
is the farfield corresponding to the scattering problem by the potential $(K+1){\bf{C}}_0$  described in (\ref{B-1})-(\ref{radiationc-B-1}). The most delicate case is when $s>1$. In this case, we need to study the scattering problem (\ref{Acoustic-model}) modeled by the Schr\"{o}dinger equation with the potential $ a^{1-s}\bar{C}(K+1) 1_{\Omega}.$
We set $h:=a^{\frac{1-s}{2}}$ and $V_0:=\bar{C}(K+1)1_{\Omega}$, hence (\ref{Acoustic-model}) becomes
\begin{equation}\label{semiclassical-Schroedinger}
 -\Delta u^t -\kappa^2 u^t +h^{-2}V_0 u^t=0\; \text{ in } \mathbb{R}^3
\end{equation}
with $u^t:=u^t(\cdot, \theta):=u^s(\cdot, \theta) + u^i(\cdot, \theta)$ where $u^s$ ($:=U^s_a(\cdot, \theta)$) satisfies the Sommerfeld radiation conditions.
\bigskip

When $h \rightarrow 0$, the tunneling effect through the potential barrier $h^{-2}V_01_{\Omega}$ becomes negligeable, hence in this limit our model corresponds to a hard obstacle supported in $\Omega$ and 
does not allow any source coming from outside to penetrate inside $\Omega$, as $h \rightarrow 0$. 
We find it quite interesting to link the scattering by a collection of impenetrable obstacles $D_m$, $m=1,...M$, to the scattering by potential barriers.

To derive our result with the error estimates in terms of $h$ and hence justify our claim, we prove the following trace estimate:
\begin{equation}\label{key-estimate}
 \Vert u^t(\cdot, \theta) \Vert_{H^t(\partial \Omega)} =O(h^{\frac{1}{2}-t}),\; t\in [0, \frac{1}{2}], \mbox{ as } h<<1.
\end{equation}

Based on this estimate, we conclude as follows. Taking $t=0$ in (\ref{key-estimate}), we have the estimate $\left\Vert \left.  u^t(\cdot, \theta)  \right\vert _{\partial
\Omega}\right\Vert _{L^2\left(  \partial\Omega\right)}=O(h^{\frac{1}{2}})=O(a^{\frac{s-1}{4}})$, as $a<<1$. As $u^s(\cdot, \theta)$ satisfies $(\Delta +k^2)u^s(\cdot, \theta)=0$ 
in $\mathbb{R}^3\setminus \bar{\Omega}$ with the radiation conditions and the boundary estimate $\left\Vert \left.  u^s(\cdot, \theta) + e^{i\kappa x\cdot \theta} \right\vert _{\partial
\Omega}\right\Vert _{L^2\left(  \partial\Omega\right)}=O(a^{\frac{s-1}{4}})$ as $a<<1$, then the well posedness of the forward scattering problem in the exterior domain $\mathbb{R}^3\setminus \bar{\Omega}$ implies that
the corresponding far-fields satisfy the estimate, recalling that $u^s(\hat{x}, \theta)=U^s_a(\hat{x}, \theta)$,
\begin{equation}\label{acoustic-difference-farfield-1}
\vert U^\infty_a(\hat{x}, \theta)-U^\infty_D(\hat{x}, \theta)\vert = O(a^{\frac{s-1}{4}}) \mbox{ as } a<<1.
\end{equation}
The claim follows by combining (\ref{acoustic-difference-farfield}) and (\ref{acoustic-difference-farfield-1}).
\bigskip

Let us now describe the main idea to derive the estimate (\ref{key-estimate}). For this, we write (\ref{LS-equation}) in the form
\begin{equation}\label{Rk-LS}
(h^2+R(\kappa))u^t=h^2 u^i, \mbox{ in } \Omega
\end{equation}
where $R(\kappa)$, defined through $R(\kappa)u^t(x):=\int_{\Omega}V_0(y)\Phi_\kappa(x, y)u^t(y)dy$, is the Newtonian potential operator restricted to $\Omega$. We rewrite (\ref{Rk-LS}) as 
\begin{equation}\label{R0-LS}
(h^2+R(0))u^t=h^2 u^i - (R(\kappa)-R(0))u^t \mbox{ in } \Omega.
\end{equation}
For simplicity of exposition, we set $V_0\equiv 1$. The key observation is that $(\lambda, e)$ is an eigenvalue and eigenfunction, respectively, of
$R(0)$ if and only if $(\lambda^{-1}, e)$ is  an eigenvalue and eigenfunction of the operator defined, on $\Omega$, by $(-\Delta)$ with the boundary condition $\partial_\nu e-S^{-1}(-\frac{1}{2}I+K)e=0$ on $\partial \Omega$ where $S$ and $K$ are the single and double layer potential respectively. Using variational formulations with boundary integral equations, we show that this second operator is self-adjoint with a compact inverse and allows the following characterization of $H^1(\Omega)$, i.e. $H^1(\Omega)=\{u\in L^2(\Omega),\; \mbox{ such that } \sum_n \lambda_n^{-1}(u, e_n)^2 <\infty\}$, with the spectral family $(\lambda_n, e_n), n\in \mathbb{N}$ (of $R(0)$). Using this characterization and the spectral decomposition (Galerkin method) in (\ref{R0-LS}), we show that $\Vert u^t\Vert_{H^1(\Omega)}=O(1),\; h<<1$. With this estimate and an integration by parts in (\ref{semiclassical-Schroedinger}), in $\Omega$, we deduce that $\Vert u^t\Vert_{L^2(\Omega)}=O(h),\; h<<1$. We end up with interpolation and trace estimates.
\bigskip

We finish this introduction by making a link between the estimates we derived and the semiclassical resolvent estimates. Recall that the solution $u^t=u^i+u^s$ of the equation (\ref{Rk-LS}) is nothing but the solution of the equation 
$(-\Delta -\kappa^2 +h^{-2}V_01_{\Omega})u^t=0$, and then the scattered field $u^s$ is the solution of the equation 
$(-\Delta -\kappa^2 +h^{-2}V_01_{\Omega})u^s=-h^{-2}V_01_{\Omega}u^i$ satisfying the Sommerfield radiation conditions. This last equation can be written, in the semiclassical setting, as 
 $(-h^2\Delta -\kappa^2h^2 +V_01_{\Omega})u^s=-V_01_{\Omega}u^i$. We set $Res(-h^2\Delta +V_01_{\Omega}; k^2)$ to be the semiclasscial resolvent of $-h^2\Delta+V_01_{\Omega}$ at the frequency $k^2$. 
 The estimates discussed above imply, in particular, that $\Vert 1_{\Omega}Res(-h^2\Delta+V_01_{\Omega}; k^2)1_{\Omega}\Vert_{L^2(\mathbb{R}^3)}=O(h^{-1})$ at low frequencies, precisely for $k =O(h),\; h<<1$. 
 Hence, with the method described above, we can derive the semiclassical resolvent estimates at low frequencies, i.e. $\kappa =O(h),\; h<<1$, for positive, compactly supported but not necessarily 
   smooth potentials. The semiclasscial resolvent estimates for frequencies $k$ away from zero, and under non trapping conditions, are well known for general, but smooth enough, potentials, see for instance 
   \cite{C-P-V, R-T, V-Z} and the reference therein.

\bigskip

The rest of the paper is organized as follows. In Section \ref{Main-Theorem}, we prove Theorem \ref{C-Th} by using the expansions of Proposition \ref{Prpopostion-Expansion-Far-fields}.
The section is divided into two subsections, subsection \ref{main-section-P1}  and subsection \ref{Step-2:Semiclassical-estimates} corresponding to the estimates of $U^\infty(\hat{x},\theta)-U^\infty_{a}(\hat{x},\theta)$ and $U_{a}^\infty(\hat{x},\theta)-U^\infty_{D}(\hat{x},\theta)$ respectively.
The justification of estimate (\ref{key-estimate}) is provided in subsection \ref{Step-2:Semiclassical-estimates}.
The proof of Proposition \ref{Prpopostion-Expansion-Far-fields} is postponed to Section \ref{prop-Expansion-Far-fields}.

\section{Proof of Thereom \ref{C-Th}}\label{Main-Theorem}

The existence and uniqueness of the solution of the following scattering problem is guaranteed based on the Fredholm alternative, as it is discussed in section 2.2. 
Let $U^t_a$ be its unique solution.
 
 \begin{equation}\label{Umbounded-potentials-Th}
(\Delta + \kappa^{2}-a^{1-s}\bold{C}_0)U_{a}^{t}=0 \mbox{ in }\mathbb{R}^{3},
\end{equation}
\begin{equation}\label{Umbounded-potentials-BC}
U_{a}^{t}=U_{a}^s +e^{i\kappa x\cdot \theta}  
\end{equation}
\begin{equation}
\frac{\partial U_{a}^{s}}{\partial |x|}-i\kappa U_{a}^{s}=o\left(\frac{1}{|x|}\right), |x|\rightarrow\infty. \label{Umbounded-potentials-RC}
\end{equation}

\bigskip

Theorem \ref{C-Th} can split into the following two propositions

\begin{proposition}\label{P1}
Under the conditions on the distribution of the small holes
  \begin{eqnarray}\label{conditions-P1-1} 
     \frac{s}{3}\leq{t}<{1}&\mbox{ and } 0\leq{s}<\min\left\{(\frac{33}{29})_-,{2-t}, {3-3t},\frac{1}{2}(3-t),\frac{4-t}{3},\frac{4-3t}{2}\right\}
  \end{eqnarray}
  we have the following expansion:
  \begin{equation}\label{P1-1}
  U^\infty(\hat{x},\theta)- U_{a}^\infty(\hat{x},\theta)=O\left({ a^{\frac{(33-29s)_-}{12}}}+{a^{4-3s-t}}+{a^{4-2s-3t}}\right)~~ \mbox{ as } a \rightarrow 0,
 \end{equation}
 uniformly in terms of  $\theta$  and  $\hat{x} \mbox{ in } \mathbb{S}^2$.
\end{proposition}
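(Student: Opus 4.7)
The strategy is to read the Foldy--Lax algebraic system (\ref{fracqcfracmain}) as a midpoint-type quadrature discretization of the Lippmann--Schwinger equation (\ref{LS-equation}) that defines $U_a$, using the cubes $\Omega_m$ (of side $O(a^{s/3})$ and total number $O(a^{-s})$) as quadrature cells. Setting $W_m:=Q_m/C_m$ and $V(z):=-U_a(z)$, the rewriting (\ref{fracqcfracmain-s>1-}) becomes
\begin{equation*}
W_m+a\bar C\sum_{j\neq m}\Phi_\kappa(z_m,z_j)\,W_j=-U^i(z_m,\theta),
\end{equation*}
whereas evaluating (\ref{LS-equation}) at $x=z_m$ and splitting $\int_\Omega=\sum_j\int_{\Omega_j}$ yields the \emph{same} linear system for the sampled values $V(z_m)$ up to a residual $R_m$ collecting the quadrature errors. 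The plan is then: (i) bound $R_m$ in $\ell^\infty$; (ii) invert the Foldy--Lax matrix (via condition (\ref{invertibilityconditionsmainthm}) and the $\ell^\infty\to\ell^\infty$ inverse bound used in the derivation of Prop.~\ref{Prpopostion-Expansion-Far-fields}) to transfer this bound onto $|W_m-V(z_m)|$; (iii) substitute into (\ref{x oustdie1 D_m farmain-recent**}) and compare with a midpoint quadrature of the integral representation $U_a^\infty=-a^{1-s}\bar C\int_\Omega e^{-i\kappa\hat x\cdot z}U_a(z)\,dz$.

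The residual $R_m$ splits into three pieces. First, the smooth quadrature error
$\sum_{j\neq m}\bigl(\int_{\Omega_j}\Phi_\kappa(z_m,y)V(y)\,dy-a^s\Phi_\kappa(z_m,z_j)V(z_j)\bigr)$: a first-order Taylor expansion of $y\mapsto\Phi_\kappa(z_m,y)V(y)$ on each cube of side $a^{s/3}$ gives a per-cube gain of order $a^{s}\cdot a^{s/3}\,\|\nabla(\Phi_\kappa(z_m,\cdot)V)\|_{L^\infty(\Omega_j)}$, and the sum in $j\neq m$ is finite because it is comparable to a weakly-singular integral over $\Omega$. Second, the self-interaction $\int_{\Omega_m}\Phi_\kappa(z_m,y)V(y)\,dy$, bounded by $\|V\|_\infty\int_{\Omega_m}(4\pi|z_m-y|)^{-1}dy=O(a^{2s/3})$. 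Third, a boundary-cube correction whose total contribution is controlled by the estimate, recalled in Section~\ref{Introduction-smallac-sdlp}, that the cubes meeting $\partial\Omega$ occupy total volume $O(a^{s/3})$. After multiplication by the overall prefactor $a^{1-s}$ of (\ref{LS-equation}) and propagation through the invertibility step, the combination of these three gains produces, after optimization, the leading exponent $a^{(33-29s)_-/12}$ of (\ref{P1-1}).

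For the final far-field step, substituting $Q_m=C_m W_m=a\bar C\,W_m$ in (\ref{x oustdie1 D_m farmain-recent**}) gives
\begin{equation*}
U^\infty(\hat x,\theta)=a\bar C\sum_m e^{-i\kappa\hat x\cdot z_m}W_m+O\!\left(a^{2-s}+a^{3-2s-t}+a^{4-3s-t}+a^{3-s-2t}+a^{4-2s-3t}\right),
\end{equation*}
while a midpoint quadrature of $U_a^\infty$ yields $a\bar C\sum_m e^{-i\kappa\hat x\cdot z_m}V(z_m)$ plus a regular outer-quadrature error of the same order as the smooth piece of $R_m$ (with $e^{-i\kappa\hat x\cdot z}$ replacing the singular $\Phi_\kappa(z_m,z)$, hence strictly better). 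Subtracting, the difference is controlled by $a\bar C\sum_m|W_m-V(z_m)|$ together with the explicit remainder from (\ref{x oustdie1 D_m farmain-recent**}). A direct comparison of exponents under the hypotheses (\ref{conditions-P1-1}) shows that the only terms that are not dominated are $a^{(33-29s)_-/12}$, $a^{4-3s-t}$, and $a^{4-2s-3t}$, which is precisely the statement (\ref{P1-1}).

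The main obstacle is to carry out step (i) \emph{uniformly in $m$ and in $a$}, because the prefactor $a^{1-s}$ blows up when $s>1$ and must be defeated by the geometric gains $a^{s/3}$ (smooth part) and $a^{2s/3}$ (self-interaction). This forces $a$-uniform estimates on $U_a$ and its first derivatives on $\overline\Omega$, which are delicate precisely because the potential $a^{1-s}\bar C\mathbf 1_\Omega$ appearing in (\ref{Acoustic-model}) is unbounded; the precise numerical balance between this loss and the cube-size gains is what fixes the admissible range of $s$ and $t$ in (\ref{conditions-P1-1}) and produces the exponent $(33-29s)_-/12$ in (\ref{P1-1}).
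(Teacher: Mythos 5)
Your plan is essentially the paper's proof of Proposition~\ref{P1}: set $Y(z):=-U_a(z)$ (your $V$), read the Foldy--Lax system as a midpoint-quadrature discretization of the Lippmann--Schwinger equation \eqref{fracqcfracmain-effect-int} evaluated at the points $z_m$, split the quadrature residual into the off-diagonal piece $A$, the self-interaction $B$, and the boundary-cube correction $D$, invert the Foldy--Lax matrix using Lemma~\ref{Mazyawrkthm-effect}, and compare the two far-field representations. That is precisely what Section~\ref{main-section-P1} does.

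The genuine gap is that you stop at the step you yourself describe as the main obstacle. The residual estimates need the bounds on $U_a$ and $\nabla U_a$, and these are \emph{not} $a$-uniform as you write: the paper's Lemma~\ref{lem-Est-of-Y} gives $\|Y\|_{L^\infty(\Omega)}=O(a^{(1-s)/2})$ and $\|\nabla Y\|_{L^\infty(\Omega)}=O(a^{\frac{3-\eta}{4}(1-s)})$, both of which blow up as $a\to 0$ when $s>1$. These are derived from the $H^\alpha$ bound $\|Y\|_{H^\alpha(\Omega)}=O(a^{(s-1)(1-\alpha)/2})$ that comes out of the semiclassical analysis in Proposition~\ref{Main-semi-classical-estimate-1}, followed by a bootstrap through the Lippmann--Schwinger equation and Sobolev embeddings. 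Without producing these quantitative blow-up rates, you can neither bound $A$, $B$, $D$ nor recover the exponent $(33-29s)_-/12$ --- you simply assert at the end that "the precise numerical balance ... produces the exponent", which is the content, not a proof. In particular, the dependence of Proposition~\ref{P1} on the machinery built for Proposition~\ref{P2} is not optional: the $L^\infty$ and $W^{1,\infty}$ control on $U_a$ is an input you must establish, and it is the hardest part.

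Two smaller inaccuracies. First, your self-interaction bound $O(a^{2s/3})$ omits the factor $\|V\|_\infty=O(a^{(1-s)/2})$; the paper's bound for $B$ is $O(a^{(3+s)/6})$. Second, Lemma~\ref{Mazyawrkthm-effect} is an $\ell^2$-based estimate from which the paper extracts an $\ell^1$ bound with an explicit factor of $M$ (see \eqref{mazya-fnlinvert-small-ac-3-effect-dif}); describing it as an "$\ell^\infty\to\ell^\infty$ inverse bound" and then summing would cost you an extra $M^{1/2}$ and spoil the final exponents.
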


\begin{proposition}\label{P2} Assume that $\Omega$ is a bounded domain with $C^{1,1}$ regularity and $s>1$, then we have the following expansion
\begin{equation}\label{P2-1}
  U_a^\infty(\hat{x},\theta)- U_{D}^\infty(\hat{x},\theta)=O(a^{\frac{s-1}{4}}), ~~ \mbox{ as } a \rightarrow 0,
 \end{equation}
 uniformly in terms of  $\theta$  and  $\hat{x} \mbox{ in } \mathbb{S}^2$.
\end{proposition}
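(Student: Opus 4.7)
The plan is to make rigorous the semiclassical reduction sketched after equation (\ref{LS-equation}). Setting $h := a^{(s-1)/2}$ (so $h \to 0$ as $a \to 0$, since $s > 1$) and $V_0 := \bar{C}(K+1) 1_\Omega$, the penetrable problem (\ref{Umbounded-potentials-Th})--(\ref{Umbounded-potentials-RC}) becomes the semiclassical Schr\"odinger equation $(-\Delta - \kappa^2 + h^{-2} V_0) U_a^t = 0$ with $U_a^t = U_a^s + u^i$ and Sommerfeld radiation conditions on $U_a^s$. The guiding idea is to show that the total field $U_a^t$ becomes negligible on $\partial \Omega$ as $h \to 0$, so that $U_a^s$ nearly satisfies the exterior Dirichlet data of the hard obstacle problem.

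The core estimate to establish is the trace bound
\begin{equation*}
\|U_a^t\|_{H^\tau(\partial\Omega)} = O(h^{1/2 - \tau}), \qquad \tau \in [0, 1/2], \quad h \ll 1,
\end{equation*}
which I would derive in three steps, following the sketch in the introduction. First, rewrite (\ref{LS-equation}) as $(h^2 + R(\kappa)) U_a^t = h^2 u^i$ on $\Omega$ with $R(\kappa) g(x) := \int_\Omega V_0(y) \Phi_\kappa(x,y) g(y)\, dy$, and transfer the $\kappa$-dependence to the right-hand side: $(h^2 + R(0)) U_a^t = h^2 u^i - (R(\kappa) - R(0)) U_a^t$. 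Second, analyze $R(0)$ spectrally: using the representation of the Newtonian potential through the single and double layer operators $S$ and $K$, one identifies $R(0)^{-1}$ with $(-\Delta)$ on $\Omega$ equipped with the nonlocal boundary condition $\partial_\nu e - S^{-1}(-\tfrac{1}{2} I + K) e = 0$. A variational argument shows this operator is positive self-adjoint with compact resolvent, and its eigenbasis $(e_n)$ yields the spectral characterization $H^1(\Omega) = \{u \in L^2(\Omega) : \sum_n \lambda_n^{-1} (u, e_n)^2 < \infty\}$ where $\lambda_n$ are the eigenvalues of $R(0)$. Third, expanding $U_a^t = \sum_n c_n e_n$ and exploiting the multipliers $h^2/(h^2 + \lambda_n) \le 1$ in the Galerkin scheme yields $\|U_a^t\|_{H^1(\Omega)} = O(1)$; then, testing (\ref{semiclassical-Schroedinger}) against $\overline{U_a^t}$ on $\Omega$, using the positivity of $V_0$ and the previous $H^1$ bound to absorb the boundary flux term, gives $\|U_a^t\|_{L^2(\Omega)} = O(h)$. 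Interpolation between $L^2$ and $H^1$ interior bounds, followed by the standard trace embedding $H^{\tau + 1/2}(\Omega) \hookrightarrow H^\tau(\partial\Omega)$, produces the announced trace estimate.

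With the trace estimate in hand, the conclusion is essentially immediate. Specializing to $\tau = 0$ gives $\|U_a^t|_{\partial\Omega}\|_{L^2(\partial\Omega)} = O(h^{1/2}) = O(a^{(s-1)/4})$. Both $U_a^s$ and $U_D^s$ solve $(\Delta + \kappa^2) w = 0$ in $\mathbb{R}^3 \setminus \overline{\Omega}$ with Sommerfeld radiation conditions, so their difference $w := U_a^s - U_D^s$ solves an exterior Dirichlet problem with boundary data $w|_{\partial\Omega} = (U_a^t - u^i) - (- u^i) = U_a^t|_{\partial\Omega}$. Since $\kappa^2$ is assumed not to be a Dirichlet-Laplacian eigenvalue for $\Omega$, this exterior problem is well posed and the boundary-data-to-far-field map $L^2(\partial\Omega) \to L^\infty(\mathbb{S}^2 \times \mathbb{S}^2)$ is bounded via standard layer potential theory, yielding $|U_a^\infty(\hat{x}, \theta) - U_D^\infty(\hat{x}, \theta)| = O(a^{(s-1)/4})$ uniformly in $\hat{x}, \theta$.

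The main obstacle is the trace estimate, and specifically the identification of $R(0)^{-1}$ with a concrete self-adjoint elliptic operator yielding the spectral characterization of $H^1(\Omega)$. This is where the layer potential calculus and the $C^{1,1}$ regularity of $\partial \Omega$ enter crucially, in order to guarantee that $S^{-1}(-\tfrac{1}{2} I + K)$ is a well-defined bounded map on appropriate Sobolev scales on $\partial\Omega$. A secondary difficulty is controlling the perturbation $(R(\kappa) - R(0)) U_a^t$ in the Galerkin step, which requires showing that this term is of strictly smaller order in $h$ relative to $h^2 u^i$ after testing against the eigenbasis.
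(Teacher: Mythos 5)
Your proposal is correct and follows essentially the same route as the paper: the same semiclassical reduction with $h=a^{(s-1)/2}$, the same decomposition $R(\kappa)=R(0)+P(\kappa)$, the same spectral characterization of $H^1(\Omega)$ via the Newtonian potential $R(0)$ and the boundary operator $S^{-1}(-\tfrac12 I+K)$, followed by the Galerkin multiplier bound, the Green's-formula argument for the $L^2(\Omega)$ estimate, interpolation, trace, and the exterior Dirichlet well-posedness at the end. The two difficulties you flag (identifying $R(0)^{-1}$ with the concrete self-adjoint operator, and absorbing the $P(\kappa)u^t$ perturbation via the coupled $H^1$/$(H^1)'$ estimates, handled in the paper by a rescaling trick to remove the smallness condition on $\kappa$) are precisely the technical cruxes of the published proof.
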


\subsection{Proof of Proposition \ref{P1}}\label{main-section-P1}
 We rewrite the algebraic system \eqref{fracqcfracmain} in form

\begin{eqnarray}\label{fracqcfracmain-effect}
 Y_m +\sum_{\substack{j=1 \\ j\neq m}}^{M} \Phi_\kappa(z_m,z_j)\bar{C}_j Y_j a&=&U^{i}(z_m, \theta),
\end{eqnarray}
with $Y_m:=-\frac{Q_m}{C_m}$ and $C_m:=\bar{C}_m a$, for $m=1,\dots,M$ and $\bar{C}_m$ are related to the capacitances of $B_m$'s, i.e. 
they are independent of $a$. We set $\hat{C}:=(\bar{C}_1,\bar{C}_2,\dots,\bar{C}_M)^\top$ 
and define $ {\hat{Y}}:=\left(
    Y_1, Y_2 , \ldots  , Y_M \right)^\top \text{ and } 
\mathrm{U}^I:=\left(
     U^i(z_1), U^i(z_2), \ldots, U^i(z_M)\right)^\top$.

The following lemma ensures the invertibility of the algebraic system (\ref{fracqcfracmain-effect}), see its proof in \cite{C-S:2014}:

\begin{lemma}\label{Mazyawrkthm-effect}
If $a<\frac{5\pi}{3}\frac{d}{\|\hat{C}\|}$ and $t:=\min\limits_{j\neq\,m,1\leq\,j,m\leq\,M}\cos(\kappa|z_m-z_j|) \geq 0$, then the matrix $\mathbf{B}$ 
is invertible and the solution vector $\hat{Y}$ of \eqref{fracqcfracmain-effect} satisfies the estimate
\begin{equation}\label{mazya-fnlinvert-small-ac-2-effect}
 \sum_{m=1}^{M}|Y_m|^{2}
\leq4\left(1-\frac{3ta}{5\pi\,d}\|\hat{C}\|\right)^{-2}\sum_{m=1}^{M}\left|U^i(z_m)\right|^2,
\end{equation}
and hence the estimate
\begin{equation}\label{mazya-fnlinvert-small-ac-3-effect}
\begin{split}
 \sum_{m=1}^{M}|Y_m|
\leq2\left(1-\frac{3ta}{5\pi\,d}\|\hat{C}\|\right)^{-1}M\max\limits_{1\leq m \leq M}\left|U^i(z_m)\right|.
\end{split}
\end{equation}
\end{lemma}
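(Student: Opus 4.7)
The plan is to establish an a priori bound on $\hat{Y}$ in $\ell^2$ through a Hermitian-inner-product argument combined with a Mazya-type packing estimate for the well-separated centers $z_m$, and then to promote this to the $\ell^1$-style estimate via Cauchy--Schwarz. First I would recast \eqref{fracqcfracmain-effect} in matrix form as $(\mathbf{I}+a\mathbf{B})\hat{Y}=\mathrm{U}^I$, where $\mathbf{B}_{mj}=\bar{C}_j\Phi_\kappa(z_m,z_j)$ for $m\neq j$ and $\mathbf{B}_{mm}=0$. Under the hypothesis $a\|\hat{C}\|/d<5\pi/3$, the goal is to exhibit $I+a\mathbf{B}$ as a bijection with a quantitative inverse bound.

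The core identity is obtained by taking the Hermitian inner product of $(\mathbf{I}+a\mathbf{B})\hat{Y}=\mathrm{U}^I$ with $\hat{Y}$ and extracting the real part, yielding
$$\|\hat{Y}\|_{\ell^2}^2+a\,\operatorname{Re}\langle\mathbf{B}\hat{Y},\hat{Y}\rangle=\operatorname{Re}\langle\mathrm{U}^I,\hat{Y}\rangle.$$
The cross term $\langle\mathbf{B}\hat{Y},\hat{Y}\rangle$ is then symmetrized in $m\leftrightarrow j$ using $\Phi_\kappa(z_m,z_j)=\Phi_\kappa(z_j,z_m)$; this separates the real kernel $\cos(\kappa|z_m-z_j|)/(4\pi|z_m-z_j|)$ from the antisymmetric imaginary one, and is precisely where the hypothesis $\cos(\kappa|z_m-z_j|)\geq t\geq 0$ intervenes, producing the factor $t$ in the effective ``contraction'' coefficient. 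The absolute value of the resulting double sum is then estimated by Cauchy--Schwarz in the summation index $j$, collecting the weights $\bar{C}_j$ into $\|\hat{C}\|$:
$$\sum_{j\neq m}\frac{\bar{C}_j\,|Y_j|}{|z_m-z_j|}\leq\|\hat{C}\|\Bigl(\sum_{j\neq m}\frac{|Y_j|^2}{|z_m-z_j|^2}\Bigr)^{1/2}.$$

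To close the estimate one invokes a Mazya-type geometric inequality: for points in $\mathbb{R}^3$ with pairwise minimum distance $d$, a packing argument based on disjoint balls of radius $d/2$ centered at the $z_j$'s, combined with the $1/(4\pi)$ prefactor from $\Phi_\kappa$, produces precisely the constant $3/(5\pi d)$. Putting these ingredients together delivers an inequality of the form $(1-\tfrac{3ta\|\hat{C}\|}{5\pi d})\|\hat{Y}\|_{\ell^2}\leq 2\|\mathrm{U}^I\|_{\ell^2}$, whose square is \eqref{mazya-fnlinvert-small-ac-2-effect}. Invertibility of $I+a\mathbf{B}$ then follows from uniqueness: applying the a priori bound with $\mathrm{U}^I=0$ forces $\hat{Y}=0$, and in the finite-dimensional setting injectivity implies bijectivity. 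The $\ell^1$-type bound \eqref{mazya-fnlinvert-small-ac-3-effect} is a routine consequence of \eqref{mazya-fnlinvert-small-ac-2-effect} via $\sum_m|Y_m|\leq\sqrt{M}\,\|\hat{Y}\|_{\ell^2}$ together with $\|\mathrm{U}^I\|_{\ell^2}\leq\sqrt{M}\max_m|U^i(z_m)|$.

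The main technical obstacle is the packing inequality: extracting the explicit constant $3/(5\pi)$ requires a dimension-sensitive geometric count in $\mathbb{R}^3$ together with a careful two-scale use of Cauchy--Schwarz, exploiting only the minimum separation $d$ and not the total number $M$. This is carried out in \cite{C-S:2014}, to which the present lemma refers.
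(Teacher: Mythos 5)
Your high-level plan---take the Hermitian inner product $\langle(I+a\mathbf{B})\hat Y,\hat Y\rangle$, extract the real part, and feed in a Maz'ya-type lower bound governed by the minimum separation $d$---is the right kind of argument (the paper itself only refers to \cite{C-S:2014} for this lemma), and you correctly locate the origin of the constant $3/(5\pi d)$: it is the self-energy $3q^2/(5\pi d)$ of a uniformly charged ball of radius $d/2$, used together with Newton's theorem for the mutual energies of disjoint balls and the nonnegativity of the total Coulomb energy. But the intermediate chain you wrote does not close and in fact undercuts the very mechanism you invoke at the end. The inner Cauchy--Schwarz in $j$,
\[
\sum_{j\neq m}\frac{\bar C_j\,|Y_j|}{|z_m-z_j|}\leq\|\hat C\|_{\ell^2}\Bigl(\sum_{j\neq m}\frac{|Y_j|^2}{|z_m-z_j|^2}\Bigr)^{1/2},
\]
commits you to controlling $\sum_m\sum_{j\neq m}|Y_j|^2/|z_m-z_j|^2$. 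For points of minimum separation $d$ inside a bounded domain one has $\sum_{j\neq m}|z_m-z_j|^{-2}=O(d^{-3})$, not $O(d^{-2})$, so after the outer Cauchy--Schwarz you only obtain $|\mathrm{Re}\langle\mathbf{B}\hat Y,\hat Y\rangle|=O(d^{-3/2})\,\|\hat Y\|_{\ell^2}^2$, giving a contraction only under $a\|\hat C\|\lesssim d^{3/2}$, strictly weaker than the stated threshold $a\|\hat C\|<\frac{5\pi}{3}d$, and never producing the explicit $3/(5\pi d)$.

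The deeper problem is that passing to absolute values before the disjoint-ball step discards precisely the cancellation that step exploits: the inequality $\sum_{m\neq j}q_mq_j/(4\pi|z_m-z_j|)\geq -\frac{3}{5\pi d}\sum_m q_m^2$ is a \emph{one-sided} lower bound on a signed quadratic form (coming from $\iint\Phi_0\,\rho\rho\geq 0$ for the smeared $\rho$), and it does not bound $\sum_{m\neq j}|q_m||q_j|/(4\pi|z_m-z_j|)$ from above; the latter is genuinely of size $O(d^{-3})\|q\|_{\ell^2}^2$ for a nonnegative vector $q$. The Maz'ya argument must therefore be applied directly to the signed, symmetrized bilinear form, with the weights $\bar C_j$ absorbed by conjugating the system with $\mathrm{diag}(\bar C_m^{1/2})$, so that what comes out is $\max_m\bar C_m$ times the unweighted Coulomb form. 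This is also why the $\|\hat C\|$ in the hypothesis must be the $\ell^\infty$ norm rather than the $\ell^2$ norm you use in the Cauchy--Schwarz step: an $\ell^2$ norm scales like $\sqrt M$ and would make the threshold $c_1$ in \eqref{invertibilityconditionsmainthm} depend on $M$, which the paper explicitly excludes. Finally, your account of how the factor $t$ arises is incomplete: writing $\cos(\kappa r)=t+(\cos(\kappa r)-t)$ leaves a remainder whose kernel is pointwise nonnegative, but a nonnegative off-diagonal kernel does not by itself generate a nonnegative Hermitian form, so an additional argument is still needed to isolate the $t$-dependence in the stated form.
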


\bigskip\par Consider the Lippmann-Schwinger equation
\begin{eqnarray}\label{fracqcfracmain-effect-int}
 Y(z) + a^{1-s}\int_{\Omega} \Phi_\kappa(z,y) \bar{{C_0}} (y) Y(y) dy &=&U^{i}(z, \theta), z\in \Omega,
\end{eqnarray}
where $\bar{{C_0}}$ is a piecewise constant function such that ${\bar{{C_0}}}|_{\Omega_m}={\bar{{C}}_m}$ for all $m=1,\dots,M$ and vanishes outside $\Omega$. 
As we have assumed that obstacles have same capacitances, ${\bar{{C}}_m}$ is same for every $m$ and let us denote it by ${\bar{{C}}}$.

For $m=1,\dots,M$, the equation \eqref{fracqcfracmain-effect-int} can be rewritten as
\begin{eqnarray}\label{fracqcfracmain-effect-int-1}
 Y(z_m) + a^{1-s}\sum_{\substack{j=1 \\ j\neq m}}^{M} \Phi_\kappa(z_m,z_j) \bar{{C}} Y(z_j) a^s&=&U^{i}(z_m, \theta)
   +a^{1-s}\left[\sum_{\substack{j=1 \\ j\neq m}}^{M} \Phi_\kappa(z_m,z_j) \bar{C}  Y(z_j) a^s-\int_{\Omega} \Phi_\kappa(z_m,y) \bar{{C_0}}(y) Y(y) dy\right]\nonumber\\
   &=&U^{i}(z_m, \theta)
   +a^{1-s}\left[\sum_{\substack{j=1 \\ j\neq m}}^{M} \Phi_\kappa(z_m,z_j) \bar{C}  Y(z_j) a^s-\int_{\Omega} \Phi_\kappa(z_m,y) \bar{{C_0}}(y) Y(y) dy\right]\nonumber\\
   &=&U^{i}(z_m, \theta)
   +a^{1-s}\bar{C}\left[\sum_{\substack{j=1 \\ j\neq m}}^{M} \Phi_\kappa(z_m,z_j)  Y(z_j) a^s-\sum_{j=1}^{M}\int_{\Omega_j} \Phi_\kappa(z_m,y)  Y(y) dy\right]\nonumber\\
   &&-a^{1-s}\int_{\Omega\setminus\cup_{j=1}^{M}\Omega_j} \Phi_\kappa(z_m,y)\bar{C}_0(y)  Y(y) dy\nonumber\\
      &=&U^{i}(z_m, \theta)
   +a^{1-s}\bar{C}\left[\underbrace{\sum_{\substack{j=1 \\ j\neq m}}^{M}\int_{\Omega_j} [\Phi_\kappa(z_m,z_j)  Y(z_j) - \Phi_\kappa(z_m,y)  Y(y)] dy}_{=: A}\right]\nonumber\\
   &&-a^{1-s}\bar{C}\underbrace{\int_{\Omega_m} \Phi_\kappa(z_m,y)  Y(y) dy}_{=:B}-a^{1-s}\underbrace{\int_{\Omega\setminus\cup_{j=1}^{M}\Omega_j} \Phi_\kappa(z_m,y)\bar{C}_0(y)  Y(y) dy}_{=:D}\nonumber\\
\end{eqnarray}

First, we prove the following Lemma.
\begin{lemma}\label{lem-Est-of-Y}
The function  $Y$ satisfying the Lippmann-Schwinger equation \eqref{fracqcfracmain-effect-int} satisfies the following estimates. 
\begin{eqnarray}\label{Eq-Est-of-Y}
\|Y\|_{L^\infty(\Omega)}\, =\,O(a^{\frac{1-s}{2}}),  &\qquad & \|\nabla{Y}\|_{L^\infty(\Omega)}\,=\,O(a^{\frac{3-\eta}{4}(1-s)}),
\end{eqnarray}
where $\eta$ is arbitrary positive quantity.
\end{lemma}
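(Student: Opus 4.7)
The plan is to identify $Y$ with the total field of the equivalent Schr\"odinger scattering problem $(-\Delta-\kappa^{2}+h^{-2}V_{0})Y=0$ in $\mathbb{R}^{3}$, where $h^{2}:=a^{s-1}$ and $V_{0}:=\bar{C}_{0}$, with $Y=U^{i}+Y^{s}$ and $Y^{s}$ satisfying the Sommerfeld radiation condition; this is precisely the system \eqref{Umbounded-potentials-Th}--\eqref{Umbounded-potentials-RC}, of which \eqref{fracqcfracmain-effect-int} is the Lippmann--Schwinger representation. The key input is the pair of semiclassical energy bounds produced by the spectral analysis of $R(0)$ carried out in Section \ref{Step-2:Semiclassical-estimates} (and previewed in the introduction):
\[
\|Y\|_{L^{2}(\Omega)}=O(h),\qquad \|Y\|_{H^{1}(\Omega)}=O(1),\qquad h\to 0.
\]
Combining the $H^{1}$ bound with the three-dimensional Sobolev embedding $H^{1}(\Omega)\hookrightarrow L^{6}(\Omega)$ and applying Lyapunov's interpolation inequality, I get $\|Y\|_{L^{p}(\Omega)}=O\bigl(h^{\,3/p-1/2}\bigr)$ for every $p\in[2,6]$.

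To derive the $L^{\infty}$-estimate, I rewrite \eqref{fracqcfracmain-effect-int} as $Y=U^{i}-h^{-2}R(\kappa)Y$, where $R(\kappa)f(z):=\int_{\Omega}\Phi_{\kappa}(z,y)\bar{C}_{0}(y)f(y)\,dy$. Cauchy--Schwarz in $y$ together with the uniform integrability $\sup_{z\in\mathbb{R}^{3}}\int_{\Omega}|z-y|^{-2}\,dy<\infty$, valid in three dimensions on a bounded set, yields $\|R(\kappa)Y\|_{L^{\infty}(\Omega)}\leq C\|Y\|_{L^{2}(\Omega)}=O(h)$, and hence
\[
\|Y\|_{L^{\infty}(\Omega)}\leq\|U^{i}\|_{\infty}+Ch^{-2}\|Y\|_{L^{2}}=O(h^{-1})=O\bigl(a^{(1-s)/2}\bigr).
\]

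For the gradient estimate, I differentiate the LS equation under the integral sign,
\[
\nabla Y(z)=\nabla U^{i}(z,\theta)-h^{-2}\int_{\Omega}\nabla_{z}\Phi_{\kappa}(z,y)\,\bar{C}_{0}(y)\,Y(y)\,dy.
\]
Since $|\nabla_{z}\Phi_{\kappa}(z,y)|\leq C|z-y|^{-2}$ and $\int_{\Omega}|z-y|^{-2p'}\,dy$ is finite uniformly in $z$ exactly for $p'<3/2$, i.e.\ $p>3$, H\"older's inequality delivers $\|\nabla R(\kappa)Y\|_{L^{\infty}(\Omega)}\leq C_{p}\|Y\|_{L^{p}(\Omega)}$ for every $p>3$. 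Choosing $p=3+\eta'$ with $\eta'>0$ arbitrarily small and substituting the interpolation bound above yields
\[
\|\nabla Y\|_{L^{\infty}(\Omega)}\leq C+Ch^{-2+3/p-1/2}=O\bigl(h^{-(3-\eta)/2}\bigr)=O\bigl(a^{(3-\eta)(1-s)/4}\bigr),
\]
where $\eta\downarrow 0$ corresponds to $\eta'\downarrow 0$.

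The only non-routine step is the first one: the $L^{2}$ and $H^{1}$ bounds for $Y$ rest on the spectral decomposition of the Newtonian potential $R(0)$ developed in Section \ref{Step-2:Semiclassical-estimates}. Once these are in hand, the pointwise and Lipschitz bounds reduce to standard kernel estimates together with Sobolev embedding and Lyapunov interpolation, with no further essential obstacle; the small parameter $\eta>0$ in the gradient bound is precisely what arises from the constraint $p>3$ in the $L^{p}\to L^{\infty}$ mapping of the singular kernel $\nabla_{z}\Phi_{\kappa}$.
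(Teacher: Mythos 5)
Your proposal is correct and follows essentially the same route as the paper's own proof: both rest on the $L^2$ and $H^1$ semiclassical bounds for $u^t$ (hence for $Y$) established in Section~\ref{Step-2:Semiclassical-estimates}, then pass to $L^\infty$-type bounds via the locally integrable singularities of $\Phi_\kappa$ and $\nabla_z\Phi_\kappa$ together with Sobolev/interpolation. The only superficial difference is the order of operations in the gradient estimate — the paper first interpolates to $\|Y\|_{H^\alpha(\Omega)}=O(a^{(s-1)(1-\alpha)/2})$ and then invokes the embedding $H^\alpha\hookrightarrow L^{p'}$ with $\alpha>1/2$, whereas you embed $H^1\hookrightarrow L^6$ and interpolate in $L^p$; these give identical exponents, and both produce the constraint $p>3$ behind the lossy parameter $\eta$.
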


\begin{proof}
In Proposition \ref{Main-semi-classical-estimate-1}, we have the estimate $\Vert Y \Vert_{H^{\alpha}(\Omega)}=O(a^{(s-1)\frac{1-\alpha}{2}}), \alpha \in [0, 1], \; a<<1$.
For $\alpha=0$, we have $\Vert Y \Vert_{H^{\alpha}(\Omega)}=O(a^{\frac{s-1}{2}})$, and (\ref{fracqcfracmain-effect-int}) implies that 
\begin{equation}\label{Y-estimate}
 \vert Y(z)\vert = O(1) +O(a^{1-s}) O(a^{\frac{s-1}{2}})=O(a^{\frac{1-s}{2}}), 
\end{equation}
for $s \geq 1$.

Again, from (\ref{fracqcfracmain-effect-int}), we deduce that
$$
\vert \nabla Y(z)\vert =O(1) +a^{1-s} \Vert C(y) \nabla_z \Phi_{\kappa}(z, y)\Vert_{L^p(\Omega)} \Vert Y\Vert_{L^{p'}(\Omega)} 
$$
where $\frac{1}{p} +\frac{1}{p'}=1$. As for $p<\frac{3}{2}$, we have $\Vert \nabla_z \Phi_{\kappa}(z, y)\Vert_{L^p(\Omega)} <\infty$, we need only to estimate $\Vert Y\Vert_{L^{p'}(\Omega)}$ for $p'>3$.
We know also, by Sobolev embeddings, that $\Vert Y\Vert_{L^{p'}(\Omega)} \leq C \Vert Y\Vert_{H^{\alpha}(\Omega)}$ for $\frac{1}{p'}=\frac{1}{2}-\frac{\alpha}{3}$, i.e. $\alpha >\frac{1}{2}$. Then
$ \Vert Y\Vert_{L^{p'}(\Omega)} =O(a^{(s-1)\frac{1-\alpha}{2}}), \alpha >\frac{1}{2}$ and then
$$
 \vert \nabla Y(z)\vert =O(1)+O(a^{1-s +(s-1)\frac{1-\alpha}{2}})=O(a^{(1-s)\frac{1+\alpha}{2}}),\; \alpha >\frac{1}{2}, \mbox{ or }
$$
 
 \begin{equation}\label{nabla-Y-estimate} 
\|\nabla{Y}\|_{L^\infty(\Omega)}\,=\,O(a^{\frac{3-\eta}{4}(1-s)}) 
\end{equation}
where $\eta$ is arbitrary positive quantity.
\end{proof}
\bigskip

Based on Lemma \ref{lem-Est-of-Y}, we derive the following estimates of $A$,$B$ and $D$.
\begin{lemma}\label{Estimate-A-B} 
The quantities $A$,$B$ and $C$ enjoy the following estimates
\begin{eqnarray}
\vert A \vert\, =\,O(a^{\frac{3-s}{6}}+a^{\frac{(9-3\eta)-(5-3\eta)s}{12}}),   \vert B\vert \,=\,O(a^\frac{3+s}{6}) \mbox{ and } \vert D\vert \,=\,O(a^\frac{3-s}{6})
\end{eqnarray}
where $\eta$ is arbitrary positive quantity.
\end{lemma}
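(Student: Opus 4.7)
The plan is to treat $A$, $B$, and $D$ one at a time, combining the bounds $\|Y\|_{L^\infty(\Omega)} = O(a^{(1-s)/2})$ and $\|\nabla Y\|_{L^\infty(\Omega)} = O(a^{(3-\eta)(1-s)/4})$ from Lemma~\ref{lem-Est-of-Y} with the geometric structure of the cube decomposition $\{\Omega_j\}$ and of the residual layer $E := \Omega \setminus \bigcup_{j=1}^M \Omega_j$.

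The estimate of $B$ is direct: since $\Omega_m$ is a cube of diameter $O(a^{s/3})$ containing $z_m$, the Newtonian integral over a ball centred at the singularity satisfies $\int_{\Omega_m}|\Phi_\kappa(z_m,y)|\,dy \lesssim a^{2s/3}$, and multiplying by $\|Y\|_{L^\infty}$ yields $|B| = O(a^{(3+s)/6})$. For $D$, I would use that $E$ is contained in a tubular neighbourhood of $\partial\Omega$ of thickness $O(a^{s/3})$. Parameterising $E$ by $y = \xi + t\,\nu(\xi)$ with $\xi \in \partial\Omega$ and $t \in [0, c\,a^{s/3}]$, and splitting the surface integration into the near- and far-regimes relative to the projection of $z_m$ onto $\partial\Omega$, a direct calculation gives $\int_E |\Phi_\kappa(z_m,y)|\,dy \lesssim a^{s/3}$ uniformly in $z_m$; combined with $\|Y\|_{L^\infty}$ this produces $|D| = O(a^{(3-s)/6})$.

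The estimate of $A$ is the technical core. Setting $f(y) := \Phi_\kappa(z_m,y)\,Y(y)$ and writing $A = \sum_{j\neq m}\int_{\Omega_j}[f(z_j)-f(y)]\,dy$, I would split the sum into a \emph{far} part $|z_m-z_j|\geq 2\,a^{s/3}$ and a \emph{near} part consisting of the $O(1)$ cubes adjacent to $\Omega_m$. For far cubes $f$ is smooth on $\Omega_j$ and the mean value theorem yields $|f(z_j)-f(y)| \leq \mathrm{diam}(\Omega_j)\,\|\nabla f\|_{L^\infty(\Omega_j)} \lesssim a^{s/3}\bigl(|z_m-z_j|^{-2}\|Y\|_\infty + |z_m-z_j|^{-1}\|\nabla Y\|_\infty\bigr)$. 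Using that the number of cube centres in a dyadic shell $\{r \leq |z_m-z_j| \leq 2r\}$ is $\sim r^3/a^s$, one obtains $\sum_{j\text{ far}} |z_m-z_j|^{-p} \lesssim a^{-s}$ for $p\in\{1,2\}$; multiplying by $|\Omega_j| = a^s$ I expect
\[
|A|_{\mathrm{far}} \;\lesssim\; a^{s/3}\bigl(\|Y\|_\infty + \|\nabla Y\|_\infty\bigr) \;=\; O\bigl(a^{(3-s)/6}\bigr) + O\bigl(a^{((9-3\eta)-(5-3\eta)s)/12}\bigr),
\]
matching exactly the two terms in the lemma's statement. For the $O(1)$ near cubes, I would bound $|\int_{\Omega_j}f(y)\,dy|$ by $\|Y\|_\infty\int_{\Omega_j}|\Phi_\kappa(z_m,y)|\,dy = O(a^{(3+s)/6})$ via the local integrability of the Newtonian kernel on a cube of size $a^{s/3}$, and $|\Phi_\kappa(z_m,z_j)\,Y(z_j)|\,|\Omega_j|$ by $O(a^{(1+s)/2-t})$ using the minimum-distance hypothesis $|z_m-z_j|\geq c\,a^t$; both are then dominated by $a^{(3-s)/6}$ in the admissible range of parameters imposed by \eqref{conditions-P1-1}.

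The main obstacle will be the near-cube part: on cubes adjacent to $\Omega_m$ the gradient $\nabla f$ is unbounded because the singularity of $\Phi_\kappa(z_m,\cdot)$ can be approached from $\Omega_j$, so the uniform mean value argument fails. The remedy is to exploit the weak (integrable) singularity of $\Phi_\kappa$ on a small cube and the minimum-separation $a^t$ directly, then to verify that the resulting contributions are absorbed by the leading term $a^{(3-s)/6}$ under the admissible range of $(s,t)$.
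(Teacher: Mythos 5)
Your treatment of $B$, of $D$, and of the \emph{far} part of $A$ is correct and reproduces the paper's orders, with cosmetic differences: you use dyadic shells where the paper uses unit-thick concentric layers (same count $\sim r^3/a^s$), and for $D$ you use a tubular-neighbourhood parametrization of $\Omega\setminus\bigcup\Omega_j$ where the paper counts planar layers near the projection of $z_m$; both yield $\int|\Phi_\kappa(z_m,\cdot)|\lesssim a^{s/3}$ and then $|D|=O(a^{(3-s)/6})$.

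The genuine gap is in the \emph{near}-cube part of $A$. You propose to decouple the integrand, bounding $\bigl|\int_{\Omega_j}\Phi_\kappa(z_m,y)Y(y)\,dy\bigr|$ and $|\Phi_\kappa(z_m,z_j)Y(z_j)|\,|\Omega_j|$ separately, and for the second you invoke only the global minimum separation $|z_m-z_j|\geq c\,a^t$, producing the extra term $O(a^{(1+s)/2-t})$. This term is \emph{not} uniformly absorbed by $a^{(3-s)/6}$ over the range \eqref{conditions-P1-1}: the inequality $a^{(1+s)/2-t}\lesssim a^{(3-s)/6}$ requires $t\le 2s/3$, and admissible pairs with $t>2s/3$ do exist (for instance $(s,t)=(0.89,\,0.7)$ satisfies all constraints in \eqref{conditions-P1-1} but $2s/3\approx0.59<0.7$). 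Thus, as written, your near-cube remedy does not close the argument, and the Lemma's claimed $t$-free bound for $A$ is not recovered. The paper avoids this by never decoupling: it applies the same first-order Taylor / mean-value cancellation $f(z_m,z_j)-f(z_m,y)=(y-z_j)\cdot R_j$ \emph{on every layer including $n=1$}, and it crucially uses the cube-geometry separation $\operatorname{dist}(z_m,\Omega_j)\gtrsim n\,(a^{s/3}-a/2)$ — i.e.\ $\gtrsim a^{s/3}$ for the adjacent cubes, with $\Omega_m$, and hence $z_m$, placed at the centre of the Rubik arrangement — rather than the weaker $a^t$. With that separation the $n=1$ contribution is of the same order $a^{(3+s)/6}$ as your Newtonian-kernel part, and no $t$-dependent term arises. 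To repair your argument you should either retain the MVT on the adjacent cubes with the $a^{s/3}$ separation supplied by the cube decomposition, or, if you keep the split, replace $|z_m-z_j|\geq c\,a^t$ by $|z_m-z_j|\geq c\,a^{s/3}$ (which also turns your second near-cube term into $O(a^{(3+s)/6})$).
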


\begin{proof}
To evaluate A and D for every $z_m$, we need to perform the sum of the integrals when the $\Omega_j$'s are located inside $\Omega$ (for A) and when they are intersecting $\partial \Omega$ 
(for D). We proceed as follows in counting these $\Omega_j$'s:
\begin{itemize}

\item Evaluation of A. We start by distinguishing between the near and far-by obstacles to each obstacle. 
Let us suppose that these cubes are arranged in a cuboid, for example Rubik's cube, 
in different layers such that the total cubes upto the $n^{th}$
layer consists $(2n+1)^3$ cubes for $n=0,\dots,[a^{-\frac{s}{3}}]$, and $\Omega_m$ is located on the center. Hence the number of obstacles 
located in the $n^{th}$, $n\neq0$ layer  will have almost $K_{\max}[(2n+1)^3-(2n-1)^3]$ elements and their distance from $D_m$ is more than ${n}\left(\left(\frac{[K(z_m)+1]}{K(z_m)+1}\right)^\frac{1}{3}a^\frac{s}{3}-\frac{a}{2}\right)$. 

\item Evaluation of D. The corresponding $\Omega_j$'s are  touching the surface $\partial \Omega$. 
We distinguish two situations. 
\begin{enumerate} 
 \item In the first situation, the point $z_m$ is away from $\partial \Omega$ so that $\Phi_{\kappa}(z_m, y)$ is bounded in $y$ on $\partial \Omega$.
 \item In the second situation, the point $z_m$ is close to $\partial \Omega$, i.e. when $z_m$ is located near one of the $\Omega_j$'s touching $\partial \Omega$. 
Since the radius $a$ is small, and hence the $\Omega_j$'s, the point $z_m$ is close to $\partial \Omega$. As the function $\Phi_{\kappa}(z_m, y)$ is singular only for $y$ near $z_m$, we split the estimate of D into two parts.
One part involves the $\Omega_j$'s close to $z_m$,that we denote by $N_m$, and the other part involves the reminder, that we denote by $F_m$. The latter part can be estimated as in the first situation (i.e. 1. above). 
To estimate the former part (which is the worst part), we observe that, as $a$ is small enough, the $\Omega_j$'s close to $z_m$ are located near a small part of $\partial \Omega$ that we can assume to be flat, recalling that
$\partial \Omega$  is smooth. 
The point $z_m$ being close to this flat part, we divide this part into concentric layers as in the case when we evaluated A. But now, up to the $n^{th}$ layer at most $(2n+1)^2$ cubes are intersecting the surface,
 for  $n=0,\dots,[a^{-\frac{s}{3}}]$. As a consequence, the number  of obstacles 
located in the $n^{th}$, $n\neq0$ layer  will have at most $K_{\max}[(2n+1)^2-(2n-1)^2]$ elements and their distance from $D_m$ is more than 
${n}\left(\left(\frac{[K(z_m)+1]}{K(z_m)+1}\right)^\frac{1}{3}a^\frac{s}{3}-\frac{a}{2}\right)$. 

\end{enumerate}

\end{itemize}

Let us set $f(z_m,y):=\Phi_\kappa(z_m,y) Y(y)$. Using Taylor series, we can write
  $$f(z_m,y)-f(z_m,z_l)=(y-z_l)R_l(z_m,y),$$
  with 
  \begin{eqnarray}\label{taylorremind1}
   R_l(z_m,y)
   &=&\int_0^1\nabla_y f(z_m,y-\beta(y-z_l))\,d\beta\nonumber\\
   &=&\int_0^1\left[\nabla_y\Phi_\kappa(z_m,y-\beta(y-z_l))\right] Y(y-\beta(y-z_l))\,d\beta\nonumber\\
   &&+\int_0^1\Phi_\kappa(z_m,y-\beta(y-z_l))\left[\nabla_y Y(y-\beta(y-z_l))\right]\,d\beta.
  \end{eqnarray}
  From the explicit form of $\Phi_\kappa$, we have $\nabla_y\Phi_\kappa(x,y)=\Phi_\kappa(x,y)\left[\frac{1}{|x-y|}-i\kappa\right]\frac{x-y}{|x-y|}, {x}\neq{y}$.   
  Hence, we obtain that
  \begin{itemize}
   \item for $l\neq m$ , we have
   \begin{eqnarray*}
   \vert\Phi_\kappa(z_m,y-\beta(y-z_l))\vert\leq\frac{1}{4\pi{n\left(a^\frac{s}{3}-\frac{a}{2}\right)}},& \mbox{ and }&\vert\nabla_y\Phi_\kappa(z_m,y-\beta(y-z_l))\vert\leq\frac{1}{4\pi{n\left(a^\frac{s}{3}-\frac{a}{2}\right)}}\left[\frac{1}{{n\left(a^\frac{s}{3}-\frac{a}{2}\right)}}+\kappa\right].
  \end{eqnarray*}
 These values give us
 \begin{eqnarray}\label{taylorremind-effect-F}
 \vert R_l(z_m,y) \vert
   &\leq&\frac{1}{{n\left(a^\frac{s}{3}-\frac{a}{2}\right)}}\left(\left[\frac{1}{{n\left(a^\frac{s}{3}-\frac{a}{2}\right)}}+\kappa\right]\int_0^1 {\vert Y(y-\beta(y-z_l))\vert}d\beta+\int_0^1{\vert\nabla_y Y(y-\beta(y-z_l))\vert}d\beta\right).\nonumber\\
   &\leq&\frac{c_1}{{n\left(a^\frac{s}{3}-\frac{a}{2}\right)}}\left(\left[\frac{1}{{n\left(a^\frac{s}{3}-\frac{a}{2}\right)}}+\kappa\right]\|{Y}\|_{L^\infty(\Omega)}+c_5\|\nabla{Y}\|_{L^\infty(\Omega)}\right).
  \end{eqnarray}
  
  Hence, for $l\neq m$ using \eqref{taylorremind-effect-F} we get the following estimate for $A$;
  \begin{eqnarray}\label{integralonomega-subelements-abs}
|A|&=&\left\vert{\sum_{\substack{j=1 \\ j\neq m}}^{M}\int_{\Omega_j} [\Phi_\kappa(z_m,z_j)  Y(z_j) - \Phi_\kappa(z_m,y)  Y(y)] dy} \right\vert\nonumber \\
&\leq&{\sum_{\substack{j=1 \\ j\neq m}}^{[a^{-s}]}\left\vert\int_{\Omega_j} [\Phi_\kappa(z_m,z_j)  Y(z_j) - \Phi_\kappa(z_m,y)  Y(y)] dy\right\vert} ~~~ (\mbox{ as } M \leq [a^{-s}]  ~)\nonumber \\
&\leq&\sum_{n=1}^{[a^{-\frac{s}{3}}]}[(2n+1)^3-(2n-1)^3]a^\frac{s}{3}a^s\frac{c_1}{{n\left(a^\frac{s}{3}-\frac{a}{2}\right)}}\left(\left[\frac{1}{{n \left(a^\frac{s}{3}-\frac{a}{2}\right)}}+\kappa\right]\|{Y}\|_{L^\infty(\Omega)}+c_5\|\nabla{Y}\|_{L^\infty(\Omega)}\right) \nonumber \\
&=&O\left(\sum_{n=1}^{[a^{-\frac{s}{3}}]}[24n^2+2]a^{\frac{4}{3}s}\left[\frac{1}{n^2}a^{-\frac{2}{3}s}\|{Y}\|_{L^\infty(\Omega)}+\frac{1}{n}a^{-\frac{1}{3}s}\|\nabla{Y}\|_{L^\infty(\Omega)}\right]\right)\nonumber\\
&=&O\left(\sum_{n=1}^{[a^{-\frac{s}{3}}]}[24n^2+2]a^{\frac{4}{3}s}\left[\frac{1}{n^2}a^{-\frac{2}{3}s}a^{\frac{1-s}{2}}+\frac{1}{n}a^{-\frac{1}{3}s}a^{\frac{3-\eta}{4}(1-s)}\right]\right)\nonumber\\
&=&O\left(a^{\frac{3+s}{6}}\sum_{n=1}^{[a^{-\frac{s}{3}}]}[24+\frac{2}{n^2}]+a^{\frac{(1+\eta)s+(3-\eta)}{4}}\sum_{n=1}^{[a^{-\frac{s}{3}}]}[24n+\frac{2}{n}]\right)\nonumber\\
&=&O\left(a^{\frac{3+s}{6}}a^{-\frac{s}{3}}+a^{\frac{(1+\eta)s+(3-\eta)}{4}}a^{-\frac{2}{3}s}\right)\nonumber\\
&=&O(a^{\frac{3-s}{6}}+a^{\frac{(9-3\eta)-(5-3\eta)s}{12}}).
 \end{eqnarray}

 \item Let us estimate the integral value $\int_{\Omega_m} \Phi_\kappa(z_m,y) Y(y) dy$. We have the following estimates:
\begin{eqnarray}\label{estmatemthint-effe-acc}
 \left\vert\int_{\Omega_m} \Phi_\kappa(z_m,y)Y(y) dy\right\vert
 &\leq&\|Y\|_{L^\infty(\Omega)}\left\vert\int_{\Omega_m} \Phi_\kappa(z_m,y) dy\right\vert\nonumber\\
  &\leq&c_1a^{\frac{1-s}{2}}\left\vert\int_{\Omega_m} \Phi_\kappa(z_m,y) dy\right\vert\nonumber\\
 &\leq&\frac{1}{4\pi}c_1a^{\frac{1-s}{2}}\left(\int_{B(z_m,r)} \frac{1}{|z_m-y|} dy+\int_{\Omega_m\setminus B(z_m,r)} \frac{1}{|z_m-y|} dy\right)\nonumber\\
 &&{(\frac{1}{|z_m-y|} \in L^1(B(z_m,r)), r<\frac{1}{2}a^{\frac{s}{3}}) }\nonumber\\
 &\leq&\frac{1}{4\pi}c_1a^{\frac{1-s}{2}}\left(\sigma(\mathbb{S}^{3-1})\int_0^r \frac{1}{s}s^{3-1}ds+\frac{1}{r}Vol(\Omega_m\setminus B(z_m,r))\right)\nonumber\\
 &=&\frac{1}{4\pi}c_1a^{\frac{1-s}{2}}\underbrace{\left(2\pi r^2 +\frac{1}{r}\left[a^s-\frac{4}{3}\pi r^3\right]\right)}_{=:lm(r,a)}\nonumber\\
 &\leq&\frac{1}{4\pi}c_1a^{\frac{1-s}{2}}~lm(r^c,a),\nonumber\\
 &&\mbox{ $r^c$ is the value of $r$ where $lm(r,a)$ attains maximum}.\nonumber\\
 &&{\partial_r lm(r,a)=0 \Rightarrow 4\pi r-\frac{1}{r^2}a^s-\frac{8}{3}\pi r=0\Rightarrow  r_c=\left(\frac{3}{4}\pi a^s\right)^\frac{1}{3}}\nonumber\\
 &&{\begin{array}{ccc}
                    lm(r_c,a)&=&2\pi\left(\frac{3}{4}\pi\right)^\frac{2}{3} a^{\frac{2}{3}s}+\left(\frac{4}{3\pi} \right)^\frac{1}{3}a^{\frac{2}{3}s}-\frac{4}{3}\pi\left(\frac{3}{4}\pi \right)^\frac{2}{3}a^{\frac{2}{3}s}\\
                    &&\\
                    &=&\left[\frac{2}{3\pi}\left(\frac{3}{4}\pi\right)^\frac{2}{3}+\left(\frac{4}{3\pi} \right)^\frac{1}{3} \right]a^{\frac{2}{3}s}=\frac{3}{2}\left(\frac{4}{3\pi} \right)^\frac{1}{3}a^{\frac{2}{3}s}
                   \end{array}
}\nonumber\\ 
&=&\frac{3}{8\pi}c_1a^{\frac{1-s}{2}}\left(\frac{4}{3\pi} \right)^\frac{1}{3}a^{\frac{2}{3}s}=O(a^{\frac{3+s}{6}}).
\end{eqnarray}
\item Let us estimate $D$. Recall that $\vert \Omega \setminus \cup_{j=1}^{M}\Omega_j\vert$, and hence $\vert F_m\vert$, is of the order $a^{\frac{1}{3}s}$ as $a\rightarrow 0$.
 \begin{eqnarray}\label{integralonomega-subelements-abs-D}
|D|
&=&\left\vert\int_{\Omega\setminus\cup_{j=1}^{M}\Omega_j} \Phi_\kappa(z_m,y)\bar{C}_0(y)  Y(y) dy\right\vert\nonumber\\
&=&\left\vert\int_{N_m} \Phi_\kappa(z_m,y)\bar{C}_0(y)  Y(y) dy\right\vert +\left\vert\int_{F_m} \Phi_\kappa(z_m,y)\bar{C}_0(y)  Y(y) dy\right\vert \nonumber\\
&\leq&{\sum_{l=1}^{[a^{-\frac{2}{3}s}]}\|Y\|_{L^\infty(\Omega)}\|\bar{C}_0\|_{L^\infty(\Omega)}}\frac{1}{d^\prime_{ml}}|\Omega_l| + \|\Phi_\kappa(z_m, \cdot)\|_{L^\infty(F_m)} \|\bar{C}_0\|_{L^\infty(\Omega)} 
\|Y\|_{L^\infty(\Omega)} \vert F_m\vert \nonumber \\
&\leq&\|Y\|_{L^\infty(\Omega)}\|\bar{C}_0\|_{L^\infty(\Omega)}a^s{\sum_{l=1}^{[a^{-\frac{2}{3}s}]}}\frac{1}{d^\prime_{ml}} + C \|Y\|_{L^\infty(\Omega)} \vert F_m \vert\; ~~ (\mbox{ as } \Phi_\kappa (z_m, \cdot) \mbox{ is not singular in } F_m )\nonumber \\
&\leq&\|Y\|_{L^\infty(\Omega)}\|\bar{C}_0\|_{L^\infty(\Omega)}a^s{\sum_{l=1}^{[a^{-\frac{1}{3}s}]}}[(2n+1)^2-(2n-1)^2]\left(\frac{1}{n\left(2^{-\frac{1}{3}}a^\frac{s}{3}-\frac{a}{2}\right)}\right)+
C a^{\frac{1-s}{2}}a^{\frac{1}{3}s} \nonumber \\
&=&\|Y\|_{L^\infty(\Omega)}\|\bar{C}_0\|_{L^\infty(\Omega)}a^s O\left({\sum_{l=1}^{[a^{-\frac{1}{3}s}]}}[(2n+1)^2-(2n-1)^2]\left(\frac{1}{n\left(2^{-\frac{1}{3}}a^\frac{s}{3}-\frac{a}{2}\right)}\right)\right) + Ca^{\frac{3-s}{6}}\nonumber \\
&=&\|Y\|_{L^\infty(\Omega)}\|\bar{C}_0\|_{L^\infty(\Omega)}a^s O\left(a^{-\frac{2}{3}s}\right) + Ca^{\frac{3-s}{6}} \nonumber \\
&=\atop\eqref{Eq-Est-of-Y}&O\left(a^sa^{\frac{1-s}{2}}a^{-\frac{2}{3}s}\right)\quad=\quad O(a^{\frac{3-s}{6}}).
 \end{eqnarray}
\end{itemize}
\end{proof}

From these estimates of $A$, $B$ and $D$, we deduce that:

\begin{eqnarray}
Y(z_m) &+&\sum_{\substack{j=1 \\ j\neq m}}^{M} \Phi_\kappa(z_m,z_j)\bar{C}_j Y(z_j) a\\
  &=&U^{i}(z_m, \theta)+O(a^{\frac{3-s}{6}}+a^{\frac{(9-3\eta)-(5-3\eta)s}{12}})a^{1-s} +O\left(a^{\frac{3+s}{6}}\right)a^{1-s}+O\left(a^{\frac{3-s}{6}}\right)a^{1-s}\nonumber\\
   &=&U^{i}(z_m, \theta) +O\left(a^{\frac{9-5s}{6}}+a^\frac{9-7s}{6}+a^{\frac{(21-3\eta)-(17-3\eta)s}{12}}\right)\nonumber\\
    &=&U^{i}(z_m, \theta) +O\left(a^\frac{9-7s}{6}+a^{\frac{(21-3\eta)-(17-3\eta)s}{12}}\right).\label{fracqcfracmain-effect-int-3}
\end{eqnarray}

Taking the difference between \eqref{fracqcfracmain-effect} and \eqref{fracqcfracmain-effect-int-3} produce the algebraic system
\begin{eqnarray}\label{fracqcfracmain-effect-int-4}
 (Y_m-Y(z_m)) +\sum_{\substack{j=1 \\ j\neq m}}^{M} \Phi_\kappa(z_m,z_j)\bar{C}_j (Y_j-Y(z_j)) a 
 &=&O\left(a^\frac{9-7s}{6}+a^{\frac{(21-3\eta)-(17-3\eta)s}{12}}\right) .\nonumber
\end{eqnarray}

Comparing this system with \eqref{fracqcfracmain-effect} and by using Lemma \ref{Mazyawrkthm-effect}, we obtain the estimate

\begin{eqnarray}\label{mazya-fnlinvert-small-ac-3-effect-dif}
 \sum_{m=1}^{M}(Y_m-Y(z_m))&=&O\left(Ma^\frac{9-7s}{6}+Ma^{\frac{(21-3\eta)-(17-3\eta)s}{12}}\right).
\end{eqnarray}

Recalling that $d=a^t,\,M=O(a^{-s})$ with  $t,s>0$, we have the following approximation of the far-field from the Foldy-Lax asymptotic expansion \eqref{x oustdie1 D_m farmain-recent**} and 
from the definitions $Y_m:=-\frac{Q_m}{C_m}$ and $C_m:=\bar{C}_m a$, for $m=1,\dots,M$:
\begin{eqnarray}\label{x oustdie1 D_m farmain-recent**-effect}
U^\infty(\hat{x},\theta) &=&-\bar{C}~  \sum_{j=1}^{M}e^{-i\kappa\hat{x}\cdot z_j}Y_ja+O\left({a^{2-s}+a^{3-2s-t}+ a^{4-3s-t}}+{a^{3-s-2t}}+{a^{4-2s-3t}}\right).
\end{eqnarray}
Consider the far-field 
\begin{eqnarray}\label{acoustic-farfield-effect}
U^\infty_{a}(\hat{x},\theta) &=&-  a^{1-s}~\int_{\Omega} e^{-i\kappa\hat{x}\cdot{y}} \bar{C}_0(y)~ Y(y) dy.
\end{eqnarray}
Taking the difference between \eqref{acoustic-farfield-effect} and \eqref{x oustdie1 D_m farmain-recent**-effect} gives us;
\begin{eqnarray}\label{acoustic-difference-farfield-effect}
U^\infty(\hat{x},\theta)- U^\infty_{a}(\hat{x},\theta) &=& a^{1-s}\left[\int_{\Omega\setminus\cup_{j=1}^{M}\Omega_j} e^{-i\kappa\hat{x}\cdot{y}}\bar{C}_0 (y)Y(y) dy+\sum_{j=1}^{M} \int_{\Omega_j} e^{-i\kappa\hat{x}\cdot{y}}\bar{C}_0 (y)Y(y) dy- \sum_{j=1}^{M}e^{-i\kappa\hat{x}\cdot z_j}\bar{C}Y_ja^s \right]\nonumber\\ \nonumber
 &&
 +O\left({a^{2-s}+a^{3-2s-t}+ a^{4-3s-t}}+{a^{3-s-2t}}+{a^{4-2s-3t}}\right) \nonumber\\
&=& \bar{C} a^{1-s}\left[\sum_{j=1}^{M} \int_{\Omega_j} [e^{-i\kappa\hat{x}\cdot{y}}Y(y) - e^{-i\kappa\hat{x}\cdot z_j}Y(z_j)] \,dy\right]\nonumber\\ 
&&+\,\bar{C} a\left[\sum_{j=1}^{M}e^{-i\kappa\hat{x}\cdot{z_j}}[ Y(z_j)-Y_j] \right]+a^{1-s}\int_{\Omega\setminus\cup_{j=1}^{M}\Omega_j} e^{-i\kappa\hat{x}\cdot{y}}\bar{C}_0 (y)Y(y) dy\nonumber\\ 
 &&
 +O\left({a^{2-s}+a^{3-2s-t}+ a^{4-3s-t}}+{a^{3-s-2t}}+{a^{4-2s-3t}}\right) \nonumber\\
 &\substack{= \\ \eqref{mazya-fnlinvert-small-ac-3-effect-dif} }& 
 \bar{C} a^{1-s}\left[\sum_{j=1}^{M} \int_{\Omega_j} [e^{-i\kappa\hat{x}\cdot{y}}Y(y) - e^{-i\kappa\hat{x}\cdot z_j}Y(z_j)] \,dy\right]\nonumber\\
&&+\bar{C}\,a\,O\left(Ma^\frac{9-7s}{6}+Ma^{\frac{(21-3\eta)-(17-3\eta)s}{12}}\right)+O(a^{1-s}\|\bar{C}_0\|_{L^\infty{\Omega}}\|Y\|_{L^\infty{\Omega}}\vert\Omega\setminus\cup_{j=1}^{a^{-s}}\Omega_j \vert)\nonumber\\
   &&+O\left({a^{2-s}+a^{3-2s-t}+ a^{4-3s-t}}+{a^{3-s-2t}}+{a^{4-2s-3t}}\right)\nonumber\\
    &\substack{= \\ \eqref{Eq-Est-of-Y} }& 
 \bar{C} a^{1-s}\left[\sum_{j=1}^{M} \int_{\Omega_j} [e^{-i\kappa\hat{x}\cdot{y}}Y(y) - e^{-i\kappa\hat{x}\cdot z_j}Y(z_j)] \,dy\right]\nonumber\\
&&+\bar{C}\,a\,O\left(Ma^\frac{9-7s}{6}+Ma^{\frac{(21-3\eta)-(17-3\eta)s}{12}}\right)+O(a^{\frac{9-7s}{6}} )\nonumber\\
   &&+O\left({a^{2-s}+a^{3-2s-t}+ a^{4-3s-t}}+{a^{3-s-2t}}+{a^{4-2s-3t}}\right)
 \end{eqnarray}

By following the similar computations as it was done in (\ref{taylorremind1}-\ref{integralonomega-subelements-abs}), we can estimate the quantity 
'$\sum_{j=1}^{M}\int_{\Omega_j} \left[e^{-i\kappa\hat{x}\cdot{y}} Y(y) - e^{-i\kappa\hat{x}\cdot z_j}Y(z_j)\right]dy$'  by $O(a^{\frac{3-s}{6}}+a^{\frac{(9-3\eta)-(5-3\eta)s}{12}})$. Indeed,

 \begin{eqnarray}\label{integralonomega-subelements-exp-abs}
{\sum_{\substack{j=1 \\ j\neq m}}^{M}\left\vert\int_{\Omega_j} [e^{-i\kappa\hat{x}\cdot z_j}  Y(z_j) - e^{-i\kappa\hat{x}\cdot y} Y(y)] dy\right\vert} 
&\leq&\sum_{n=1}^{[a^{-\frac{s}{3}}]}[(2n+1)^3-(2n-1)^3]a^\frac{s}{3}a^s{c_1}\left(\kappa\|{Y}\|_{L^\infty(\Omega)}+c_5\|\nabla{Y}\|_{L^\infty(\Omega)}\right) \nonumber \\
&=&O\left(\sum_{n=1}^{[a^{-\frac{s}{3}}]}[24n^2+2]a^{\frac{4}{3}s}\left[\kappa\|{Y}\|_{L^\infty(\Omega)}+\|\nabla{Y}\|_{L^\infty(\Omega)}\right]\right)\nonumber\\
&=&O\left(\sum_{n=1}^{[a^{-\frac{s}{3}}]}[24n^2+2]a^{\frac{4}{3}s}\left[\kappa\,a^{\frac{1-s}{2}}+a^{\frac{3-\eta}{4}(1-s)}\right]\right)\nonumber\\
&=&O\left(a^{-{s}}a^{\frac{4}{3}s}\left[\kappa\,a^{\frac{1-s}{2}}+a^{\frac{3-\eta}{4}(1-s)}\right]\right)\nonumber\\
&=&O(a^{\frac{3-s}{6}}+a^{\frac{(9-3\eta)-(5-3\eta)s}{12}}).
 \end{eqnarray}

 We obtain
 \begin{eqnarray}\label{acoustic-difference-farfield-effect-1}
 U^\infty(\hat{x},\theta)-U^\infty_{a}(\hat{x},\theta) 
&=&O\Bigg({a^\frac{9-7s}{6}+a^{\frac{(21-3\eta)-(17-3\eta)s}{12}}}+{a^{\frac{15-13s}{6}}}+{ a^{\frac{(33-3\eta)-(29-3\eta)s}{12}}}\nonumber\\
&&\qquad +{a^{2-s}+a^{3-2s-t}+ a^{4-3s-t}}+{a^{3-s-2t}}+{a^{4-2s-3t}}\Bigg)\nonumber\\
    \end{eqnarray}
    
 Due to the fact that $d=a^t$ is the minimum distance between the small bodies,  we will have ${t\geq\frac{s}{3}}$. 
Otherwise, we will have contradiction as the volume of the collection of obstacles, which is of order $a^{-s}\left(\frac{a}{2}+\frac{d}{2}\right)^3$, explodes.
 From \eqref{conditions-1pr}, we also have that 
  \begin{eqnarray}\label{conditions-gem} 
     \frac{s}{3}\leq{t}<{1}&\mbox{ and } 0\leq{s}<\min\left\{2,{2-t}, {3-3t},\frac{1}{2}(3-t)\right\}.
  \end{eqnarray}
  We observe that for the particular case where $s=1$, we have 
    $$ U^\infty(\hat{x},\theta)-U^\infty_{a}(\hat{x},\theta) =O\left({ a^\frac{1}{3}}+{a^{1-t}}+{a^{2-3t}}\right)=O\left({ a^\frac{1}{3}}+{a^{2-3t}}\right),\,\frac{1}{3}\leq{t}<\frac{2}{3}.$$
  Under the general conditions
  \begin{eqnarray}\label{conditions-gem-1} 
     \frac{s}{3}\leq{t}<{1}&\mbox{ and } 0\leq{s}<\min\left\{{\frac{(33-3\eta)}{(29-3\eta)}},{2-t}, {3-3t},\frac{1}{2}(3-t),\frac{4-t}{3},\frac{4-3t}{2}\right\}
  \end{eqnarray}
  we have
   \begin{eqnarray}\label{acoustic-difference-farfield-effect-2}
 U^\infty(\hat{x},\theta)-U^\infty_{a}(\hat{x},\theta) 
&=&O\left({ a^{\frac{(33-3\eta)-(29-3\eta)s}{12}}}+{a^{4-3s-t}}+{a^{4-2s-3t}}\right)\nonumber\\
&=&\left\{\begin{array}{ccc}
           O\left({ a^{\frac{(33-3\eta)-(29-3\eta)s}{12}}}+{a^{4-2s-3t}}\right)&\mbox{ if }& s<2t;\\
             O\left({ a^{\frac{(33-3\eta)-(29-3\eta)s}{12}}}+{a^{\frac{8-7s}{2}}}\right)& \mbox{ if } &s=2t;\\
               O\left({ a^{\frac{(33-3\eta)-(29-3\eta)s}{12}}}+{a^{4-3s-t}}\right)& \mbox{ if }& s>2t;\\
          \end{array}
\right.\nonumber\\
    \end{eqnarray}
As $\eta$ can be taken arbitrary, we can rewrite this result as follows. Under the general conditions
  \begin{eqnarray}\label{conditions-gem-1-} 
     \frac{s}{3}\leq{t}<{1}&\mbox{ and } 0\leq{s}<\min\left\{(\frac{33}{29})_-,{2-t}, {3-3t},\frac{1}{2}(3-t),\frac{4-t}{3},\frac{4-3t}{2}\right\}
  \end{eqnarray}
  we have
   \begin{eqnarray}\label{acoustic-difference-farfield-effect-2-}
 U^\infty(\hat{x},\theta)-U^\infty_{a}(\hat{x},\theta) 
&=&O\left({ a^{\frac{(33-29s)_-}{12}}}+{a^{4-3s-t}}+{a^{4-2s-3t}}\right). \nonumber\\
    \end{eqnarray}
    For $\beta$, a real and positive number, we used the notation $\beta_-$ to describe the property $\alpha < \beta_-$if $\alpha  \leq \beta \; - r $ for some small $r>0$.

\subsection{Proof of Proposition \ref{P2}} \label{Step-2:Semiclassical-estimates}
\bigskip

\subsubsection{Reduction to a semiclassical type estimate}

Let us set $h:=a^{\frac{s-1}{2}}$, $V_0:=\bold{C}_0$, $u^i:=u^i\left(  x,\kappa\right):=e^{i\kappa\left(  x\cdot d\right)  }$ and $u^t:=U^t_a$, then the scattering problem 
(\ref{Umbounded-potentials-Th})-(\ref{Umbounded-potentials-BC})-(\ref{Umbounded-potentials-RC}) reduces to

 \begin{equation}\label{Umbounded-potentials}
(\Delta + \kappa^{2}-h^{-2}V_0)u^t=0 \mbox{ in }\mathbb{R}^{3},
\end{equation}
\begin{equation}\label{total-field}
u^t=u^s +e^{i\kappa x\cdot \theta}  
\end{equation}
\begin{equation}
\frac{\partial u^t_{s}}{\partial |x|}-i\kappa u^{s}=o\left(\frac{1}{|x|}\right), |x|\rightarrow\infty. \label{radiationc-D-1}
\end{equation}
 The solution of this scattering problem satisfies the following Lipmann-Schwinger integral equation%
\begin{equation}
u^t\left(  \cdot,\kappa\right)  =u^i\left(  \cdot,\kappa\right)  -\frac
{1}{h^{2}}\int_{\Omega}\Phi_{\kappa}\left(  \cdot,y\right)  V_{0}\,u^t\left(
y,\kappa\right)  \,dy\,. \label{Scatt_field_eq_Int}%
\end{equation}
Using
(\ref{Scatt_field_eq_Int}) and the large-$x$ behavior of $\Phi_{\kappa}\left(
x,y\right)  $, the radiation condition%
\begin{equation}
\left\vert x\right\vert \left(  \hat{x}\cdot\nabla-i\kappa\right)  u^s%
=o\left(  \frac{1}{\vert x\vert}\right)  \,. \label{radiation}%
\end{equation}
follows for the scattered field $u^s=u^t-u^i$. Convertly, any solution of (\ref{Scatt_field_eq_Int}) is a solution of  (\ref{Umbounded-potentials})-(\ref{total-field})-(\ref{radiationc-D-1}).
As the potential $h^{-2}V_0$ is real valued, then standard arguments based on the Fredholm alternative show that (\ref{Umbounded-potentials})-(\ref{total-field})-(\ref{radiationc-D-1}) has one and 
only one solution which is in $H^2_{loc}(\mathbb{R}^3)$, see \cite{C-K:1998} for instance.
\bigskip

The proof of Proposition \ref{P2} is reduced to the proof of the following property:

\begin{proposition}\label{Main-semi-classical-estimate-1}
Let
$\partial\Omega$ be of class $\mathcal{C}^{1,1}$. We have 
the following estimate%
\begin{equation}
\left\Vert \left.  u^t  \right\vert _{\partial
\Omega}\right\Vert _{H^{t}\left(  \partial\Omega\right)  }\leq K%
\,h^{1/2-t}\,,\quad t\in\left[  0,1/2\right]  \,,\label{rate_est_boundary-1}%
\end{equation}
with $K>0$.
\end{proposition}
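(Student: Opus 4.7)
The strategy follows the outline sketched in the introduction. Starting from the Lippmann--Schwinger equation (\ref{Scatt_field_eq_Int}) restricted to $\Omega$, I would rewrite it as
\begin{equation*}
(h^2 + R(0))u^t = h^2 u^i - (R(\kappa) - R(0))u^t \quad \text{in } \Omega,
\end{equation*}
where $R(0)$ is the Newtonian potential operator with kernel $V_0(y)\Phi_0(x,y)$, and the remainder $R(\kappa)-R(0)$ has a continuous (in fact smooth) kernel since the singularity of $\Phi_\kappa - \Phi_0$ at the diagonal cancels. The problem is reduced to inverting $h^2 + R(0)$ on $L^2(\Omega)$ while tracking the $h$-dependence in Sobolev norms.

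The key step is to diagonalize $R(0)$ and the $H^1(\Omega)$-norm simultaneously. I would show that $(\lambda,e)$ with $\lambda>0$ is an eigenpair of $R(0)$ if and only if $e$ solves $-\Delta e = \bar{C}\,\lambda^{-1} e$ in $\Omega$ with the nonlocal Steklov-type boundary condition $\partial_\nu e - S^{-1}(-\tfrac{1}{2}I + K)e = 0$ on $\partial\Omega$, where $S$ and $K$ are the single- and double-layer boundary integral operators. A variational formulation based on these boundary operators shows that the resulting operator is self-adjoint on $L^2(\Omega)$ with compact inverse, so $R(0)$ admits an $L^2$-orthonormal eigenbasis $\{e_n\}_{n\in\mathbb{N}}$ with positive eigenvalues $\lambda_n\searrow 0$. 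One then obtains the characterization
\begin{equation*}
H^1(\Omega) = \Bigl\{u \in L^2(\Omega) : \textstyle\sum_{n} \lambda_n^{-1}|(u,e_n)|^2 < \infty \Bigr\},
\end{equation*}
with equivalence of norms. Expanding $u^t = \sum_n c_n e_n$ with $c_n = g_n/(h^2+\lambda_n)$, $g := h^2 u^i - (R(\kappa)-R(0))u^t$, and exploiting the smoothing property of $R(\kappa)-R(0)$, the Galerkin bookkeeping in this basis yields $\|u^t\|_{H^1(\Omega)} = O(1)$ uniformly in $h$.

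To bootstrap to an $L^2$ gain, I would test the equation $(-\Delta - \kappa^2 + h^{-2}V_0)u^t = 0$ against $\bar u^t$ on $\Omega$ and use the positivity and lower bound of $V_0$ to obtain
\begin{equation*}
h^{-2}\!\int_\Omega V_0\,|u^t|^2 \,\le\, \|\nabla u^t\|_{L^2(\Omega)}^2 + \kappa^2\,\|u^t\|_{L^2(\Omega)}^2 + \bigl|\langle \partial_\nu u^t,\, u^t\rangle_{\partial\Omega}\bigr|,
\end{equation*}
where the boundary term is controlled via the normal-trace theorem by $\|u^t\|_{H^1(\Omega)}$. This delivers $\|u^t\|_{L^2(\Omega)} = O(h)$. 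Interpolation between $L^2(\Omega)$ and $H^1(\Omega)$ then gives $\|u^t\|_{H^\alpha(\Omega)} = O(h^{1-\alpha})$ for all $\alpha\in[0,1]$, and the trace theorem on the $C^{1,1}$ boundary (with $\alpha = t+1/2$, $t\in(0,1/2]$) produces
\begin{equation*}
\|u^t|_{\partial\Omega}\|_{H^t(\partial\Omega)} \,\lesssim\, \|u^t\|_{H^{t+1/2}(\Omega)} \,=\, O(h^{1/2-t}).
\end{equation*}
The endpoint $t=0$ is reached by a further interpolation on $\partial\Omega$ between the case $t=1/2$ (direct trace from $H^1(\Omega)$) and a marginally-better boundary estimate obtained by letting $\alpha \downarrow 1/2$.

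\textbf{Main obstacle.} The principal difficulty is the spectral theorem for $R(0)$: identifying the correct nonlocal boundary condition so that the associated $-\Delta$ is self-adjoint on $L^2(\Omega)$ with form domain exactly $H^1(\Omega)$, and then obtaining the uniform $H^1$-bound via Galerkin. Once this spectral description is secured, the chain \emph{Galerkin $H^1$ bound} $\Rightarrow$ \emph{$L^2$ gain} $\Rightarrow$ \emph{interpolation} $\Rightarrow$ \emph{trace} proceeds routinely, and the optimal rate $h^{1/2-t}$ emerges from the interpolation dimension count.
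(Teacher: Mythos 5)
Your proposal follows the same overall architecture as the paper's proof: the spectral decomposition of $R(0)$ via the equivalent eigenvalue problem $-\Delta e = \lambda^{-1}V_0\,e$ with the nonlocal boundary condition $\partial_\nu e = S^{-1}(-\tfrac12 I + K)e$, the characterization of $H^1(\Omega)$ through that eigenbasis, an $O(1)$ bound on $\Vert u^t\Vert_{H^1(\Omega)}$, an $O(h)$ gain in $L^2(\Omega)$ from integration by parts, and finally interpolation plus the trace theorem. However, the sentence "the Galerkin bookkeeping in this basis yields $\Vert u^t\Vert_{H^1(\Omega)}=O(1)$" hides a genuine gap, because this is precisely where the work is.

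From $c_n = g_n/(h^2+\lambda_n)$ with $g=h^2 u^i - P(\kappa)u^t$, the only estimate available in the $H^1$-norm alone is $\vert c_n\vert \le h^{-2}\vert g_n\vert$, i.e.\ $\Vert (h^2+R(0))^{-1}\Vert_{H^1\to H^1}=O(h^{-2})$. So a direct Galerkin estimate yields $\Vert u^t\Vert_{H^1}\lesssim \Vert u^i\Vert_{H^1} + h^{-2}\Vert P(\kappa)u^t\Vert_{H^1}$, which does not close, no matter how smoothing $P(\kappa)$ is, since $u^t$ reappears on the right with an $h^{-2}$ coefficient. The paper's argument closes only because it runs a \emph{pair} of coupled estimates in $H^1(\Omega)$ and its dual $(H^1(\Omega))'$: the resolvent $(h^2+R(0))^{-1}$ is uniformly bounded as a map $H^1\to (H^1)'$ (since $\lambda_n\vert c_n\vert^2\le \lambda_n^{-1}\vert g_n\vert^2$), and $P(\kappa)$ is bounded $(H^1)'\to H^1$ with a constant $C(\kappa)$. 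This yields $\Vert u^t\Vert_{(H^1)'}\le C h^2 + \delta\,\Vert u^t\Vert_{(H^1)'}$ with $\delta$ proportional to $C(\kappa)/V_0$, so one first needs the smallness condition (\ref{condition-kappa}) to obtain $\Vert u^t\Vert_{(H^1)'}=O(h^2)$, and only then the $H^1$ estimate feeds back as $O(1)+h^{-2}\cdot O(h^2)=O(1)$. Moreover, the smallness restriction on $\kappa$ must then be removed; the paper does this by the rescaling $h\mapsto \varrho h$, $V_0\mapsto \varrho^2 V_0$. Neither the dual-norm bootstrap nor the rescaling appears in your sketch, and without them the $H^1$-bound is not obtained. (A lesser issue: your route to the endpoint $t=0$ via "interpolation on $\partial\Omega$" is not well-posed, since you have no $L^2(\partial\Omega)$ datum to interpolate against; the cleanest path to $\Vert u^t\Vert_{L^2(\partial\Omega)}=O(h^{1/2})$ is the multiplicative trace inequality $\Vert u\Vert^2_{L^2(\partial\Omega)}\lesssim\Vert u\Vert_{L^2(\Omega)}\Vert u\Vert_{H^1(\Omega)}$.)
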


Indeed, assuming that Proposition \ref{Main-semi-classical-estimate-1} is valid and taking $t=0$ in (\ref{rate_est_boundary-1}), we have the estimate $\left\Vert \left.  u^t  \right\vert _{\partial
\Omega}\right\Vert _{L^2\left(  \partial\Omega\right)}=O(h^{\frac{1}{2}})=O(a^{\frac{s-1}{4}})$, as $a<<1$. As $u^s$ satisfies $(\Delta +\kappa^2)u^s=0$ 
in $\mathbb{R}^3\setminus \bar{\Omega}$, the radiation conditions with $\left\Vert \left.  u^s\left(  x\right) + e^{i\kappa x\cdot \theta} \right\vert _{\partial
\Omega}\right\Vert _{L^2\left(  \partial\Omega\right)}=O(a^{\frac{s-1}{4}})$ as $a<<1$, then the well posedness of the forward scattering problem in the exterior domain $\mathbb{R}^3\setminus \bar{\Omega}$ implies that
the corresponding far-fields satisfy the estimate 
\begin{equation}\label{Main-estimate-far-fields}
\vert U^\infty(\hat{x}, \theta)-U^\infty_D(\hat{x}, \theta)\vert = O(a^{\frac{s-1}{4}}) \mbox{ as } a<<1.
\end{equation}

\subsubsection{Proof of Proposition \ref{Main-semi-classical-estimate-1}}

The starting point is the following Lippmann-Schwinger equation
\begin{equation}\label{Lipp-Schwinger-equation}
(h^2+R(\kappa))u^t=h^2u^{i}\;~~~ \mbox{ in } \Omega
\end{equation}
where 
\begin{equation}
R(\kappa) u^{t}(x):=\int_{\Omega}\Phi_\kappa(x, y)V_0u^t(y)dy.
\end{equation}
We write $R(\kappa):=R(0)+P(\kappa)$, where now

\begin{equation}
R(0)u^t:=\int_{\Omega}\Phi_0(x, y)V_0u^t(y)dy
~~~ \mbox{ and }~~~ P(\kappa)u^t(y):=\int_{\Omega}(\Phi_\kappa(x, y)-\Phi_0(x, y))V_0u^t(y)dy 
\end{equation}
with $\Phi_0(x, y):=(4\pi \vert x-y\vert)^{-1}$.

We rewrite the equation (\ref{Lipp-Schwinger-equation}) as:
\begin{equation}
(h^2+R(0))u^t=h^2u^{i}-P(\kappa)u^t.
\end{equation}
Let $u^t_1$ be defined as the solution of the equation
\begin{equation}
(h^2+R(0))u_1^t=h^2u^{i}
\end{equation}
and $u^t_2$ be the one of the equation
\begin{equation}
(h^2+R(0))u_2^t=-P(\kappa)u^t.
\end{equation}
It is clear that $u^t=u_1^t+u_2^t$. 
The estimates of both $u^t_1$ and $u^t_2$ are based on the following properties of the operator $R(0)$.
\begin{proposition}\label{Prop-spectral}
Let $(\lambda_n, e_n)$ be the sequence of the eigenelements of the self-adjoint and compact operator $R(0): L^2(\Omega) \longrightarrow L^2(\Omega)$. We have the characterization
\begin{equation}
u\in H^1(\Omega) \Longleftrightarrow \sum_{n}\lambda^{-1}_n(u, e_n)^2_{L^2(\Omega)} <\infty
\end{equation} 
and there exist two  positive constants $c(\Omega)$ and $C(\Omega)$ depending only on $\Omega$ such that for every $u \in H^1(\Omega)$
\begin{equation}\label{H1-norm-equivalence}
c(\Omega) \Vert u \Vert_{H^1(\Omega)} \leq (\sum_{n}\lambda^{-1}_n V_0 (u, e_n)^2_{L^2(\Omega)})^{\frac{1}{2}} \leq C(\Omega) \Vert u \Vert_{H^1(\Omega)} 
\end{equation}
and 
\begin{equation}\label{H1-dual-norm-equivalence}
(C(\Omega))^{-1} \Vert u \Vert_{(H^1(\Omega))'} \leq (\sum_{n}\lambda_n V^{-1}_0 (u, e_n)^2_{L^2(\Omega)})^{\frac{1}{2}} \leq (c(\Omega))^{-1} \Vert u \Vert_{(H^1(\Omega))'} 
\end{equation}
where we used the scalar product $(u, v)_{L^2}:=\int_{\Omega}u(x)v(x)dx$.
\end{proposition}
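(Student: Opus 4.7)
The plan is to identify $R(0)$ as the inverse, up to the weight $V_0$, of a self-adjoint positive operator $A$ on $L^2(\Omega,V_0\,dx)$ whose form domain coincides with $H^1(\Omega)$ equipped with an equivalent norm, and then read off the claimed characterization from the spectral decomposition of $A$. The first task is to derive the boundary condition satisfied by the eigenfunctions. From $R(0)e_n=\lambda_n e_n$, the function $e_n$ agrees on $\Omega$ with its own Newton potential $\tilde e_n(x):=\lambda_n^{-1}\int_\Omega \Phi_0(x,y)V_0(y)e_n(y)\,dy$, which extends globally on $\mathbb{R}^3$, belongs to $H^2_{\mathrm{loc}}(\mathbb{R}^3)$, is harmonic with decay at infinity on $\mathbb{R}^3\setminus\bar\Omega$, and solves $-\Delta e_n=\lambda_n^{-1}V_0 e_n$ in $\Omega$. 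Applying Green's representation formula to $e_n$ on $\Omega$ against $\Phi_0$ and subtracting the Newton-potential identity forces, on $\partial\Omega$, the relation $S(\partial_\nu e_n)=(-\tfrac12 I+K)\,e_n$, that is $\partial_\nu e_n=\Lambda e_n$ with $\Lambda:=S^{-1}(-\tfrac12 I+K)$. Geometrically $\Lambda$ is the exterior Dirichlet-to-Neumann map for harmonic extensions with decay at infinity, which is what makes it symmetric.

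Next I would introduce on $H^1(\Omega)$ the symmetric bilinear form
\[
 a(u,v):=\int_\Omega \nabla u\cdot\nabla\bar v\,dx-\langle\Lambda u,v\rangle_{H^{-1/2}(\partial\Omega),\,H^{1/2}(\partial\Omega)},
\]
and check that $\langle Au,u\rangle_{L^2(\Omega,V_0\,dx)}=a(u,u)$ on $\mathrm{Dom}(A)$, so that $a$ is the quadratic form of the operator $A$ to be diagonalized. The central computation is the energy identity
\[
 a(u,u)=\int_{\mathbb{R}^3}|\nabla \tilde u|^2\,dx,
\]
where $\tilde u$ denotes $u$ inside $\Omega$ glued to the unique exterior harmonic extension with decay at infinity; this follows by integration by parts in the exterior. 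From this identity the equivalence $a(u,u)\simeq\|u\|_{H^1(\Omega)}^2$ is proved in both directions: the upper bound uses the trace theorem and the continuity of the exterior harmonic extension $H^{1/2}(\partial\Omega)\to\dot H^1(\mathbb{R}^3\setminus\bar\Omega)$; the lower bound combines the elementary $\|\nabla u\|_{L^2(\Omega)}^2\le a(u,u)$ with the three-dimensional Sobolev embedding $\dot H^1(\mathbb{R}^3)\hookrightarrow L^6(\mathbb{R}^3)$ applied to $\tilde u$ and H\"{o}lder's inequality on the bounded set $\Omega$ to control $\|u\|_{L^2(\Omega)}$.

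The remaining steps are standard. After a symmetrization $V_0^{1/2}R(0)V_0^{1/2}$ one sees that $R(0)$ is self-adjoint, compact and strictly positive on $L^2(\Omega,V_0\,dx)$, yielding an orthonormal eigenbasis $\{e_n\}$ with $\lambda_n\downarrow 0^+$. The spectral theorem applied to the closed form $a$ then gives the Parseval identity $a(u,u)=\sum_n \lambda_n^{-1}\,|(u,e_n)_{L^2(\Omega,V_0\,dx)}|^2$, which combined with the form-norm equivalence established above yields \eqref{H1-norm-equivalence}; the dual estimate \eqref{H1-dual-norm-equivalence} follows by a standard duality argument, since $\{\lambda_n^{1/2}e_n\}$ is a Riesz basis of $L^2(\Omega,V_0\,dx)$ and the dual of the primal estimate identifies the $(H^1)'$-norm with $\sum_n \lambda_n |c_n|^2$. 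The main obstacle is the lower bound in the form equivalence: the Dirichlet energy $\int_{\mathbb{R}^3}|\nabla\tilde u|^2$ only controls $\tilde u$ modulo an additive constant on any bounded set, and it is precisely the decay of the exterior extension at infinity that pins the constant to zero, enabling the Sobolev embedding to supply the missing $L^2$-bound on $\Omega$.
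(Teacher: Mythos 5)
Your proposal is correct and takes a genuinely different route to the key coercivity estimate than the paper. In the paper, the lower bound $a_0(u,u)\geq\int_\Omega|\nabla u|^2$ is obtained by writing $-\int_{\partial\Omega}Bu\cdot u\,ds=\int_{\partial\Omega}S(S^{-1}u)(\tfrac12 I-K^*)S^{-1}u\,ds$ and invoking the abstract positive definiteness of $\tfrac12 I-K^*$ on $H^{-1/2}(\partial\Omega)$; since this controls only the gradient, the authors must then perturb to $a_\alpha:=a_0+\alpha(\cdot,\cdot)_{L^2}$ to obtain coercivity, apply spectral theory to $A_\alpha$, and pass $\alpha\to 0$ at the end. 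You instead identify $B=S^{-1}(-\tfrac12 I+K)$ as the exterior Dirichlet-to-Neumann map for decaying harmonic extensions and prove the energy identity $a(u,u)=\int_{\mathbb{R}^3}|\nabla\tilde u|^2\,dx$, from which both the symmetry of the form (replacing the paper's algebraic use of $KS=SK^*$) and the non-negativity are immediate; the Sobolev embedding $\dot H^1(\mathbb{R}^3)\hookrightarrow L^6(\mathbb{R}^3)$ applied to $\tilde u$ (which legitimately decays at infinity, so the embedding does apply) together with H\"older on the bounded set $\Omega$ then supplies the $L^2$-control and hence full $H^1$-coercivity directly. This lets you dispense with the $\alpha$-perturbation altogether and is arguably cleaner. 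The duality step for the $(H^1(\Omega))'$-estimate is stated briefly but correctly; the paper spells out that argument more explicitly by testing against $f_0=\sum_n\lambda_n V_0^{-1}(u,e_n)e_n$ and verifying $f_0\in H^1(\Omega)$. One small but useful clarification you make is that $R(0)$ should be diagonalized on the weighted space $L^2(\Omega,V_0\,dx)$ (or equivalently via the symmetrization $V_0^{1/2}R(0)V_0^{1/2}$); the paper's exposition implicitly treats $V_0$ as a constant in this step, which is consistent with the setting $K=0$ used for the case $s>1$.
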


\begin{proof}
We observe that if $(\lambda_n, e_n)_{n \in \mathbb{N}}$ are the eigenvalues and eigenfunctions of the (Newtonian potential) operator $R(0)$ then $(\lambda^{-1}_nV_0, e_n)_{n \in \mathbb{N}}$ are exactly the ones of the problem
\begin{equation}\label{Operator A0}
-\Delta e_n=\lambda_n^{-1} V_0 e_n, \mbox{ in } \Omega \mbox{ and } \partial_\nu e_n=S^{-1}(-\frac{1}{2}I+K)e_n, \mbox{ on } \partial \Omega
\end{equation}
where $S$ is the single layer operator $Sf(x):=\int_{\partial \Omega} \Phi_0(x, y) f(y)ds(y)$ and $K$ the double layer operator $Kf(x):=\int_{\partial \Omega}\partial_{\nu(y)} \Phi_0(x, y)f(y)ds(y)$
with $\nu$ as the exterior unit normal to $\Omega$.
We recall that the operators $S$ and $K$ are well defined, linear and bounded from $H^{-\frac{1}{2}}(\partial \Omega)$ to $H^{\frac{1}{2}}(\partial \Omega)$ and from $H^{\frac{1}{2}}(\partial \Omega)$ to itself respectively.
In addition $S$ is invertible. This observation has been made since at least Kac \cite{Kac}. More details are provided in \cite{KS, RRS}. 
The reader can get the boundary condition above by simply writing the Green's formula for $\Phi_0$ and $e_n$ inside $\Omega$ and then taking the trace on $\partial \Omega$ using the jumps of the double layer operator.

We set $B:=S^{-1}(-\frac{1}{2}I+K)$. We define the operator
$(A_0, \mathcal{D}(A_0))$ as $\mathcal{D}(A_0):=\{u\in H^1(\Omega), \Delta u \in L^2(\Omega) \mbox{ and } \partial_\nu u =B\;u\}$ and  $A_0u=-\Delta u$ for $u\in \mathcal{D}(A_0)$. 
We set the corresponding quadratic form 
\begin{equation}
a_0(u, v):=\int_{\Omega}\nabla u (x)\cdot \nabla v (x) dx -\int_{\partial \Omega} Bu (x)\; v(x)ds(x).
\end{equation}
This form is well defined in $H^1(\Omega)\times H^1(\Omega)$. In addition, it is symmetric. Indeed, we have
$$
\int_{\partial \Omega} Bu (x)\; v(x)ds(x)=\int_{\partial \Omega} S^{-1}(-\frac{1}{2}I+K)u (x)\; v(x)ds(x)=\int_{\partial \Omega}(-\frac{1}{2}I+K)u (x)\; S^{-1}v(x)ds(x).
$$ 
We write $u=S S^{-1} u$, then as $KS=SK^*$, see \cite{Costabel} for instance, where $K^*$ is the dual of double layer potential $K$, we derive
$$
\int_{\partial \Omega}(-\frac{1}{2}I+K)u (x)\; S^{-1}v(x)ds(x)=\int_{\partial \Omega}S(-\frac{1}{2}I+K^*) S^{-1}u (x)\; S^{-1}v(x)ds(x) =\int_{\partial \Omega}u(x)S^{-1}(-\frac{1}{2}I+K)v (x)ds(x).
$$
Hence $ \int_{\partial \Omega} Bu (x)\; v(x)ds(x)=\int_{\partial \Omega} u(x)\; Bv (x)ds(x)$
and then $a_0(u, v)=a_0(v, u), \mbox{ for } u, v \in H^1(\Omega)$.
\bigskip

Now, we show that $a_0(u, u)\geq \int_{\Omega}\vert \nabla\; u\vert^2$. Indeed,
$$
-\int_{\partial \Omega} Bu (x)\; u(x)ds(x)=\int_{\partial \Omega} S^{-1}(\frac{1}{2}I-K)u (x)\; u(x)ds(x)=\int_{\partial \Omega} u\;(\frac{1}{2}I-K^*)S^{-1} u(x)ds(x).
$$
As the operator $\frac{1}{2}I-K^*$ is positive definite on $H^{-\frac{1}{2}}(\partial \Omega)$ equipped with scalar product $<Su, v>$, see \cite{Costabel} for instance, we obtain
$$
-\int_{\partial \Omega} Bu (x)\; u(x)ds(x)=\int_{\partial \Omega} S (S^{-1}u)(x) \;(\frac{1}{2}I-K^*)S^{-1} u(x)ds(x) \geq c_0 \Vert S^{-1}u\Vert_{H^{-\frac{1}{2}}(\partial \Omega)}
$$
with a positive constant $c_0$. Hence $a_0(u, u)\geq \int_{\Omega}\vert \nabla\; u\vert^2$.
\bigskip

Let us define the bilinear form $a_{\alpha}(u, v):=a_0(u, v)+\alpha (u, v)_{L^2(\Omega)}$ for $u, v \in H^1(\Omega)$, with a positive constant $\alpha$. This form is defined on $H^1(\Omega)$, symmetric, continuous and coercitive. 
To $a_\alpha$, corresponds a self-adjoint operator with a compact inverse $A_\alpha: L^2(\Omega)\longrightarrow L^2(\Omega)$. 
From the spectral theory, see \cite{DL}, we know that the eigenvalues and eigenfunctions $(\lambda^{\alpha}_n, e^{\alpha}_n)_{n \in \mathbb{N}}$ of $A_\alpha$ define a basis in $L^2(\Omega)$ and, in addition, we have 
$$
H^1(\Omega)=\mathcal{D}((A_\alpha)^{\frac{1}{2}}).
$$ 
This means that 
\begin{equation}\label{H1-cha-0}
u\in H^1(\Omega) \Longleftrightarrow \sum_{n}\lambda^{\alpha}_n(u, e^{\alpha}_n)^2_{L^2(\Omega)} <\infty ~~ 
\mbox {and }
\Vert u \Vert^2_{H^1(\Omega)} \sim \sum_{n}\lambda^{\alpha}_n(u, e^{\alpha}_n)^2_{L^2(\Omega)}.
\end{equation}
The constants appearing in the two inequalities of this equivalence might depend on $\Omega$ but they are independent on $V_0$.
As $A_0$ is the operator corresponding to $a_0$, then we have 
$$
\lambda^{\alpha}_n=\lambda^{-1}_n V_0+\alpha \mbox{ and } e^{\alpha}_n=e_n, \mbox{ for every } n.
$$
Finally, as $0<\lambda_n$ for every $n$, $\lambda^{-1}_n \longrightarrow \infty$, as $n\longrightarrow \infty$, and $\alpha$ is arbitrary small, say $\alpha <1$, we deduce from (\ref{H1-cha-0}) that
$$
u\in H^1(\Omega) \Longleftrightarrow \sum_{n}\lambda^{-1}_nV_0 (u, e_n)^2_{L^2(\Omega)} <\infty
$$
and 
$$
c(\Omega)\Vert u \Vert_{H^1(\Omega)} \leq (\sum_{n}\lambda^{-1}_n V_0 (u, e_n)^2_{L^2(\Omega)})^{\frac{1}{2}} \leq C(\Omega) \Vert u \Vert_{H^1(\Omega)}.
$$
As mentioned above, the constants appearing in these inequalities are independent on $V_0$. 
\bigskip

Let us now deal with the $(H^1(\Omega))'$ norm. As we have $<u, f>_{(H^1(\Omega))', H^1(\Omega)}=(u, f)_{L^2(\Omega)}$ for $u \in L^2(\Omega)$ and $f\in H^1(\Omega)$, then 
$$
\Vert u \Vert_{(H^1(\Omega))'}:=\sup_{\Vert f\Vert_{H^1(\Omega)}=1}\vert <u, f>\vert= \sup_{\Vert f\Vert_{H^1(\Omega)}=1}\vert \sum_n(u, e_n)(f, e_n)\vert
=\sup_{\Vert f\Vert_{H^1(\Omega)}=1}\vert \sum_n \lambda_n^{\frac{1}{2}}V^{-\frac{1}{2}}_0(u, e_n)\lambda_n^{-\frac{1}{2}}V^{\frac{1}{2}}_0(f, e_n)\vert
$$
$$
\leq \sup_{\Vert f\Vert_{H^1(\Omega)}=1}\vert [\sum_n \lambda_nV^{-1}_0(u, e_n)^2]^{\frac{1}{2}}[\sum_n\lambda_n^{-1}V_0(f, e_n)^2]^{\frac{1}{2}}
$$
$$
\leq [\sum_n \lambda_nV^{-1}_0(u, e_n)^2]^{\frac{1}{2}} \sup_{\Vert f\Vert_{H^1(\Omega)}=1}\vert[\sum_n\lambda_n^{-1}V_0(f, e_n)^2]^{\frac{1}{2}}\leq C(\Omega)[\sum_n \lambda_nV^{-1}_0(u, e_n)^2]^{\frac{1}{2}}.
$$
Convertly, 
$$
\sum_n \lambda_nV^{-1}_0(u, e_n)^2=\sum_n \lambda_nV^{-1}_0(u, e_n)(u, e_n)=(u, \sum_n \lambda_nV^{-1}_0(u, e_n)e_n)
$$
Set $f_0:=\sum_n \lambda_nV^{-1}_0(u, e_n)e_n$, and let us show that it is in $H^1(\Omega)$. Indeed, $(f_0, e_n)=\lambda_nV^{-1}_0(u, e_n)$, then 
$$\lambda^{-1}_nV_0 (f_0, e_n)^2=\lambda_nV^{-1}_0(u, e_n)^2 <\lambda^{-1}_nV^{-1}_0(u, e_n)^2$$
since $\lambda_n \leq \lambda^{-1}_n$, for $n$ large enough.
As $u \in H^1(\Omega)$, then $\sum_n \lambda^{-1}_nV_0 (f_0, e_n)^2 < V_0^2 \sum \lambda_nV^{-1}_0(u, e_n)^2<\infty$.
This means that $f_0$ is in $H^1(\Omega)$. Hence we have
$$
\sum_n \lambda_nV^{-1}_0(u, e_n)^2 =(u, f_0)=\vert (u, f_0)\vert \leq \frac{\vert (u, f_0)\vert}{\Vert f_0\Vert_{H^1(\Omega)}}\;~ \Vert f_0\Vert_{H^1(\Omega)} \leq 
\sup_{f\in H^1(\Omega)}\frac{\vert (u, f)\vert}{\Vert f\Vert_{H^1(\Omega)}}
\;~ \Vert f_0\Vert_{H^1(\Omega)}.
$$
But $ \sup_{f\in H^1(\Omega)}\frac{\vert (u, f)\vert}{\Vert f\Vert_{H^1(\Omega)}} \leq \Vert u \Vert_{(H^1(\Omega))'}$ and from (\ref{H1-norm-equivalence}), we have
$$
\Vert f_0\Vert_{H^1(\Omega)} \leq (c(\Omega))^{-1}(\sum_n \lambda_n^{-1}V_0 (f, e_n)^2)^{\frac{1}{2}}=(c(\Omega))^{-1}(\sum_n \lambda_nV^{-1}_0 (u, e_n)^2)^{\frac{1}{2}}
$$
then we deduce that
$$
(\sum_n \lambda_nV^{-1}_0 (u, e_n)^2)^{\frac{1}{2}} \leq (c(\Omega))^{-1} \Vert u\Vert_{(H^1(\Omega))'}.
$$
\end{proof}

\underline{{\bf{Estimates of $u_1^t$}}}
\bigskip

We recall that 
\begin{equation}
(h^2+R(0))u_1^t=h^2u^{i}.
\end{equation}
By the spectral decomposition, we write $u_1^t=\sum_n (u_1^t, e_n)e_n$, then 
\begin{equation}\label{u-1-spectral-decomposition}
(u_1^t, e_n)=\frac{h^2}{h^2+\lambda_n}(u^{i}, e_n).
\end{equation}
 Multiplying both sides by $\lambda^{-\frac{1}{2}}_n$, and as $\lambda_n>0$, for every $n$, we get
$$
\sum_n \lambda_n^{-1} V_0 \vert ( u_1^t, e_n )\vert^2 \leq \sum_n \lambda_n^{-1} V_0 \vert ( u^i, e_n )\vert^2
$$
By Proposition \ref{Prop-spectral}, we deduce that
\begin{equation}\label{Est-u_1-1}
\Vert u_1^t\Vert_{H^1(\Omega)}\leq (c(\Omega))^{-1} C(\Omega) \Vert u^i\Vert_{H^1(\Omega)}.
\end{equation}
In addition, from (\ref{u-1-spectral-decomposition}), we have $\vert ( u_1^t, e_n )\vert \leq h^2 \lambda^{-1}_n \vert ( u^{i}, e_n)\vert $ 
and then
$$
V_0^2\sum _n\lambda_n V^{-1}_0 \vert ( u_1^t, e_n )\vert^2 \leq h^4 \sum_n\lambda^{-1}_nV_0 \vert (u^{i}, e_n)\vert^2.
$$
As $\Vert u_1^t\Vert^2_{(H^1(\Omega))'}\leq C^2(\Omega)\sum _n\lambda_n V^{-1}_0 \vert (u_1^t, e_n ) \vert^2$, 
 this means that
\begin{equation}\label{Est-u_1-2}
\Vert u_1^t\Vert_{(H^1(\Omega))'}\leq C(\Omega)^2 V^{-1}_0 h^2 \Vert u^i\Vert_{H^1(\Omega)}=O(V^{-1}_0 h^2), h<<1. 
\end{equation}

\underline{{\bf{Estimates of $u_2^t$}}}
\bigskip

We recall that 
\begin{equation}
(h^2+R(0))u_2^t=-P(\kappa)u^t.
\end{equation}
By the spectral decomposition, we get
\begin{equation}
(u^t_2, e_n)=-\frac{(P(\kappa)u^t, e_n)}{h^2+\lambda_n}.
\end{equation}
Hence $\vert (u^t_2, e_n)\vert \leq \lambda^{-1}_n\vert (P(\kappa)u^t, e_n)\vert $ or
$$
\lambda^{\frac{1}{2}}_n\vert (u^t_2, e_n)\vert \leq \lambda^{-\frac{1}{2}}_n\vert (P(\kappa)u^t, e_n)\vert
$$
But $P(\kappa)u^t=\sum_m(u^t, e_m)P(\kappa)e_m$ as $P(\kappa)$ is linear and bounded. Hence
$$
\lambda^{\frac{1}{2}}_n\vert (u^t_2, e_n)\vert \leq \lambda^{-\frac{1}{2}}_n \vert \sum_m(u^t, e_m)(P(\kappa)e_m, e_n)\vert
\leq (\sum _m\lambda_m(u^t, e_m)^2)^{\frac{1}{2}}(\sum_m \lambda_m^{-1} \lambda_n^{-1}(P(\kappa) e_m, e_n )^2)^{\frac{1}{2}}
$$
and then
\begin{equation}\label{u_2-P(kappa)}
V_0^2\sum_n \lambda_n V_0^{-1} \vert (u^t_2, e_n)\vert^2 
\leq (\sum _m\lambda_mV_0^{-1}(u^t, e_m)^2)\sum_n\sum_m \lambda_m^{-1}V_0 \lambda_n^{-1}V_0(P(\kappa) e_m, e_n )^2
\end{equation}
 Let us estimate the term $\sum_n\sum_m \lambda_m^{-1}V_0 \lambda_n^{-1}V_0(P(\kappa) e_m, e_n )^2$. First, as $P(k)$ is symmetric, we observe that
 $$
 \sum_m \lambda_m^{-1}V_0(P(\kappa) e_m, e_n )^2=\sum_m \lambda_m^{-1} V_0 ( P(\kappa)e_n, e_m )^2 \leq C^2(\Omega) \Vert P(\kappa)e_n\Vert^2_{H^1(\Omega)}.
 $$
 Then, we have
 $$
 \sum_n\sum_m \lambda_m^{-1}V_0 \lambda_n^{-1} V_0(P(\kappa) e_m, e_n )^2 \leq C^2(\Omega) \sum_n \lambda^{-1}_nV_0\Vert P(\kappa)e_n\Vert^2_{H^1(\Omega)}.
 $$
But 
$$
\sum_n \lambda^{-1}_nV_0\Vert P(\kappa)e_n\Vert^2_{H^1(\Omega)} =\sum_n \lambda^{-1}_nV_0\int_{\Omega}(\int_{\Omega}p_\kappa(x, y)e_n(y)dy)^2dx+\sum_n \lambda^{-1}_nV_0\int_{\Omega}(\int_{\Omega}\nabla_{x}p_\kappa(x, y)e_n(y)dy)^2dx
$$
recalling that $P(\kappa)e_n(x):=\int_{\Omega}p_\kappa(x, y)e_n(y)dy$ where $p_\kappa(x, y):=\Phi_\kappa(x, y)-\Phi_0(x, y)$.
Then by Lebesgue dominated Theorem, we get
$$
\sum_n \lambda^{-1}_nV_0\Vert P(\kappa)e_n\Vert^2_{H^1(\Omega)} =\int_{\Omega}\sum_n \lambda^{-1}_nV_0(\int_{\Omega}p_\kappa(x, y)e_n(y)dy)^2dx+\int_{\Omega}\sum_n \lambda^{-1}_nV_0(\int_{\Omega}\nabla_{x}p_\kappa(x, y)e_n(y)dy)^2dx
$$
$$
\leq C^2(\Omega) [\int_{\Omega}\Vert p_\kappa(x, \cdot)\Vert^2_{H^1(\Omega)}dx+
\int_{\Omega}\Vert \nabla_x p_\kappa(x, \cdot)\Vert^2_{H^1(\Omega)}dx] =: C^2(\Omega)C^2(\kappa).
$$

As $p_\kappa(x, y)=\Phi_\kappa(x, y)-\Phi_0(x, y)$, then it satisfies
$$
\Delta_y p_\kappa(x, y) =-\kappa^2 \Phi(x, y)
$$
and as $\Phi(x, \cdot)$ is in $L_{loc}^2(\mathbf{R}^3)$, then by interior estimates, we deduce that $p_\kappa(x, \cdot)$ is in $H^2_{loc}(\mathbf{R}^3)$ 
and $\Vert p_\kappa(x, \cdot)\Vert_{H^2(\Omega)}$ is uniformly bounded with respect to $x \in \Omega$. Finally as $\nabla_x p_\kappa(x, y)=-\nabla_y p\kappa(x, y)$, then $C(\kappa)$ make sense, i.e. its is finite.

Then from (\ref{u_2-P(kappa)}), we deduce that
\begin{equation}\label{H1-u-2-t}
\Vert u^t_2\Vert_{(H^1(\Omega))'} \leq \frac{(c(\Omega))^{-1}C^2(\Omega)C(\kappa)}{V_0} \Vert u^t\Vert_{(H^1(\Omega))'}.
\end{equation}

Let us now go back to the relation $ (u_2^t, e_n)=-\frac{(P(\kappa)u^{t}, e_n)}{h^2+\lambda_n}$
and derive the estimate $ \vert (u^t_2, e_n)\vert \leq \frac{\vert (P(\kappa)u^{t}, e_n)\vert}{h^2}$
or $ \lambda^{-\frac{1}{2}}_n\vert (u^t_2, e_n)\vert \leq \lambda^{-\frac{1}{2}}_n\frac{\vert (P(\kappa)u^{t}, e_n)\vert}{h^2}$ and then
$$
\sum_n \lambda^{-1}_nV^{-1}_0\vert (u^t_2, e_n)\vert \leq h^{-4}\sum_n\lambda^{-1}_nV_0^{-1}\vert (P(\kappa)u^{t}, e_n)\vert
$$
which means that
$$
\Vert u^t_2\Vert_{H^1(\Omega)}\leq (c(\Omega))^{-1} C(\Omega) h^{-2} \Vert P(\kappa)u^{t}\Vert_{H^1(\Omega)}.
$$
Let us now estimate $\Vert P(\kappa)u^{t}\Vert_{H^1(\Omega)}$. From the equality
$$
\Vert P(\kappa)u^{t}\Vert^2_{H^1(\Omega)} =\int_{\Omega}(\int_{\Omega}p_\kappa(x, y)u^t(y)dy)^2dx +\int_{\Omega}(\int_{\Omega}\nabla_x p_\kappa(x, y)u^t(y)dy)^2dx
$$
and the fact that
$$
\vert \int_{\Omega}p_\kappa(x, y)u^t(y)dy \vert \leq \Vert u^t\Vert_{(H^1(\Omega))'} \Vert p_\kappa(x, \cdot)\Vert_{H^1(\Omega)}
$$
and
$$
\vert \int_{\Omega}\nabla_x p_\kappa(x, y)u^t(y)dy \vert \leq \Vert u^t\Vert_{(H^1(\Omega))'} \Vert \nabla_x p_\kappa(x, \cdot)\Vert_{H^1(\Omega)}
$$
we have 
$$
\Vert P(\kappa)u^{t}\Vert^2_{H^1(\Omega)} \leq [\int_{\Omega}\Vert p_\kappa(x, \cdot)\Vert^2_{H^1(\Omega)}dx+\int_{\Omega}\Vert \nabla_x p_\kappa(x, \cdot)\Vert^2_{H^1(\Omega)}dx]\; \Vert u^t\Vert^2_{(H^1(\Omega))'}=C(\kappa)\Vert u^t\Vert^2_{(H^1(\Omega))'}.
$$
From the singularities of the function $p_\kappa (\cdot, \cdot)$, as described above, we know that the two quantities 
$\int_{\Omega}\Vert p_\kappa(x, \cdot)\Vert^2_{H^1(\Omega)}dx$ and $\int_{\Omega}\Vert \nabla_x p_\kappa(x, \cdot)\Vert^2_{H^1(\Omega)}dx$ are finite.

Finally, we have
$$
\Vert u^t_2\Vert_{H^1(\Omega)}\leq (c(\Omega))^{-1} C(\Omega) C(\kappa)h^{-2} \Vert u^t \Vert_{(H^1(\Omega))'}.
$$

\underline{{\bf{Estimates of $u^t$}}}
\bigskip

We have shown so far that
$$
\Vert u^t_1\Vert_{H^1(\Omega)}\leq (c(\Omega))^{-1} C(\Omega) \Vert u^i\Vert_{H^1(\Omega)}=O(1) \mbox{ and }
\Vert u^t_1\Vert_{(H^1(\Omega))'}\leq C^2(\Omega) V^{-1}_0h^2 \Vert u^i\Vert_{H^1(\Omega)}=O(V^{-1}_0h^2)
$$
$$
\Vert u^t_2\Vert_{H^1(\Omega)}\leq (c(\Omega))^{-1} C(\Omega)C(\kappa) h^{-2} \Vert u^t \Vert_{(H^1(\Omega))'}
\mbox{ and }
\Vert u^t_2\Vert_{(H^1(\Omega))'}\leq \frac{(c(\Omega))^{-1}C^2(\Omega)C(\kappa)}{V_0} \Vert u^t\Vert_{(H^1(\Omega))'}
$$
From these estimates, we deduce that
$$
\Vert u^t\Vert_{(H^1(\Omega))'}\leq \Vert u_1^t\Vert_{(H^1(\Omega))'} +\Vert u_2^t\Vert_{(H^1(\Omega))'}\leq C^2(\Omega)V^{-1}_0h^2 \Vert u^i \Vert_{H^1(\Omega)} +
\frac{(c(\Omega))^{-1} C^2(\Omega)C(\kappa)}{V_0} \Vert u^t\Vert_{(H^1(\Omega))'}.
$$
Hence under the condition on $\kappa$ 
\begin{equation}\label{condition-kappa}
C(\kappa)<\frac{c(\Omega)V_0}{C^2(\Omega)}
\end{equation}
we derive the estimate
$$
\Vert u^t\Vert_{(H^1(\Omega))'} \leq \frac{C^2(\Omega)V^{-1}_0 h^2 \Vert u^i \Vert_{H^1(\Omega)}}{1- \frac{C^2(\Omega)C(\kappa)}{V_0 c(\Omega)}}.
$$
Using this estimate, we derive the following one
$$
\Vert u^t\Vert_{H^1(\Omega)} \leq \Vert u_1^t\Vert_{H^1(\Omega)} +\Vert u_2^t\Vert_{H^1(\Omega)} \leq (c(\Omega))^{-1} C(\Omega)\Vert u^i\Vert_{H^1(\Omega)} +
(c(\Omega))^{-1}C(\Omega)C(\kappa) h^{-2}\frac{V^{-1}_0h^2 C^2(\Omega) \Vert u^i \Vert_{H^1(\Omega)}}{1- 
\frac{C^2(\Omega)C(\kappa)}{V_0 c(\Omega)}}
$$
and then 
$$
\Vert u^t\Vert_{H^1(\Omega)}\leq (c(\Omega))^{-1} C(\Omega)[1+\frac{C^2(\Omega)C(\kappa) c(\Omega)}{V_0c(\Omega)-C^2(\Omega)C(\kappa)}] \Vert u^i \Vert_{H^1(\Omega)}.  
$$

\bigskip

Let us now show how to remove the condition on the frequency $\kappa$ in (\ref{condition-kappa}). For this purpose, we go back to the original Lippmann-Schwinger equation

\begin{equation}\label{Lipp-Schwinger-equation-revisited}
(h^2+R(\kappa))u^t=h^2u^{i}\;~~~ \mbox{ in } \Omega
\end{equation}
recalling that
\begin{equation}
R(\kappa) u^{t}(x):=\int_{\Omega}\Phi_\kappa(x, y)V_0u^t(y)dy.
\end{equation}
We rewrite it, by multiplying its both sides by $\varrho^2>0$,  as 
$$
(\varrho^2h^2+\overline{R}(\kappa))u^t=\varrho^2h^2u^{i}\;~~~ \mbox{ in } \Omega
$$
where 
\begin{equation}
\bar{R}(\kappa) u^{t}(x):=\int_{\Omega}\Phi_\kappa(x, y)\varrho^2V_0u^t(y)dy.
\end{equation}

This equation is of the form (\ref{Lipp-Schwinger-equation-revisited}) replacing $h$ by $\varrho h$ and $V_0$ by $\varrho^2 V_0$.
With this setting, under the corresponding condition 
\begin{equation}
 C(\kappa) < \frac{\varrho^2 V_0 c(\Omega)}{C^2(\Omega)}
\end{equation}
we get 
$$
\Vert u^t\Vert_{H^1(\Omega)}\leq (c(\Omega))^{-1} C(\Omega)[1+\frac{C^2(\Omega)C(\kappa)c(\Omega)}{\varrho^2 V_0 c(\Omega)-C^2(\Omega)C(\kappa)}] \Vert u^i \Vert_{H^1(\Omega)}.
$$
Finally, for any fixed $\kappa$, we can choose $\varrho$ large enough so that $C(\kappa) < \frac{\varrho V_0 c(\Omega)}{C^2(\Omega)}$ and we end up with
our needed result. Mainly, for any fixed frequency $\kappa$, there exists a positive constant $C(\Omega, \kappa)$ such that 
$$
\Vert u^t\Vert_{H^1(\Omega)}\leq (c(\Omega))^{-1} C(\Omega)[1+\frac{C^2(\Omega)C(\kappa)c(\Omega)}{\varrho^2 V_0 c(\Omega)-C^2(\Omega)C(\kappa)}]\Vert u^i \Vert_{H^1(\Omega)}=O(1), ~~ h<<1.
$$
With this estimate, we have also the ones of the traces $\Vert u^t\Vert_{H^{\frac{1}{2}}(\partial \Omega)}=O(1)$ and $\Vert \partial_{\nu}u^t\Vert_{H^{-\frac{1}{2}}(\partial \Omega)}=O(1)$, as $h<<1$.
By an integration by parts, we have
$$
\int_{\Omega}\vert \nabla u^t(x) \vert^2 +(h^{-2} V_0-\kappa^2)  \vert u^t(x) \vert^2 dx \leq \Vert \partial_{\nu}u^t\Vert_{H^{-\frac{1}{2}}(\partial \Omega)} \Vert u^t\Vert_{H^{\frac{1}{2}}(\partial \Omega)}=O(1), h<<1.
$$
Hence $\Vert u^t\Vert_{L^2(\Omega)} =O(h),\;~~ h<<1$. By interpolation, we deduce that 
$$
\Vert u^t\Vert_{H^{t}(\Omega)}=O(h^{1-t}), ~~ h<<1, ~~ \mbox{ for } t \in [0, 1]
$$
and, by the trace operator estimates, that
$$
\Vert u^t\Vert_{H^t(\partial \Omega)}=O(h^{\frac{1}{2}-t}), ~~ h<<1, ~~ \mbox{ for } t \leq \frac{1}{2}.
$$

 \section{ Proof of Proposition \ref{Prpopostion-Expansion-Far-fields}}\label{prop-Expansion-Far-fields}
 Using a single layer representation of the solution $$u^s(x, d):=\sum^M_{j=1}\int_{\partial D_j} \Phi_{\kappa}(x, s)\sigma_j(s)ds,$$ one can write the Dirichlet boundary condition in a compact form as 
\begin{eqnarray}\label{cmp-1}
 (\bm{L}+\bm{K})\sigma=-U^{\bm{I}}
\end{eqnarray}
where
$\bm{L}:=(\bm{L}_{mj})_{m,j=1}^{M}$ and $\bm{K}:=(\bm{K}_{mj})_{m,j=1}^{M}$, where
\begin{eqnarray}\label{definition-L_K}
\bm{L}_{mj}=\left\{\begin{array}{ccc}
            \mathcal{S}_{mj} & m=j\\
            0 & else
           \end{array}\right.,
&  \ &\bm{K}_{mj}=\left\{\begin{array}{ccc}
            \mathcal{S}_{mj} & m\neq j\\
            0 & else
           \end{array}\right., 
\end{eqnarray}
\vspace{-.3cm}
\begin{eqnarray}
U^{\bm{I}}=U^{\bm{I}}(s_1,\dots,s_M):=\left(U^i(s_1),\dots,U^i(s_M)\right)^T,\\
 \mbox{ and  }\sigma=\sigma(s_1,\dots,s_M):=\left(\sigma_1(s_1),\dots,\sigma_M(s_M)\right)^T.
 \end{eqnarray}
Here, for the indices  $m$ and $j$ fixed, $\mathcal{S}_{mj}$ is the integral operator acting as

\begin{eqnarray}\label{defofSmjed}
 \mathcal{S}_{mj}(\sigma_j)(t):=\int_{\partial D_j}\hspace{-.07cm}\Phi_{\kappa}(t,s)\sigma_j(s)ds,\quad t\in\partial D_m. 
\end{eqnarray}

\subsection{A priori estimate of $\sigma$}

We can rewrite \eqref{cmp-1} as
\begin{eqnarray}\label{cmp-2}
 (\bm{L}+\acute{\bm{K}})\sigma=-U^{\bm{I}}-\bm{\Phi}^{c}\sigma
\end{eqnarray}

where 

\begin{eqnarray}\label{definition-L_K_prime}
\acute{\bm{K}}_{mj}=\left\{\begin{array}{ccc}
            \mathcal{S}^\prime_{mj} & m=j\\
            0 & else
           \end{array}\right.,
&  \ &\bm{\Phi}^{c}_{mj}=\left\{\begin{array}{ccc}
            \mathcal{S}^c_{mj} & m\neq j\\
            0 & else
           \end{array}\right., 
\end{eqnarray}
with
\begin{eqnarray}\label{defofSmjed_pr}
 \mathcal{S}_{mj}^\prime(\sigma_j)(t)&:=&\int_{\partial D_j}\hspace{-.07cm}[\Phi_{\kappa}(t,s)-\Phi_{\kappa}(z_m,z_j)]\sigma_j(s)ds,\quad t\in\partial D_m\\
\label{defofSmjed_const}
 \mathcal{S}_{mj}^c(\sigma_j)(t)&:=&\int_{\partial D_j}\hspace{-.07cm}\Phi_{\kappa}(z_m,z_j)\sigma_j(s)ds,\quad t\in\partial D_m\nonumber\\
                             &=&\Phi_{\kappa}(z_m,z_j)\,Q_j,  t\in\partial D_m.
\end{eqnarray}

Here $Q_j$ are called total charge on each surface $\partial D_j$, associated to the surface charge distributions $\sigma_j$ and are defined as
\begin{eqnarray}\label{defofQm}
 Q_j:=\int_{ \partial D_j} \sigma_j(s) ds\mbox{ and } Q:=\left({Q_1},{Q_2},\dots,{Q_M}\right)^T.
\end{eqnarray} 
 
Further, \eqref{cmp-2} can be written as,

\begin{eqnarray}\label{cmp-3pr}
\bm{L}_0\sigma=-U^{\bm{I}}-\bm{\Phi}^{c}Q-\acute{\bm{K}}\sigma-(\bm{L}-\bm{L}_0)\sigma 
\end{eqnarray} 

Here, $\bm{L}_0$ is defined same as the matrix operator $\bm{L}$ but for the zero frequency by denoting the corresponding single layer operators by $\mathcal{S}_{mj}^0$.  
By making use of invertibility of $\bm{L}_0$, we can rewrite \eqref{cmp-3pr} as below;
 
 \begin{eqnarray}\label{cmp-4pr}
 \sigma=-\bm{L}_0^{-1}U^{\bm{I}}-\bm{L}_0^{-1}\bm{\Phi}^{c}Q-\bm{L}_0^{-1}\acute{\bm{K}}\sigma+\bm{L}_0^{-1}(\bm{L}-\bm{L}_0)\sigma,
\end{eqnarray}

Observe that  $\|\mathcal{S}_{mm}- \mathcal{S}_{mm}^0\|_{\mathcal{L}\left(L^2(\partial D_m), H^1(\partial D_m)\right)}$ behaves as $O(a^2)$, and hence the norm $\|\bm{L}-\bm{L}_0\|_{\mathcal{L}\left(\prod\limits_{m=1}^{M}L^{2}(\partial D_m),\prod\limits_{m=1}^{M}H^{1}(\partial D_m)\right)}$
 also behaves as $O(a^2)$.

 Next, we show that $\|(S_{mm}^{0})^{-1}S_{mj}^{\prime}\sigma_j\|_{L^2(\partial D_m)}$, for $m\neq\,j$ 
behaves as $O(\frac{a^2}{d_{mj}^2})=C^\prime\frac{a^2}{d_{mj}^2}$ (a uniform constant). Indeed, using the analyticity ( or the Taylor series at the  order $l+1$ if $t\leq\frac{l}{l+1}$, 
see Remark \ref{comments-for-invertibility condition-taylor-analyticity}) of $\Phi_\kappa$ in $\bar{D}_m\times\bar{D}_j$ for $m\neq\;j$ around $z_m$ and then the Taylor series of first order for the first term around $z_l$, we can write
\begin{eqnarray}\label{footnotephik-taylor-analytic-imp}
\Phi_{\kappa}(t,s)&=&\Phi_{\kappa}(z_m,s)+ (t-z_m)\cdot\nabla_t\Phi_{\kappa}(z_m,s) + \sum_{\vert\eta\vert=1}^\infty \frac{D_x^\eta\Phi_{\kappa}(z_m,s)}{\eta!} (t-z_m)^\eta\nonumber\\
&=&\Phi_{\kappa}(z_m,z_j)+(s-z_j)\cdot\int_0^1\nabla_s\Phi_{\kappa}(z_m,s-\alpha(s-z_j))d\alpha+ (t-z_m)\cdot\nabla_t\Phi_{\kappa}(z_m,s)\nonumber\\
&+&  \sum_{\vert\eta\vert=2}^\infty \frac{D_x^\eta\Phi_{\kappa}(z_m,s)}{\eta!} (t-z_m)^\eta.\nonumber\\
\end{eqnarray}
Observe that, the above expansion of $\Phi_{\kappa}$ is due to expansion of infinite order around $z_m$. However, if we just want to use the $(l+1)^{th}$ order expansion of $\Phi_{\kappa}$, then the last two terms of the equation \eqref{footnotephik-taylor-analytic-imp} 
will be replace by $\sum_{\vert\eta\vert=1}^{l+1} \frac{D_x^\eta\Phi_{\kappa}(z_m,s)}{\eta!} (t-z_m)^\eta+\sum_{\vert\eta\vert=l+2} (t-z_m)^\eta \frac{|\eta|}{\eta!}\int_{0}^1 D_t^\eta\Phi_{\kappa}(z_m+\alpha(t-z_m),s) \;d\alpha$.

Now write $S_{mm}^0\psi=(t-z_m)^n$. Then using scaling, we can have $\|\psi\|_{L^2(\partial D_m)}=\epsilon\|\hat{\psi}\|_{L^2(\partial B_m)}$ which gives us $\|(S_{mm}^0)^{-1}((t-z_m)^n)\|_{L^2(\partial D_m)}=O(\epsilon^n)$ and it sufficient to prove the claim $\|(S_{mm}^{0})^{-1}S_{mj}^{\prime}\sigma_j\|_{L^2(\partial D_m)}=O(\frac{a^2}{d_{mj}^2})$. Indeed, 
\begin{eqnarray}\label{Ind-smjinvsmjprsig}
 \|(S_{mm}^{0})^{-1}S_{mj}^{\prime}\sigma_j\|_{L^2(\partial D_m)}
    &\equiv& \|(S_{mm}^{0})^{-1}(\int_{\partial D_j}\hspace{-.07cm}[\Phi_{\kappa}(\cdot,s)-\Phi_{\kappa}(z_m,z_j)]\sigma_j(s)ds)\|_{L^2(\partial D_m)}\nonumber\\
     &\leq& \|(S_{mm}^{0})^{-1}(\int_{\partial D_j}\hspace{-.07cm}(s-z_j)\cdot[\int_0^1\nabla_s\Phi_{\kappa}(z_m,s-\alpha(s-z_j))d\alpha]\,\sigma_j(s)ds)\|_{L^2(\partial D_m)}\nonumber\\
     &&+ \|(S_{mm}^{0})^{-1}(\int_{\partial D_j}\hspace{-.07cm}[(\cdot-z_m)\cdot\nabla_x\Phi_{\kappa}(z_m,s)]\sigma_j(s)ds)\|_{L^2(\partial D_m)}\nonumber\\
      &&+ \|(S_{mm}^{0})^{-1}(\int_{\partial D_j}\hspace{-.07cm}[\sum_{\vert\eta\vert=2}^\infty \frac{D_x^\eta\Phi_{\kappa}(z_m,s)}{\eta!} (\cdot-z_m)^\eta]\sigma_j(s)ds)\|_{L^2(\partial D_m)}\nonumber\\
    &\leq& \|(S_{mm}^{0})^{-1} (1) \|_{L^2(\partial D_m)}\left|\int_{\partial D_j}\hspace{-.07cm}(s-z_j)\cdot[\int_0^1\nabla_s\Phi_{\kappa}(z_m,s-\alpha(s-z_j))d\alpha]\,\sigma_j(s)ds\right|\nonumber\\
     &&+ \|(S_{mm}^{0})^{-1}(\cdot-z_m)\|_{L^2(\partial D_m)}\left|\int_{\partial D_j}\hspace{-.07cm}\nabla_x\Phi_{\kappa}(z_m,s)\sigma_j(s)ds\right|\nonumber\\
      &&+ \sum_{\vert\eta\vert=2}^\infty\frac{1}{\eta!}\|(S_{mm}^{0})^{-1}(\cdot-z_m)^\eta\|_{L^2(\partial D_m)}\left|\int_{\partial D_j}\hspace{-.07cm} {D_x^\eta\Phi_{\kappa}(z_m,s)} \sigma_j(s)ds\right|\nonumber\\
    &=&O\left(\frac{a^2}{d^2_{mj}}\right).
\end{eqnarray}

\bigskip

Let us now estimate  $\|\bm{L}_0^{-1}\acute{\bm{K}}\|$. 
\begin{eqnarray}\label{KDKnrm1}
\left\| \bm{L}_0^{-1}\acute{\bm{K}}\right\|
    &\equiv&
\max\limits_{1\leq m \leq M}\sum_{\substack{j=1\\j\neq\,m}}^{M}\left\| (S_{mm}^{0})^{-1}\mathcal{S}^\prime_{mj}\right\|_{\mathcal{L}\left(L^{2}(\partial D_j),H^{1}(\partial D_m)\right)}\nonumber\\
&\leq&\max\limits_{1\leq m \leq M}\sum_{\substack{j=1\\j\neq\,m}}^{M}C^\prime\frac{a^2}{d_{mj}^2}\nonumber\\
   &\leq&C^\prime\,{M_{\max}}\left[\frac{a^2}{d^2}+\sum_{n=1}^{[a^{-\frac{s}{3}}]}[(2n+1)^3-(2n-1)^3]\left(\frac{a}{n\left(2^{-\frac{1}{3}}a^\frac{s}{3}-\frac{a}{2}\right)}\right)^2\right]\nonumber\\
    &=&C^\prime\,{M_{\max}}\left[\frac{a^2}{d^2}+\left(\frac{a}{\left(2^{-\frac{1}{3}}a^\frac{s}{3}-\frac{a}{2}\right)}\right)^2\sum_{n=1}^{[a^{-\frac{s}{3}}]}[24+\frac{2}{n^2}]\right]\nonumber\\
     &\leq& C^\prime\,{M_{\max}}\left[{a^2}d^{-2}+ 26{a^2}a^{-\frac{s}{3}}\left(2^{-\frac{1}{3}}a^\frac{s}{3}-\frac{a}{2}\right)^{-2}\right]\nonumber\\
     &\leq& C^\prime\,{M_{\max}}\left[{a^2}d^{-2}+ 26{a^2}a^{-s}\right].
\end{eqnarray}

From the above discussions and due to the fact that $\|\bm{L}_0^{-1}\|$ behaves as $O(a^{-1})$, we can observe that 
$\|\bm{L}_0^{-1}(\bm{L}-\bm{L}_0)\|_{\mathcal{L}\left(\prod\limits_{m=1}^{M}L^{2}(\partial D_m),\prod\limits_{m=1}^{M}H^{1}(\partial D_m)\right)}$ and
$\|\bm{L}_0^{-1}\acute{\bm{K}}\|_{\mathcal{L}\left(\prod\limits_{m=1}^{M}L^{2}(\partial D_m),\prod\limits_{m=1}^{M}L^{2}(\partial D_m)\right)}$ 
behaves as $O(a)$ and $O(a^{2-s}+{a^{2-2t}})$ respectively, which enable us to write \eqref{cmp-4pr} as below;

 \begin{eqnarray}\label{cmp-4pr1}
 \sigma=-\bm{L}_0^{-1}U^{\bm{I}}-\bm{L}_0^{-1}\bm{\Phi}^{c}Q+O\left({a}+a^{2-s}+{a^{2-2t}}
\right) \|\sigma\|_{\prod\limits_{m=1}^{M}L^{2}(\partial D_m)}\mbox{ in } L^2,
\end{eqnarray}

Now, for $m=1,\dots,M$, by integrating the $m^{th}$ row of the above equation over $\partial D_m$, we obtain;

 \begin{eqnarray}\label{cmp-4pr2}
Q_m=-\int_{\partial D_m}(\bm{L}_0^{-1}U^{\bm{I}}+\bm{L}_0^{-1}\bm{\Phi}^{c}Q)_m+O\left( {a^2}+a^{3-s}+{a^{3-2t}}\right) \|\sigma\|_{\prod\limits_{m=1}^{M}L^{2}(\partial D_m)}.\quad
\end{eqnarray}

To make the notation simple, introduce the diagonal matrix integral operator $\bm{\int}$ such that $\bm{\int}=Diag(\int_{\partial D_1},\cdots, \int_{\partial D_m})$. Then,  from the above equations we can have;

 \begin{eqnarray}\label{cmp-4pr3}
\left[I+\bm{\int}\bm{L}_0^{-1}\bm{\Phi}^{c}\right]Q=-\bm{\int}\bm{L}_0^{-1}U^{\bm{I}}+O\left({a^2}+a^{3-s}+{a^{3-2t}}\right) \|\sigma\|_{\prod\limits_{m=1}^{M}L^{2}(\partial D_m)}.
\end{eqnarray}

Observe that, $\bm{\int}\bm{L}_0^{-1}\bm{\Phi}^{c}=Diag(\int_{\partial D_1}(\mathcal{S}^0_{11})^{-1},\cdots, \int_{\partial D_m}(\mathcal{S}^0_{mm})^{-1})\bm{\Phi}^{c}$ and hence, $\bm{\int}\bm{L}_0^{-1}\bm{\Phi}^{c}$ is a off-diagonal matrix operator of size $M\times M$ such that
$[\bm{\int}\bm{L}_0^{-1}\bm{\Phi}^{c}]_{pq}=\left\{\begin{array}{ccc}
                                            \int_{\partial D_p}(\mathcal{S}_{pp}^0)^{-1}\Phi_{\kappa}(z_p,z_q), &\mbox{ for } p\neq\;q;\\
                                            0,&\mbox{ for } p=q;
                                           \end{array}\right.$.
Observe that, $\bm{\int}\bm{L}_0^{-1}\bm{\Phi}^{c}Q=Diag(\int_{\partial D_1}(\mathcal{S}_{11})^{-1},\cdots, \int_{\partial D_m}(\mathcal{S}_{mm})^{-1})\,\bm{\Phi}^{c}Q$ and hence, $\bm{\int}\bm{L}_0^{-1}\bm{\Phi}^{c}Q$ is a vector of length $M$ such that
\begin{eqnarray}\label{smpl-L-C-Ph}
 [\bm{\int}\bm{L}_0^{-1}\bm{\Phi}^{c}Q]_{m}&=& \int_{\partial D_m}(\mathcal{S}_{mm}^0)^{-1}(\sum\limits_{q=1\atop {q\neq\;m}}^M\Phi_{\kappa}(z_m,z_q)Q_q), \nonumber\\
 &= &\sum\limits_{q=1\atop {q\neq\;m}}^M\Phi_{\kappa}(z_m,z_q)\;Q_q\int_{\partial D_m}(\mathcal{S}_{mm}^0)^{-1}(1)(s)\;ds, \nonumber\\
  &= &\mathbf{C}\textbf{B}Q. 
\end{eqnarray}
Here,  $\textbf{B}$ and $\mathbf{C}$ are the off-diagonal and diagonal matrices, respectively. These are defined as $B_n(p,q)= \left\{\begin{array}{ccc}\Phi_{\kappa}(z_p,z_q),& p\neq\;q; \\ 0,&p=q;\end{array}\right.,$ and 
$\mathbf{C}=Diag(C_1,\dots, C_m)$, with ${C_m}:=\int_{\partial D_m}(\mathcal{S}_{mm}^0)^{-1}(1)(s)\;ds$ are called as acoustic capacitances.
From \eqref{smpl-L-C-Ph} and the fact that $[I+\mathbf{C}\mathbf{B}]Q=\mathbf{C} [\mathbf{C}^{-1}+\mathbf{B}]{Q}$, we can rewrite \eqref{cmp-4pr3} as;

 \begin{eqnarray}\label{cmp-4pr5}
 [\mathbf{C}^{-1}+\mathbf{B}]{Q}
 &=&-\underbrace{\mathbf{C}^{-1}\bm{\int}\bm{L}_0^{-1}U^{\bm{I}}}_{{=:Y_1}}-\underbrace{\mathbf{C}^{-1}\bm{\int}[\bm{L}_0^{-1}\acute{\bm{K}}\sigma+\bm{L}_0^{-1}(\bm{L}-\bm{L}_0)\sigma]}_{{=:Y_2}}\nonumber\\
&=&O(1)+O\left({a}+a^{2-s}+{a^{2-2t}}\right) \|\sigma\|_{\prod\limits_{m=1}^{M}L^{2}(\partial D_m)}.
\end{eqnarray}

Now, by applying \cite[Lemma 2.22]{C-S:2014} to the system \eqref{cmp-4pr5}, we can prove the invertibility of the matrix $[\mathbf{C}^{-1}+\mathbf{B}]$ for the conditions 
``$\max\limits_{1\leq{m}\leq{M}} |C_m|<\frac{5\pi}{3\gamma}{d} $ and $\gamma:= {\min\atop{j\neq\,m}}\cos(\kappa|z_m-z_j|)\geq0$`` and derive the estimate
   
\begin{equation}\label{fnlinvert-small-ac-2f}
\begin{split}
 \sum_{m=1}^{M}|{Q}_m|^{2}|C_m|^{-1}
\leq4\left( 1-\frac{3\gamma}{5\pi\,d}\max\limits_{1\leq{m}\leq{M}} |C_m|\right)^{-2}\sum_{m=1}^{M}\left|(Y_1+Y_2)_m\right|^2|C_m|.
\end{split}
\end{equation}
which gives us;

\begin{equation}\label{fnlinvert-small-ac-2f-pr}
\begin{split}
 \sum_{m=1}^{M}|{Q}_m|
\leq2\left( 1-\frac{3\gamma}{5\pi\,d}\max\limits_{1\leq{m}\leq{M}} |C_m|\right)^{-1}M\max\limits_{1\leq m \leq M}|{C}_m|\max\limits_{1\leq m \leq M}\left|(Y_1+Y_2)_m\right|
\end{split}
\end{equation}

Observe that,
\begin{eqnarray}\label{fnlinvert-small-ac-pr}
 &&\|\bm{L}_0^{-1}\bm{\Phi}^{c}Q\|_{\prod\limits_{m=1}^{M}L^{2}(\partial D_m)}\nonumber\\
 &=& Max_m \left\|(S_{mm}^0)^{-1}\left(\sum_{j=1,j\neq m}^{M}\Phi_{\kappa}(z_m,z_j)\int_{\partial D_j} \sigma_j\right)\right\|_{L^{2}(\partial D_m)}\nonumber\\
 &\leq& Max_m\|(S_{mm}^0)^{-1}\|_{\mathcal{L}(H^1(\partial D_m), L^2{\partial D_m})}\sum_{j=1,j\neq m}^{M} \|\Phi_{\kappa}(z_m,z_j)\int_{\partial D_j} \sigma_j\|_{H^{1}(\partial D_m)}\nonumber\\
  &\leq& Max_m\epsilon^{-1}\|(S_{mm}^0)^{-1}\|_{\mathcal{L}(H^1(\partial B_m), L^2{\partial B_m})}\sum_{j=1,j\neq m}^{M}|\Phi_{\kappa}(z_m,z_j)|\,|Q_j| \left\|1\right\|_{H^{1}(\partial D_m)}\nonumber\\
      &\leq& Max_m\|(S_{mm}^0)^{-1}\|_{\mathcal{L}(H^1(\partial B_m), L^2{\partial B_m})}|\partial B_m|^\frac{1}{2}\sum_{j=1,j\neq m}^{M}|\Phi_{\kappa}(z_m,z_j)|\,|Q_j| \nonumber\\
  &\leq\atop\eqref{fnlinvert-small-ac-2f-pr}&( Max_m\|(S_{mm}^0)^{-1}\|_{\mathcal{L}(H^1(\partial B_m), L^2{\partial B_m})}|\partial B_m|^\frac{1}{2})2\left(1-\frac{3\gamma}{5\pi\,d}\max\limits_{1\leq m \leq M}|{C}_m|\right)^{-1}\frac{M}{d}\max\limits_{1\leq m \leq M}|{C}_m|\max\limits_{1\leq m \leq M}\left|(Y_1+Y_2)_m\right| \nonumber\\
&\leq&( Max_m\|(S_{mm}^0)^{-1}\|_{\mathcal{L}(H^1(\partial B_m), L^2{\partial B_m})}|\partial B_m|^\frac{1}{2})2\left(1-\frac{3\gamma}{5\pi\,d}\max\limits_{1\leq m \leq M}|{C}_m|\right)^{-1}M\frac{a}{d}\max\limits_{1\leq m \leq M}\left|(Y_1+Y_2)_m\right| \nonumber\\
&=\atop\eqref{cmp-4pr5}&O(M\frac{a}{d})+O\left( {a^{2-s-t}}+a^{3-2s-t}+{a^{3-s-3t}}\right) \|\sigma\|_{\prod\limits_{m=1}^{M}L^{2}(\partial D_m)}.
  \end{eqnarray}

From, \eqref{cmp-4pr1}, we dedue that
 \begin{eqnarray}\label{cmp-4pr6}
 \|\sigma\|_{\prod\limits_{m=1}^{M}L^{2}(\partial D_m)}
 &\leq&\|\bm{L}_0^{-1}U^{\bm{I}}\|_{\prod\limits_{m=1}^{M}L^{2}(\partial D_m)}+\|\bm{L}_0^{-1}\bm{\Phi}^{c}Q\|_{\prod\limits_{m=1}^{M}L^{2}(\partial D_m)}+O\left(a+a^{2-s}+{a^{2-2t}}\right) \|\sigma\|_{\prod\limits_{m=1}^{M}L^{2}(\partial D_m)}\nonumber\\
&\leq\atop\eqref{fnlinvert-small-ac-pr}&O(1)+O(M\frac{a}{d})+O\left({a^{2-s-t}}+a^{3-2s-t}+{a^{3-s-3t}}\right) \|\sigma\|_{\prod\limits_{m=1}^{M}L^{2}(\partial D_m)}\nonumber\\
&&\qquad\qquad+O\left(a+a^{2-s}+{a^{2-2t}}\right) \|\sigma\|_{\prod\limits_{m=1}^{M}L^{2}(\partial D_m)}.
 \end{eqnarray}
 
 Hence $\|\sigma_m\|_{L^2(\partial D_m)}$ behaves as $O\left(1+M\frac{a}{d}\right)$, if we have ${a}$, ${a^{2-s-t}}$, $a^{2-s}$, $a^{3-2s-t}$, ${a^{2-2t}}$, ${a^{3-s-3t}}$ are small enough. Hence, we  should have 
 ${2-s-t}$, $2-s$, $3-t-2s$, ${2-2t}$ and ${3-s-3t}$ are {positive}. Due to the fact that $t\geq\frac{s}{3}$, the above inequalities are equivalent to the following conditions on $t$ and $s$.
 \begin{eqnarray}\label{condtions-ts-neumann} 
   \frac{s}{3}\leq{t}\leq{1}&\mbox{ and } 0\leq{s}\leq\min\left\{2,{2-t}, {3-3t},\frac{1}{2}(3-t)\right\}
 \end{eqnarray}

 For simplicity, let $C_\sigma$ be the uniform constant such that
 $\|\sigma_m\|_{L^2(\partial D_m)}\leq\|\sigma\|_{\prod\limits_{m=1}^{M}L^2(\partial D_m)}\leq C_{\sigma}\left(1+M\frac{a}{d}\right).$

 \subsection{Derivation of the asymptotic Expansion}
 Using the $L^2$ estimate of $\sigma$, we have following estimates for the total charge $Q_m$ and the far-field $U^\infty$;
\begin{eqnarray}\label{estofQm}
 |Q_m|&\leq&{C_\sigma\,a\left(1+M\frac{a}{d}\right)}, \mbox{ for } m=1,\dots,M,
\end{eqnarray}
\begin{equation}\label{x oustdie1 D_m}
U^\infty(\hat{x})=\sum_{m=1}^{M}[e^{-i\kappa\hat{x}\cdot z_{m}}Q_m+O({\kappa\,a^2\left(1+M\frac{a}{d}\right)})],
\end{equation}
with $Q_m$ given by (\ref{defofQm}), if $\kappa\,a<1$ where $O\left(\kappa\,a^2\left(1+M\frac{a}{d}\right)\right)\,\leq\,{C_\sigma\,\kappa\left(1+M\frac{a}{d}\right)a^2}$.

Again by making use of $L^2$ estimate of $\sigma$, we can rewrite \eqref{cmp-4pr5} as below, for $m=1,\dots,M$,
  \begin{align}\label{cmp-4pr1p1}
\frac{Q_m}{C_m}+\sum_{j\neq\,m}C_j\Phi_\kappa(z_m,z_j)\frac{Q_j}{C_j}&=-\frac{1}{C_m}\int_{\partial D_m}(\bm{L}_0^{-1}U^{\bm{I}})_m+O\left({a}+{a^{2-s-t}}+ a^{2-s}+a^{3-2s-t}+{a^{2-2t}}+{a^{3-s-3t}}\right)\nonumber\\
&=-\frac{1}{C_m}\int_{\partial D_m}((S_{mm}^0)^{-1}U^{\bm{I}})(s_m)ds_m+O\left({a}+{a^{2-s-t}}+ a^{2-s}+a^{3-2s-t}+{a^{2-2t}}+{a^{3-s-3t}}\right)\nonumber\\
&=-\frac{1}{C_m}\int_{\partial D_m}(S_{mm}^0)^{-1}(U^i(z_m))(s_m)ds_m-\frac{1}{C_m}\int_{\partial D_m}(S_{mm}^0)^{-1}[U^i(s_m)-U^i(z_m)]ds_m\nonumber\\
&\qquad+O\left({a}+{a^{2-s-t}}+ a^{2-s}+a^{3-2s-t}+{a^{2-2t}}+{a^{3-s-3t}}\right)\nonumber\\
&=-U^i(z_m)+O\left({a}+{a^{2-s-t}}+ a^{2-s}+a^{3-2s-t}+{a^{2-2t}}+{a^{3-s-3t}}\right).
\end{align}

For $m=1,\dots,M$, let $\bar{Q}_m$ be the potentials such that, 
  \begin{eqnarray}\label{cmp-4pr1p2}
\frac{\bar{Q}_m}{C_m}+\sum_{j\neq\,m}C_j\Phi_\kappa(z_m,z_j)\frac{\bar{Q}_j}{C_j}
&=&-U^i(z_m).
\end{eqnarray}

Now by taking the difference between \eqref{cmp-4pr1p1} and \eqref{cmp-4pr1p2}, for $m=1\dots,M$, provide us;
  \begin{align}\label{cmp-4pr1p3}
\frac{Q_m-\bar{Q}_m}{C_m}&+\sum_{j\neq\,m}C_j\Phi_\kappa(z_m,z_j)\frac{Q_j-\bar{Q}_j}{C_j}
\nonumber\\
&=O\left({a}+{a^{2-s-t}}+ a^{2-s}+a^{3-2s-t}+{a^{2-2t}}+{a^{3-s-3t}}\right).
\end{align}

Comparing this system of equations with \eqref{cmp-4pr5}, we obtain;

\begin{eqnarray}\label{fnlinvert-small-ac-2f-pr1}
 \sum_{m=1}^{M}|{Q}_m-\bar{Q}_m|
&\leq&2\left( 1-\frac{3\gamma}{5\pi\,d}\max\limits_{1\leq{m}\leq{M}} |C_m|\right)^{-1}M\max\limits_{1\leq m \leq M}|{C}_m|\left|O\left({a}+{a^{2-s-t}}+ a^{2-s}+a^{3-2s-t}+{a^{2-2t}}+{a^{3-s-3t}}\right)\right|\nonumber\\
&=&O\left(a^{2-s}+{a^{3-2s-t}}+ a^{3-2s}+a^{4-3s-t}+{a^{3-s-2t}}+{a^{4-2s-3t}}\right).
\end{eqnarray}

We can rewrite the far-field \eqref{x oustdie1 D_m} as,

\begin{eqnarray}\label{x oustdie1 D_m1}
U^\infty(\hat{x})&=&\sum_{m=1}^{M}[e^{-i\kappa\hat{x}\cdot z_{m}}Q_m+O({\kappa\,a^2\left(1+M\frac{a}{d}\right)})]\nonumber\\
&=&\sum_{m=1}^{M}e^{-i\kappa\hat{x}\cdot z_{m}}Q_m+O({a^{2-s}+a^{3-2s-t}})\nonumber\\
&=&\sum_{m=1}^{M}e^{-i\kappa\hat{x}\cdot z_{m}}\bar{Q}_m+\sum_{m=1}^{M}e^{-i\kappa\hat{x}\cdot z_{m}}(Q_m-\bar{Q}_m)+O({a^{2-s}+a^{3-2s-t}})\nonumber\\
&{=\atop\eqref{cmp-4pr1p3}}&\sum_{m=1}^{M}e^{-i\kappa\hat{x}\cdot z_{m}}\bar{Q}_m+O\left({a^{2-s}+a^{3-2s-t}+a^{3-2s}+ a^{4-3s-t}}+{a^{3-s-2t}}+{a^{4-2s-3t}}\right)\nonumber\\
&{=}&\sum_{m=1}^{M}e^{-i\kappa\hat{x}\cdot z_{m}}\bar{Q}_m+O\left({a^{2-s}+a^{3-2s-t}+ a^{4-3s-t}}+{a^{3-s-2t}}+{a^{4-2s-3t}}\right).\qquad
\end{eqnarray}
\bigskip

\begin{remark}\label{comments-for-invertibility condition-taylor-analyticity}
 
  Instead of using the analyticity of $\Phi_{\kappa}$ around $z_m$ in \eqref{KDKnrm1}, we use the Taylor series expansion till the order ${(l+1)}$, $l=0,1,2,\cdots$, 
  then following similar computations as in \eqref{Ind-smjinvsmjprsig}, we can prove that   $\|(S_{mm}^{0})^{-1}S_{mj}^{\prime}\sigma_j\|_{L^2(\partial D_m)}$, for $m\neq\,j$,
behaves as $O(\frac{a^2}{d_{mj}^2}+\frac{a^{2+l}}{d_{mj}^{3+l}})$ and then deduce that
$\|\bm{L}_0^{-1}\acute{\bm{K}}\|_{\mathcal{L}\left(\prod\limits_{m=1}^{M}L^{2}(\partial D_m),\prod\limits_{m=1}^{M}L^{2}(\partial D_m)\right)}$ behaves as
$O(a^{(2+l)-(4+l)\frac{s}{3}}+{a^{(2+l)-(3+l)t}}+a^{3-{s}}+{a^{(2-2t}})$, which eventually leads to the estimate $\|\sigma_m\|_{L^2(\partial D_m)}=O(1+M\frac{a}{d})$ under the following assumptions

 \begin{eqnarray}\label{condtions-ts-neumann-nopower}
   \frac{s}{3}\leq{t}\leq{\frac{2+l}{3+l}}&\mbox{ and } 0\leq{s}\leq\min\left\{3\frac{(2+l)}{(4+l)},{2-t}, {(3+l)-(4+l)t},\frac{3}{7+l}(3+l-t), 2, {3-3t},\frac{1}{2}(3-t) \right\}.\nonumber\\
 \end{eqnarray}
 
  \end{remark}

 \bibliographystyle{abbrv}

\end{document}